\def\ps@IEEEtitlepagestyle{%
	\def\@oddfoot{\mycopyrightnotice}%
	\def\@oddhead{\myieeepaper}\relax
}
\def\mycopyrightnotice{%
	\begin{minipage}{\textwidth}
		\centering \scriptsize%
			\textcopyright~2022 IEEE. Personal use of this material is permitted.
			Permission from IEEE must be obtained for all other uses, in any current or\\
			future media, including reprinting/republishing this material for advertising
			or promotional purposes, creating new collective works,\\
			for resale or redistribution to servers or lists, or reuse of any copyrighted
			component of this work in other works.
	\end{minipage}
}
\def\myieeepaper{%
	\hfil{\centering\scriptsize%
	This paper has been published in \emph{IEEE Transactions on Information Theory},
	vol.~68, no.~7, pp.~4491--4517, July~2022
	[DOI: \href{https://doi.org/10.1109/TIT.2022.3151061}{10.1109/TIT.2022.3151061}].}%
	\@IEEEheaderstyle\leftmark\hfil\thepage%
}
\newtheorem{thm}{Theorem}
\newtheorem{dfn}{Definition}
\newtheorem{cor}{Corollary}
\newtheorem*{remark}{Remark}
\algnewcommand\algorithmicswitch{\textbf{switch}}
\algnewcommand\algorithmiccase{\textbf{case}}
\renewcommand{\i}{\ensuremath{\mathrm{i}}}
\renewcommand{\j}{\ensuremath{\mathrm{j}}}
\renewcommand{\k}{\ensuremath{\mathrm{k}}}
\newcommand{\Z}{\ensuremath{\mathbb{Z}}}
\newcommand{\R}{\ensuremath{\mathbb{R}}}
\newcommand{\C}{\ensuremath{\mathbb{C}}}
\newcommand{\Ha}{\ensuremath{\mathbb{H}}}
\newcommand{\G}{\ensuremath{\mathcal{G}}}
\newcommand{\E}{\ensuremath{\mathcal{E}}}
\renewcommand{\L}{\ensuremath{\mathcal{L}}}
\newcommand{\Hu}{\ensuremath{\mathcal{H}}}
\renewcommand{\Re}[1]{\ensuremath{\mathrm{Re}\{#1\}}}
\renewcommand{\Im}[1]{\ensuremath{\mathrm{Im}\{#1\}}}
\newcommand{\T}{\ensuremath{\mathsf{T}}}
\renewcommand{\H}{\ensuremath{\mathsf{H}}}
\newcommand{\Q}{\ensuremath{\mathrm{Q}}}
\renewcommand{\mod}{\ensuremath{\mathrm{mod}}}
\DeclareMathOperator*{\argmin}{argmin}
\newcommand{\ve}[1]{{\mathchoice{\mbox{\boldmath$\displaystyle #1$}}%
		{\mbox{\boldmath$\textstyle #1$}}%
		{\mbox{\boldmath$\scriptstyle #1$}}%
		{\mbox{\boldmath$\scriptscriptstyle #1$}}}}
\def\ua{{u^{(1)}}}
\def\ub{{u^{(2)}}}
\def\uc{{u^{(3)}}}
\def\ud{{u^{(4)}}}
\def\va{{v^{(1)}}}
\def\vb{{v^{(2)}}}
\def\vc{{v^{(3)}}}
\def\vd{{v^{(4)}}}
\def\ma{{\ve{M}^{(1)}}}
\def\mb{{\ve{M}^{(2)}}}
\def\mc{{\ve{M}^{(3)}}}
\def\md{{\ve{M}^{(4)}}}
\def\ma{{\ve{M}^{(1)}}}
\def\mb{{\ve{M}^{(2)}}}
\def\mc{{\ve{M}^{(3)}}}
\def\md{{\ve{M}^{(4)}}}
\newcommand{\Lat}{\ensuremath{\ve{\Lambda}}}
\newcommand{\I}{\ensuremath{\ve{\mathbb{I}}}}
\newcommand{\vol}{\ensuremath{\mathrm{vol}}}
\newcommand{\rank}{\ensuremath{\mathrm{rank}}}
\newcommand{\dB}{\ensuremath{\mathrm{dB}}}
\newcommand{\corr}{\ensuremath{\mathrel{\widehat{=}}}} 
\newcommand{\commentcolor}{green!50!black}
\begin{document}

\title{\vspace*{-1mm}
	Algorithms and Bounds\\
	for Complex and Quaternionic Lattices\\
	With Application to MIMO Transmission}%
\author{Sebastian~Stern,~\IEEEmembership{Member,~IEEE,}
        Cong~Ling,~\IEEEmembership{Member,~IEEE,}
        and~Robert~F.H.~Fischer,~\IEEEmembership{Senior Member,~IEEE}%
\thanks{Sebastian Stern and Robert F.H.~Fischer are with the Institute of Communications
		Engineering, Ulm University, 89081 Ulm, Germany
		(e-mail: sebastian.stern@uni-ulm.de; robert.fischer@uni-ulm.de).
}%
\thanks{Cong Ling is with the Department of Electrical and Electronic Engineering,
		Imperial College London, London SW7 2AZ, United Kingdom
		(e-mail: cling@ieee.org).
}%
\thanks{
	The work of Sebastian Stern and Robert F.H.\ Fischer has been supported in part by
	the Deutsche Forschungsgemeinschaft (DFG) under grant \mbox{Fi 982/13-1} and the work
	of Cong Ling by the Engineering and Physical Sciences Research Council (EPSRC) under
	grant EP/S021043/1.
}%
\thanks{Parts of this work have been discussed in the conference paper
	\cite[DOI: \href{https://doi.org/10.1109/ICT.2018.8464898}{10.1109/ICT.2018.8464898}]{Stern:18}
	presented at the 25\textsuperscript{th} International Conference on Telecommunications,
	Saint Malo, France, 2018, and in Sebastian Stern's dissertation \cite[DOI: \href{https://doi.org/10.18725/OPARU-32407}{10.18725/OPARU-32407}]{Stern:19}.
}%
\vspace*{-4.75mm}%
}

\maketitle

\begin{abstract}
Lattices are a popular field of study in mathematical research,
but also in more practical areas like cryptology or
multiple-input/multiple-output (MIMO) transmission. In
mathematical theory, most often lattices over real numbers
are considered. However, in communications, complex-valued
processing is usually of interest. Besides, by the use of
dual-polarized transmission as well as by the combination of two time
slots or frequencies, four-dimensional (quaternion-valued) approaches
become more and more important. Hence, to account for this fact,
well-known lattice algorithms and related concepts are generalized in
this work. To this end, a brief review of complex arithmetic, including
the sets of Gaussian and Eisenstein integers, and an introduction to
quaternion-valued numbers, including the sets of Lipschitz and Hurwitz
integers, are given. On that basis, generalized variants of two important
algorithms are derived: first, of the polynomial-time LLL algorithm,
resulting in a reduced basis of a lattice by performing a special variant
of the Euclidean algorithm defined for matrices, and second, of an
algorithm to calculate the successive minima---the norms of the shortest
independent vectors of a lattice---and its related lattice points.
Generalized bounds for the quality of the particular results are
established and the asymptotic complexities of the algorithms are
assessed. These findings are extensively compared to conventional
real-valued processing. It is shown that the generalized approaches
outperform their real-valued counterparts in complexity and/or quality
aspects. Moreover, the application of the generalized algorithms to MIMO
communications is studied, particularly in the field of
lattice-reduction-aided and integer-forcing equalization.
\end{abstract}
\begin{IEEEkeywords}
Lattices, lattice reduction, LLL algorithm, successive minima, Gaussian integers,
	Eisenstein integers, quaternions, Lipschitz integers, Hurwitz integers,
	MIMO, lattice-reduction-aided equalization, integer-forcing equalization.
\end{IEEEkeywords}


\section{Introduction}

\noindent
\IEEEPARstart{T}{he} concept of lattices has been studied for almost two
centuries. Initial work was, e.g., published by Hermite~\cite{Hermite:50},
by Korkine and Zolotareff \cite{Korkine:73}, and by
Minkowski \cite{Minkowski:91}. Nevertheless, lattices remained a topic of
theoretical mathematical studies for quite a long time.

This situation dramatically changed with the advent of the digital
revolution in the late 20\textsuperscript{th} century. Suddenly,
enough computational power was available to implement and run
particular algorithms for lattice problems. The most prominent one was
proposed by Lenstra, Lenstra and Lov{\'a}sz \cite{Lenstra:82}. The LLL
algorithm calculates a \emph{reduced basis} of a lattice, i.e., a more
suited mathematical description of the lattice w.r.t.\ some quality
criteria, with only polynomial-time complexity. More powerful strategies
for lattice basis reduction were addressed in the sequel, e.g., the
concepts of Hermite-Korkine-Zolotareff (HKZ) reduction
\cite{Kannan:83,Schnorr:94,Zhang:12} or Minkowski reduction
\cite{Helfrich:85,Zhang:12}, that, however, demand an
exponentially-growing computational complexity for calculating the reduced
basis. All above-mentioned algorithms operate over real numbers, i.e.,
the lattices are defined over the integer ring~$\Z$.

Apart from the application of lattices in cryptological
schemes~\cite{Micciancio:09}, lattices gained popularity in the field of
multiple-input/multiple-output (MIMO) communications \cite{Tse:05}.
In particular, maximum-likelihood (ML) detection was enabled by the
sphere decoder \cite{Agrell:02}, however, with the burden of
a large computational complexity. An alternative, low-complexity
strategy was given with the concept of lattice-reduction-aided (LRA) 
equalization \cite{Yao:02,Windpassinger:03,Windpassinger:04,
	Wuebben:04,Wuebben:11,Stern:19,Fischer:19}.
Here, the channel equalization is performed in a \emph{more suited basis}
which is obtained by one of the above-mentioned lattice-basis-reduction
algorithms---most often, by the polynomial-time LLL algorithm.
Since, for block-fading channels, this calculation has only to be
done once in the beginning, the computational complexity is
dramatically decreased when compared with ML detection.
Besides, in comparison to straight-forward linear equalization of
the MIMO channel, the noise enhancement can significantly be lowered,
even resulting in the optimum diversity behavior as shown
in~\cite{Taherzadeh:07}.

A few years ago, the concept of integer-forcing (IF) linear (MIMO)
equalization has been introduced \cite{Zhan:14}. The LRA and IF
approaches share the philosophy of performing the channel equalization
in a more suited representation of the channel matrix such that
the noise enhancement inherently caused by equalization is lowered.
However, it was found out that the restriction to lattice basis
reduction---described by a unimodular integer transformation
matrix---is actually not required. Instead, it is sufficient that
this integer matrix has full rank---the lattice-basis-reduction
problem is weakened to the so-called \emph{successive-minima problem}.
For a more detailed insight into the topic, see, e.g., \cite{Fischer:19}.
These successive minima are also quite important for the derivation of
bounds for lattice-basis-reduction schemes, as they serve as lower bounds
for the norms of the basis vectors.

In MIMO transmission, the channel matrix is usually assumed to be
complex-valued due to representation in the equivalent complex-baseband
domain \cite{Fischer:02,vanTrees:04}. Since the algorithms available for
lattice-basis reduction have initially been real-valued, equalization was
performed with an equivalent real-valued representation of the complex
channel, resulting in doubled dimensions. Nevertheless, the concept can be
extended to the complex case. Then, the lattice is not considered over
$\Z$ any more, but over the complex integers---the so-called
\emph{Gaussian integers} $\G$ \cite{Huber:94a,Conway:99}. In particular,
the high-complexity HKZ and Minkowski reduction algorithms were adapted in
order to run over Gaussian integers \cite{Jiang:13,Ding:17}---mainly due to
the reason that the complex-valued algorithms lower the computational
complexity in comparison to their real-valued counterparts. Furthermore,
it was found out that the use of another complex integer ring may be
beneficial \cite{Tunali:15,Stern:16}---of the \emph{Eisenstein integers}~$\E$
\cite{Huber:94b,Conway:99}, forming the hexagonal lattice over the complex
numbers. To be precise, signal constellations that form subsets of the
Eisenstein integers may enable a \emph{packing gain} \cite{Stern:19}, i.e.,
they cover less space in the complex plane than those based on Gaussian
integers (quadrature-amplitude modulation (QAM)) with the same cardinality
while the minimum distance between the signal points stays the same. Hence,
the constellation's variance and the related required transmit
\mbox{power can be lowered}.

\markboth{Stern \MakeLowercase{\textit{et al.}}: Algorithms and Bounds for
	Complex and Quaternionic Lattices}%
{Stern \MakeLowercase{\textit{et al.}}: Algorithms and Bounds for Complex
	and	Quaternionic Lattices}%

Moreover, in recent years, four-dimensional signaling techniques have
become more and more popular. In particular, in the field of optical
communications, it is already quite common to employ both polarization
planes of electromagnetic waves \cite{Karlsson:16}, resulting in a
\emph{dual-polarized} transmission. In wireless (MIMO) communications,
\emph{dual-polarized antennas} have been designed, e.g.,
in \cite{Cui:14,Li:16,Ghaedi:20}. Besides, diversity schemes that are
suited to combine the transmit symbols of two different time steps or
frequencies are known for some time, e.g., the famous Alamouti
scheme \cite{Alamouti:98}. Given such a four-dimensional signal space,
its representation over the set of \emph{quaternion numbers}
\cite{Conway:99,Conway:03} is quite obvious \cite{Isaeva:95,Wysocki:06}.
Thereby, it has to be taken into account that quaternion-valued (scalar)
multiplication is not commutative any more, i.e., the quaternionic numbers
do not form a field but only a \emph{skew field}. Related integer rings
are given by the (non-Euclidean) set of \emph{Lipschitz integers}
$\L$---forming the quaternionic integer lattice (integers in each
component)---and by the \emph{Hurwitz integers} $\Hu$---enabling the
Euclidean property by additionally including all quaternions with only 
\emph{half-integer} components \cite{Conway:99,Conway:03}.

Concerning the low-complexity (polynomial-time) LLL reduction
algorithm \cite{Lenstra:82}, it is known for quite some time that it
can be interpreted to form some kind of \emph{Euclidean algorithm}
for matrices \cite{Napias:96}. Consequently, it can generally be applied
w.r.t.\ any integer ring with Euclidean property, i.e., over rings for
which a Euclidean algorithm can uniquely be defined. In \cite{Napias:96},
the possibility to apply LLL reduction over $\G$, $\E$, and $\Hu$
has briefly been discussed, however, only providing vague definitions
of bases that can be called reduced over these particular rings. In the
sequel, these criteria have been shown to be too unspecific \cite{Gan:09}.
Hence, in \cite{Gan:09}, a complex-valued (Gaussian-integer) adaption of the
LLL reduction algorithm has been proposed. To this end, the real-valued
quantization operation inherently performed in the initial LLL algorithm
has been replaced by a respective complex one, resulting in a stronger
reduction criterion compared to \cite{Napias:96}. In \cite{Stern:15}, an
algorithm operating over the set of Eisenstein integers has been employed
which is based on a further adaption of the quantization operation. Recently,
in \cite{Lyu:20}, LLL reduction over (other) imaginary quadratic
(i.e., complex-valued) fields has been studied, again based on the adjustment
of the corresponding quantization operation in the reduction algorithm. However,
all those publications cover special cases rather than providing generalized
criteria and algorithms. Moreover, besides the brief consideration
in \cite{Napias:96}, hardly anything is known about the possibility to define
an LLL reduction that operates over quaternion numbers.

Concerning algorithms that solve the successive minima problem, i.e.,
algorithms that calculate the matrices which are optimal w.r.t.\ noise
enhancement for the LRA/IF receiver concepts, the situation is similar:
Given the low-dimensional case, efficient algorithms have been proposed a few
years ago \cite{Ding:15,Fischer:16,Wen:19}. They are, though, either completely
restricted to the real-valued case or cover the complex-valued (Gaussian-integer)
case by employing its equivalent real-valued representation. Besides, in
\cite{Stern:16}, initial results for lattices over the Eisenstein integers have been
provided that have been obtained by a particular adaption of the sphere-decoding
algorithm \cite{Agrell:02}. A bound on the first successive minimum for the case
of imaginary quadratic fields has been considered in~\cite{Lyu:20}. Beyond that,
the definition and assessment of a \emph{generalized} successive-minima
algorithm---also w.r.t.\ quaternionic lattices---is still an open point.

Hence, apart from a review of complex lattices and the introduction to
quaternionic arithmetic, the aims and contributions of this work can be divided
into three main aspects. The first point is the generalization of the LLL
reduction approach based on its interpretation as Euclidean algorithm for matrices.
Given the fact that the related operations are actually modulo reductions defined
over Euclidean rings, generalized variants of the LLL algorithm are derived.
Theoretical analysis of the related properties and parameters is provided for
all real-, complex-, and quaternion-valued integer rings mentioned above. Beyond
that, the application of LLL reduction given a \emph{non-Euclidean ring}---in
particular the Lipschitz integers---is discussed. Moreover, the list-based
successive-minima algorithm from \cite{Fischer:16} is generalized in order to
determine the successive minima for all above-mentioned complex and quaternionic
(Euclidean/non-Euclidean) integer rings. All algorithms are provided in such a
way that the non-commutative behavior of quaternionic multiplication is adequately
taken into account.

The second main aspect deals with the generalization of quality bounds which have
originally been derived for real-valued LLL reduction \cite{Lenstra:82,Nguyen:09}
and/or the real-valued successive-minima problem \cite{Hermite:50,Zhang:12,Nguyen:09}.
This particularly concerns the norms of the basis vectors (and the respective
successive minima), as well as the orthogonality defect of a lattice basis.
It is shown that the quaternion-valued and/or complex-valued approaches may
outperform their real-valued equivalents---especially if lattices over the
Eisenstein or the Hurwitz integers are considered.  Moreover, the asymptotic
computational complexities of the different approaches are established. Concerning
the (polynomial-time) LLL lattice-basis-reduction approach, these derivations
reveal that the complexity can considerably be decreased if the respective
complex- or quaternion-valued variants are employed. By providing additional
results from numerical simulations, it is shown that the quality bounds and
complexity evaluations reflect the behavior that can be observed when
i.i.d.\ Gaussian stochastic models are applied in practice.

Finally, the application of the derived approaches in MIMO communications,
particularly in the case of (multi-user) MIMO uplink transmission \cite{Tse:05}
based on the concepts of LRA and IF equalization, is extensively studied.
This includes a discussion on how the quaternion-valued concept can be employed
in dual-polarized transmission, as well as  in the Alamouti-like combination of
two time steps or frequencies. Respective system models are derived and evaluated
by means of numerical simulations for particular transmission scenarios. These
results show that the theoretical derivations and bounds also reflect the behavior
in practical MIMO schemes. Beyond that, the combination of the proposed strategies
with advanced complex and quaternionic signal constellations---also
w.r.t.\ soft-decision decoding approaches---is briefly discussed.

The paper is structured as follows: In Sec.~\ref{sec:extensions}, complex
integer rings are briefly reviewed and an introduction to quaternions including
the sets of Lipschitz and Hurwitz integers is given. In Sec.~\ref{sec:algorithms},
the LLL algorithm as well a list-based algorithm for the determination of the
successive minima of a lattice are generalized. Related quality bounds and the
assessment of the computational complexities are provided in Sec.~\ref{sec:bounds}.
In Sec.~\ref{sec:mimo}, the particular application of the generalized algorithms
is regarded in the field of MIMO communications. The paper is closed by a brief
summary and an outlook in Sec.~\ref{sec:sum}.

\section{Two- and Four-Dimensional Extensions of the Real Numbers
	and Related Lattices}	\label{sec:extensions}

\noindent In this section, the sets of \emph{complex numbers} and
\emph{quaternions} that form a two- and four-dimensional extension
of the real numbers, respectively, are reviewed. The related algebras
are presented and important subsets, particularly integer rings, 
are discussed. On that basis, generalized \emph{lattices} are defined.

\subsection{Complex Numbers and Quaternions}

First, the extension of the real numbers $\R$ to complex numbers and
quaternions, respectively, is reviewed. For a deeper insight into the
topic, see \cite{Neumann:94,Conway:99,Conway:03}.

\subsubsection{Complex Numbers}

The set of complex numbers%
\begin{equation}	\label{eq:complex}
	\C=\{c = \underbrace{c^{(1)}}_{\Re{c}} +
		\underbrace{c^{(2)}}_{\Im{c}} \i 
		\mid c^{(1)},c^{(2)} \in \R \}
\end{equation}
forms a \emph{field extension} of the real numbers. It is obtained by
extending the first, real component $c^{(1)}$ (\emph{real part} $\Re{c}$)
by a second component $c^{(2)}$ which is multiplied by the
\emph{imaginary unit} $\i=\sqrt{-1}$ (\emph{imaginary part} $\Im{c}$).

The complex conjugate of $c\in\C$ reads
$c^{*} = c^{(1)} - c^{(2)}\,\i$
and the absolute value of $c$ is given as
$|c|=\sqrt{(c^{(1)})^2+ (c^{(2)})^2}$.
Scalar additions (and subtractions) over complex numbers are
performed individually per component. The multiplication of
two complex numbers $u,v \in \C$ can be expressed as%
\begin{equation}	\label{eq:compl_mul}
	\begin{aligned}
		w&=(u^{(1)}+u^{(2)}\,\i)\cdot (v^{(1)}+v^{(2)}\,\i)\\
	&= \underbrace{(u^{(1)} v^{(1)} - u^{(2)}v^{(2)})}_{w^{(1)}}+
	\underbrace{(u^{(1)} v^{(2)} + u^{(2)}v^{(1)})}_{w^{(2)}}\,\i \; .
	\end{aligned} 
\end{equation}
Hence, four multiplications and two additions/subtractions are
required. Following the concept of the Karatsuba algorithm
\cite{Karatsuba:62}, this multiplication can alternatively be realized
by three multiplications and five additions/subtractions.
The scalar division of $u$ by $v$ is performed by the scalar
multiplication $u\cdot v^{-1}$ with the element
$v^{-1}=v^{*}/|v|^2=v^{*}\cdot(v^{*}v)^{-1}$.

Based on~\eqref{eq:compl_mul}, an \emph{equivalent real-valued}
representation of complex matrices can be given. An $N \times K$
matrix $\ve{C}\in\C^{N \times K}$ may be represented via its
equivalent $2N\times 2K$ real matrix%
\begin{equation}	\label{eq:equiv_real_compl}
	\ve{C}_\mathsf{r} =
	\begin{bmatrix} 
		\ve{C}^{(1)} & -\ve{C}^{(2)}\\
		\ve{C}^{(2)} & \phantom{-}\ve{C}^{(1)}
	\end{bmatrix}
	\in \R^{2N \times 2K} \; ,
\end{equation}
where $\ve{C}^{(1)}$ and $\ve{C}^{(2)}$ denote the real and
imaginary part of $\ve{C}$, respectively. Hence, the dimensions are
increased by a factor of $D_\mathsf{r}=2$. This variable also
represents the number of independent real-valued components
in~\eqref{eq:complex}. If $N \geq K$,%
\begin{equation}	\label{eq:det_real_compl}
	{\det}(\ve{C}^\H\ve{C}) =
		\sqrt{\det(\ve{C}_\mathsf{r}^\T\ve{C}_\mathsf{r}^{})}
\end{equation}
is valid \cite{Neumann:94}, where $\ve{C}^\H$ denotes the Hermitian of
$\ve{C}$, i.e., the conjugated transpose.
Utilizing~\eqref{eq:equiv_real_compl}, the matrix addition (and subtraction)
$\ve{S}=\ve{U}+\ve{V}$, where $\ve{U}$ and $\ve{V}$ denote complex matrices,
as well as the related complex matrix multiplication (and division)
$\ve{W}=\ve{U}\cdot\ve{V}$, can isomorphically be represented
by the real-valued addition
$\ve{S}_\mathsf{r}=\ve{U}_\mathsf{r}+\ve{V}_\mathsf{r}$
and the real-valued multiplication
$\ve{W}_\mathsf{r}=\ve{U}_\mathsf{r}\cdot\ve{V}_\mathsf{r}$,
respectively.

\subsubsection{Quaternions}

The set of quaternions\footnote{%
	In honor of Sir William Rowan Hamilton,
	the set of quaternions is denoted by $\Ha$.%
	}
\cite{Conway:99,Conway:03}%
\begin{equation}	\label{eq:quat}
	\begin{aligned}
		\Ha &= \{q = \underbrace{q^{\{1\}}}_{q^{(1)}+q^{(2)}\i} +
			\underbrace{q^{\{2\}}}_{q^{(3)}+q^{(4)}\i}
			\j\mid q^{\{1\}},q^{\{2\}}\in\C\}\\
			& = \{q = q^{(1)} + q^{(2)}\,\i + q^{(3)}\,\j+q^{(4)}\,\k \mid
			\\ & \qquad\qquad 
			q^{(1)},q^{(2)},q^{(3)},q^{(4)}\in\R \}
	\end{aligned}
\end{equation}
extends the set of complex numbers by an \emph{additional} complex-valued
component which is multiplied by the imaginary unit~$\j$. Hence,
\emph{four real-valued} components are present, where the real part of a
quaternion reads $\Re{q}=q^{(1)}$ and its imaginary part is represented by
the 3-tuple $\Im{q}=(q^{(2)},q^{(3)},q^{(4)})$. The related imaginary
\emph{quaternion units} are given as $\i$, $\j$, and $\k=\i\,\j$. The
relations between these units are described by the \emph{Hamilton equations}
\cite{Conway:03} which are stated in Table~\ref{tab:hamilton}.

\begin{table}
	\caption{\label{tab:hamilton} Hamilton Equations \cite{Conway:03}
		for the Product $u \cdot v$, where $u$ and $v$ are
		Quaternion Units, i.e., $u,v\in\{1,\i,\j,\k\}$.}%
	\centerline{
	\begin{tabular}{c||c|c|c|c}
		\hline
	\diagbox{$u$}{$v$} & $1$ & $\i$ & $\j$ & $\k$ \\
	\hline\hline
		$1$ & $+1$	& $+\i$ & $+\j$ & $+\k$	\\
		\hline
		$\i$ & $+\i$ & $-1$ & $+\k$ & $-\j$ \\
		\hline
		$\j$ & $+\j$ & $-\k$ & $-1$ & $+\i$ \\
		\hline
		$\k$ & $+\k$ & $+\j$ & $-\i$ & $-1$ \\
		\hline
	\end{tabular}
	}%
\end{table}

From Table~\ref{tab:hamilton}, it becomes apparent that the multiplication of
two quaternions is---in general---not commutative. Consequently, the quaternions
do not form a field but only a \emph{skew field}, i.e., they fulfill all
conditions which are required to form a field---except for the commutativity of
the multiplication.

By analogy with complex numbers, the conjugate of a quaternion $q\in\Ha$ is given
as $q^{*}=q^{(1)}-q^{(2)}\,\i-q^{(3)}\,\j-q^{(4)}\,\k$. The absolute value of $q$
is uniquely defined by
$|q|=\sqrt{q q^{*}}=\sqrt{q^{*} q}=
	\sqrt{(q^{(1)})^2+(q^{(2)})^2+(q^{(3)})^2+(q^{(4)})^2}$.
Moreover, additions (and subtractions) are performed individually per component.
The (non-commutative) multiplication of two quaternions $u,v\in\Ha$ is expressed
as \cite{Conway:03}%
\begin{equation}	\label{eq:quat_mul}
\begin{aligned}
	u \cdot v =\;& (\ua \va - \ub \vb - \uc \vc - \ud \vd) \\
	+\;&  (\ua \vb + \ub \va + \uc \vd - \ud \vc) \,\i \\
	+\;& (\ua \vc - \ub \vd + \uc \va + \ud \vb) \,\j \\
	+\;& (\ua \vd + \ub \vc - \uc \vb + \ud \va) \,\k \; ,
\end{aligned}
\end{equation}
i.e., 16 multiplications and twelve additions/subtractions are required.
Alternatively, this multiplication can be realized using eight multiplications and 28
additions/subtractions \cite{Cariow:15}. The division can be implemented via the
multiplication with the inverse element $v^{-1}=v^{*}\cdot(v^* v)^{-1}$, where this
choice ensures that $v v^{-1}=1$ (right inverse) and $v^{-1} v=1$ (left inverse).

Similar to the real-valued representation of complex matrices, the quaternion-valued
arithmetic defined in \eqref{eq:quat_mul} can be realized by the
\emph{equivalent complex- or real-valued matrix representation}. In particular, an
$N \times K$ matrix $\ve{M}\in\Ha^{N \times K}$ can be represented as $2N \times 2K$
complex-valued matrix\footnote{%
	\label{foot:repr}%
	The complex- and real-valued representations
	\eqref{eq:equiv_compl_quat} and~\eqref{eq:equiv_real_quat},
	respectively, are not unique. There exist several
	representations that differ in the positions
	of the minus signs within the matrices
	$\ve{M}_\mathsf{c}$ and $\ve{M}_\mathsf{r}$,
	see, e.g., \cite{Aslaksen:96,Conway:99,Mueller:12}.
	However, all these representations isomorphically
	express the quaternion-valued multiplication (division)
	according to~\eqref{eq:quat_mul}.%
}%
\begin{equation}		\label{eq:equiv_compl_quat}
		\begin{aligned}
	\ve{M}_\mathsf{c}&=
	\begin{bmatrix}
		\ve{M}^{\{1\}}  & -\ve{M}^{\{2\}} \\
		(\ve{M}^{\{2\}})^{*} & \phantom{-}(\ve{M}^{\{1\}})^{*}\\
	\end{bmatrix}\\
	&= 
	\begin{bmatrix}
		\ma + \mb  \i  & -\mc - \md  \i  \\
		\mc - \md  \i  & \phantom{-}\ma - \mb  \i \\
	\end{bmatrix} \; ,
	\end{aligned}
\end{equation}
where \eqref{eq:equiv_compl_quat} directly corresponds
to~\eqref{eq:equiv_real_compl}, i.e., the dimensions are increased by a factor
of $D_\mathsf{c}=2$, with the only difference that an additional conjugation has to
be performed in the second row. In~\eqref{eq:equiv_real_compl}, this step is not
required since only real numbers are present. By plugging
\eqref{eq:equiv_compl_quat} into \eqref{eq:equiv_real_compl}, i.e., by forming
the real-valued representation of the complex matrix $\ve{M}_\mathrm{c}$, one would
obtain one particular real-valued $4N \times 4K$ representation of the
quaternion-valued matrix $\ve{M}$. However, for the subsequent system model, it is
more convenient to form a real-valued representation according to\footref{foot:repr}%
\begin{equation}	\label{eq:equiv_real_quat}
	\ve{M}_\mathsf{r}
	= 
	\begin{bmatrix}
		\ma & -\mb & -\mc & -\md\\
		\mb & \phantom{-}\ma & -\md & \phantom{-}\mc\\
		\mc & \phantom{-}\md & \phantom{-}\ma & -\mb\\
		\md& -\mc & \phantom{-}\mb & \phantom{-}\ma\\
	\end{bmatrix} \; ,
\end{equation}
in which the four components are directly stacked in the left-most column.
The dimensions are increased by a factor of $D_\mathsf{r}=4$, representing
the four independent real-valued components in~\eqref{eq:quat}.
Here, if $N \geq K$, we have\footnote{%
	Due to the skew-field property, quaternion-valued determinants
	do not necessarily posses all properties which are known from
	real or complex ones. However, for Hermitian matrices (here,
	the Gramian $\ve{M}^\H\ve{M}$), they can, to a large extent, be
	employed just like real or complex determinants,
	cf.~\cite{Aslaksen:96}.%
}%
\begin{equation}	\label{eq:det_real_quat}
	{\det}(\ve{M}^\H\ve{M}) =
	\sqrt{\det(\ve{M}_\mathsf{c}^\H\ve{M}_\mathsf{c}^{})} = 
	\sqrt[4]{{\det}(\ve{M}_\mathsf{r}^\T\ve{M}_\mathsf{r}^{})}
	\;,
\end{equation}
$\ve{M}^\H$ denoting the Hermitian (conjugated transpose) of $\ve{M}$.

\subsection{Integer Rings}

The set of \emph{integers} $\Z = \{\dots,-2,-1,0,1,2,\dots\}$ forms a subset of
the real numbers $\R$ and additionally a \emph{Euclidean ring}. Hence, for
$u,\sigma,v,\rho\in\Z$ and $v \neq 0$, a division $u/v$ with
\emph{small remainder} according to\footnote{%
	We assume that negative remainders may occur, i.e.,
	the modulo operation defined in~\eqref{eq:mod_Z}
	is assumed to be symmetric w.r.t.\ the origin.%
}%
\begin{equation}	\label{eq:division}
u = \sigma \cdot v +\rho
\end{equation}
is possible, where the term small remainder implicates that $|\rho| < |v|$
is valid \cite[Def.~2.5]{Weintraub:08}. Consequently, the \emph{Euclidean algorithm}
\cite{Cormen:09} can be used to calculate the \emph{greatest common divisor} (gcd)
of two numbers $u,v\in\Z$. The minimum squared distance between the elements of $\Z$
reads $d_\mathrm{min,\Z}^2=1$.

A real number $r\in\R$ is quantized to its nearest integer via%
\begin{equation} \label{eq:quant_Z}
	\Q_{\Z}\{r\} = \lfloor r\rceil \in\Z\; ,
\end{equation}
i.e., by a simple rounding operation $\lfloor\cdot\rceil$, where w.l.o.g.\ we assume
that ties are resolved towards~$+\infty$. The maximum squared quantization error
occurs for all half-integer values ($\Z+\frac{1}{2}$) and is given as
$\epsilon^2_\Z=|\Q_{\Z}\{\frac{1}{2}\} - \frac{1}{2}|^2=\frac{1}{4}$.
The related (non-squared) error corresponds with the maximum of the remainder
in~\eqref{eq:division}, i.e., we have $|\rho|\leq\frac{1}{2} < |v|$,
since $|v|\geq 1~\forall v\in \Z \setminus \{0\}$. Based on the
quantization~\eqref{eq:quant_Z}, the \emph{modulo function}%
\begin{equation}	\label{eq:mod_Z}
	\mod_\Z\{r\}=r - \Q_{\Z}\{r\}
\end{equation}
yields a congruent point $r+\lambda$, $\lambda\in\Z$, located within
$[-\frac{1}{2},\frac{1}{2})$, forming the Voronoi cell of $\Z$ w.r.t.\ the
origin \cite{Conway:99,Fischer:02}.

\subsubsection{Complex-Valued Integer Rings}

Integers in the complex plane are represented by the
\emph{Gaussian integers} \cite{Huber:94a,Bossert:99,Conway:99}%
\begin{equation}
	\G = \{c = {c^{(1)}} + {c^{(2)}} \i \mid c^{(1)},c^{(2)} \in \Z \}
	= \Z + \Z\,\i \; .
\end{equation}
They are illustrated in Fig.~\ref{fig:GEint} (left). The minimum squared distance
between the elements reads $d_\mathrm{min,\G}^2=1$. The quantization of a complex
number $c\in\C$ to $\G$ is performed as%
\begin{equation} \label{eq:quant_G}
	\Q_{\G}\{c\} = \lfloor c^{(1)}\rceil + \lfloor c^{(2)}\rceil\,\i\in\G\; ,
\end{equation}
where the maximum squared quantization error sums up to
$\epsilon^2_\G=|\Q_{\G}\{\frac{1}{2}+\frac{1}{2}\i\} - 
	(\frac{1}{2}+\frac{1}{2}\i)|^2=\frac{1}{2}$.
The modulo operation $\mod_\G\{c\}=c-\Q_\G\{c\}$ reduces a complex number~$c$ to the
Voronoi cell of $\G$ (w.r.t.\ the origin), which forms a square in the complex plane
where all values are located within the range $[-\frac{1}{2},\frac{1}{2})$ per component.

\begin{figure}[t]
	\centerline{%
		\includegraphics{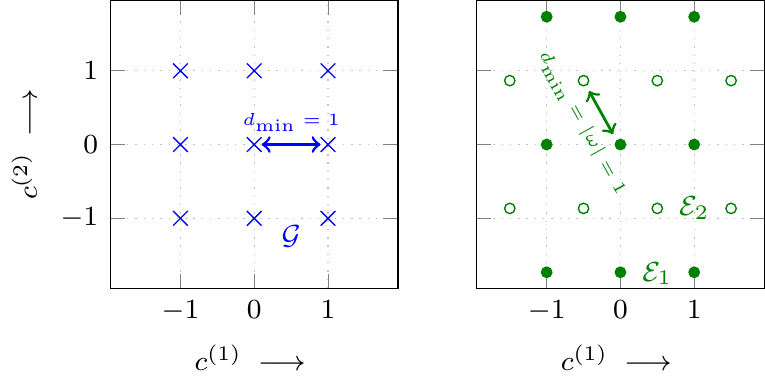}%
	}%
	\caption{\label{fig:GEint}%
		Illustration of the Gaussian integers $\G$ (left) and the Eisenstein integers $\E$
		(right). For the Eisenstein integers, the two subsets $\E_1$ (filled circles) and
		$\E_2$ (hollow circles) are shown.}%
\end{figure}

The \emph{Eisenstein integers} \cite{Huber:94b,Conway:99}%
\begin{equation}
	\E = \{c = {c^{(1)}} + {c^{(2)}} \omega \mid c^{(1)},c^{(2)} \in \Z \}
	= \Z + \Z\,\omega \; ,
\end{equation}
with the Eisenstein unit $\omega=\mathrm{e}^{\frac{2\pi}{3}\i}$ (third root of unity),
represent the hexagonal numbers ($\ve{A}_2$ lattice \cite{Conway:99}) in the complex
plane, cf.\ Fig.~\ref{fig:GEint} (right). Without a decrease in minimum distance
($d_\mathrm{min,\E}^2=1$), the elements are more densely packed (particularly, the
densest packing in two-dimensions is achieved \cite{Conway:99,Fischer:02}).
The quantization is realized as \cite{Conway:82,Conway:99}
\begin{align}	\label{eq:quant_E}
	\Q_{\E}\{c\} &= \argmin_{\Q_{\E_1}\{c\},\Q_{\E_2}\{c\}}
		\{|c-\Q_{\E_1}\{c\}|,|c -\Q_{\E_2}\{c\}|\} \; ,\\
	\Q_{\E_1}\{c\} &= \Q_{\Z}\left\{c^{(1)}\right\} +
		\sqrt{3}\,\Q_{\Z}\left\{\frac{c^{(2)}}{\sqrt{3}}\right\} \i\; ,\\
	\Q_{\E_2}\{c\} &= \Q_{\Z}\left\{c^{(1)}-\frac{1}{2}\right\} +\frac{1}{2}
		+ \nonumber	\\&
	\left(\sqrt{3}\,\Q_{\Z}\left\{\frac{c^{(2)}-
		\frac{\sqrt{3}}{2}}{\sqrt{3}}\right\}+\frac{\sqrt{3}}{2}\right) \i \; ,
\end{align}
i.e., by performing a quantization to the subsets $\E_1$ (filled circles in
Fig.~\ref{fig:GEint} (right)) and $\E_2$ (hollow circles) and a subsequent
decision to the point which is located closer to the original value $c\in\C$.
The maximum squared quantization error reads $\epsilon_{\E}^2=\frac{1}{3}$,
cf.~\cite{Conway:99,Fischer:02,Stern:19}. The modulo operation
$\mod_\E\{c\}=c-\Q_{\E}\{c\}$ calculates a point $c+\lambda$, $\lambda\in\E$,
located within the \emph{hexagonal} Voronoi cell of $\E$ w.r.t.\ the origin.

Both Gaussian and Eisenstein integers form Euclidean rings \cite{Napias:96}.
Due to the maximum squared quantization errors, for the former,
$|\rho|\leq\frac{1}{\sqrt{2}} < |v|$, with $|v|\geq 1~\forall v\in \G \setminus \{0\}$,
is valid if the division with remainder according to~\eqref{eq:division} is performed
over $\G$. For the latter, $|\rho|\leq\frac{1}{\sqrt{3}} < |v|$, with
$|v|\geq~1\ \forall v\in \E \setminus \{0\}$, holds. A division with small remainder can
be performed by analogy with~\eqref{eq:division} and, thus, it is possible to define a
Euclidean algorithm over the Gaussian and the Eisenstein integers.

\subsubsection{Quaternion-Valued Integer Rings}

With regard to the set of quaternions $\Ha$, two important subsets, in particular
integer rings, can be defined. The first type are the \emph{Lipschitz integers}%
\begin{equation}
	\begin{aligned}
		\L &= \{q = q^{(1)} + q^{(2)}\i + q^{(3)}\j + q^{(4)} \k 
		\mid\\& \qquad\qquad
		q^{(1)},q^{(2)},q^{(3)},q^{(4)} \in \Z \}\\
		&	= \Z + \Z\,\i + \Z\,\j + \Z\,\k \; ,
	\end{aligned}
\end{equation}
i.e., following the philosophy of the Gaussian integers, integer values are present
in each of the four components. A two-dimensional projection of the Lipschitz integers
is illustrated in Fig.~\ref{fig:LHint} (left). Again, the minimum squared distance reads
$d_{\mathrm{min},\L}^2=1$. The quantization of a quaternion $q\in\Ha$ to the closest
Lipschitz integer is realized by%
\begin{equation} \label{eq:quant_L}
	\Q_{\L}\{q\} = \lfloor q^{(1)}\rceil + \lfloor q^{(2)}\rceil\,\i
		+ \lfloor q^{(3)}\rceil\,\j + \lfloor q^{(4)}\rceil\,\k \in\L\; .
\end{equation}
The modulo operation reads $\mod_\L\{q\}=q-\Q_{\L}\{q\}$, where the Voronoi region
constitutes a \emph{hypercube} with the range $[-\frac{1}{2},\frac{1}{2})$ per component.
The maximum squared quantization error is---in comparison to the Gaussian
integers---increased to
$\epsilon^2_\L=|\frac{1}{2}+\frac{1}{2}\i+\frac{1}{2}\j+\frac{1}{2}\k|^2={1}$.
A direct consequence thereof is that the division with remainder according
to~\eqref{eq:division}, with $u,\sigma,v,\rho\in\L$, is not a \emph{Euclidean one} any
more \cite{Conway:03}: Here, the case $u v^{-1}\in\L+(1+\i+\j+\k)/2$ may occur.
Then, for the absolute value of the remainder,
$|\rho|=|(1+\i+\j+\k)/2|=1 \leq |v|$,  with $|v|\geq 1~\forall v\in \L \setminus \{0\}$,
is obtained. Hence, $|\rho|=|v|$ may be present, i.e., the \emph{in}equality required
to ensure a \emph{small} remainder is, in general, not achieved. Thus, it is \emph{not}
possible to define a Euclidean algorithm in order to calculate the gcd of two Lipschitz
integers $u,v\in\L$.

\begin{figure}[t]
	\centerline{%
		\includegraphics{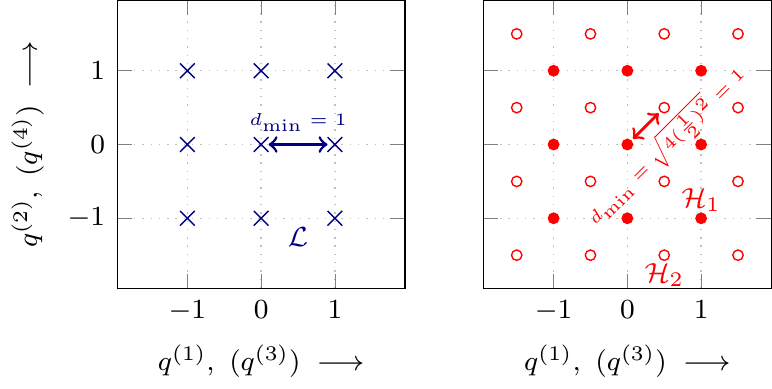}%
	}%
	\caption{\label{fig:LHint}%
		Two-dimensional projection of the Lipschitz integers $\L$ (left) and the Hurwitz
		integers~$\Hu$ (right). The components $q^{(3)}$ and $q^{(4)}$ are projected onto
		$q^{(1)}$ and $q^{(2)}$, respectively. For the Hurwitz integers, the two subsets
		$\Hu_1$ (filled circles) and $\Hu_2$ (hollow circles) are shown.}%
\end{figure}

Nevertheless, the non-Euclidean property of the Lipschitz integers can be ``cured'' by
the insertion of additional points at the problematic coordinates---the half-integer
values located at $\L+(1+\i+\j+\k)/2$. Then, as depicted in Fig.~\ref{fig:LHint} (right),
the so-called \emph{Hurwitz integers} \cite{Conway:99,Conway:03}%
\begin{equation}	\label{eq:hurwitz}
	\begin{aligned}
		\Hu &= \Big\{q = q^{(1)} + q^{(2)}\i + q^{(3)}\j + q^{(4)} \k \mid 
		\\ &\qquad\quad
		(q^{(1)},q^{(2)},q^{(3)},q^{(4)}) \in \Z^4 \cup \Big(\Z+\frac{1}{2}\Big)^4 \Big\} \\
	\end{aligned}
\end{equation}
are obtained, where \emph{all} components are---at the same time---\emph{either} integers
\emph{or} half-integers. In particular, the number of points (in quaternion space) is
doubled within the same hypervolume---without a decrease in minimum squared distance,
which still reads $d_\mathrm{min,\Hu}^2=1$. These points form the densest packing in
four-dimensional space \cite{Conway:99}, as will be discussed in
Sec.~\ref{subsec:generalized_lattice}. The maximum squared quantization error reads
$\epsilon^2_\Hu=|\frac{1}{4}+\frac{1}{4}\i+\frac{1}{4}\j+\frac{1}{4}\k|^2=\frac{1}{2}$,
i.e., for the division with remainder according to~\eqref{eq:division}, we have
$|\rho|\leq \frac{1}{\sqrt{2}} < |v|$,  with $|v|\geq 1~\forall v\in \Hu \setminus \{0\}$.
As a consequence, a Euclidean ring is present and a related Euclidean algorithm can be
applied. Thereby, similar to the Eisenstein integers in~\eqref{eq:quant_E}, the quantization
is performed as \cite{Conway:82,Conway:99}
\begin{align}
	\Q_{\Hu}\{q\} &= \argmin_{\Q_{\Hu_1}\{q\},\Q_{\Hu_2}\{q\}}
		\{|q-\Q_{\Hu_1}\{q\}|,|q -\Q_{\Hu_2}\{q\}|\} \; ,\\
	\Q_{\Hu_1}\{c\} &= \Q_{\L}\left\{q\right\}\; , \\\
	\Q_{\Hu_2}\{c\} &= \Q_{\L}\{q - o_\Hu\} + o_\Hu\; , \;  o_\Hu=\frac{1+\i+\j+\k}{2} \; .
\end{align}
Hence, both a quantization to all Lipschitz integers (filled circles in Fig.~\ref{fig:LHint}
(right)) and a quantization to all Lipschitz integers shifted by $\frac{1}{2}$ per component
(hollow circles) is done; subsequently, the closest result is chosen as the quantized value.
The modulo reduction is, again, described as $\mod_\Hu\{q\}=q-\Q_{\Hu}\{q\}$. The points
are reduced to the Voronoi cell located around the origin which forms a 24-cell with
24~vertices, 96~edges, and 96~faces \cite{Conway:99}.

\subsection{Generalized Definition of Lattices}	\label{subsec:generalized_lattice}

A lattice $\Lat$ is an \emph{infinite} set of points that are distributed over the Euclidean
space in such a way that an \emph{Abelian group} w.r.t.\ addition is present
\cite{Conway:99,Fischer:02,Zamir:14}. Most often, lattices are considered over the real
numbers. Then, a discrete (infinite) additive subgroup of $\R^N$ (lattice with
\emph{dimension} $N$), or equivalently the set of all integer linear combinations of a set
of $K$ independent vectors (\emph{lattice basis} with \emph{rank} $K$; assembled in the
\emph{generator matrix} $\ve{G}\in\R^{N \times K}$) is obtained.

In a more general setting, an $N$-dimensional lattice of rank $K$, with $N \geq K$, can be
defined by%
\begin{equation}	\label{eq:lattice}
	\Lat(\ve{G}) = \left\{\ve{G} \ve{\zeta} = 
		\sum\nolimits_{k=1}^{K} \ve{g}_k \zeta_k \mid \zeta_k \in \I\right\} \; ,
\end{equation}
where $\ve{G}=[\ve{g}_1,\dots,\ve{g}_K]$ denotes the $N\times K$ generator matrix of the
lattice, and $\ve{\zeta}=[\zeta_1,\dots,\zeta_K]^\T$ an \emph{integer vector} that describes
linear combinations of the basis vectors based on elements drawn from the constituent integer
ring $\I$.

In the real-valued case ($\ve{G}\in\R^{N \times K}$), those linear combinations are expressed
by real integers. The generalized definition of lattices in~\eqref{eq:lattice} enables the
construction of lattices over complex numbers, where $\ve{G}\in\C^{N \times K}$. Here, both
$\I=\G$ and $\I=\E$ are suited to express linear combinations. Moreover, lattices over
quaternions, with $\ve{G}=\Ha^{N \times K}$, can be defined based on the Lipschitz integers
($\I=\L$) or the Hurwitz integers ($\I=\Hu$) as the constituent integer ring.

The basis of a lattice is not unique. Instead, there exists an infinite number of generator
matrices that span the same lattice. The transformation to an alternative generator matrix
can be realized by the multiplication with a \emph{unimodular} integer matrix
$\ve{T}\in\I^{K \times K}$ according to%
\begin{equation}	\label{eq:lattice_reduction}
	\ve{G}_\mathrm{red} = \ve{G}\ve{T} \; , \qquad \textrm{with }
	{\det(\ve{T}^\H\ve{T})}=1 \; .
\end{equation}
A generator matrix is called \emph{reduced} matrix if, by means of
\emph{lattice-reduction algorithms}, this matrix is constructed in such a way that it
fulfills some desired quality criteria. Noteworthy, if the unimodularity constraint is
relaxed to the full-rank constraint $\rank(\ve{T})=K$, \emph{sublattices} of the original
lattice with the order $\sqrt{\det(\ve{T}^\H\ve{T}})$ may be obtained
\cite{Stern:19,Fischer:19}.

To evaluate the quality of a lattice basis, the lengths~(norms) of the basis vectors
$\ve{g}_1,\dots,\ve{g}_K$ are often assessed. The~Euclidean norm of a vector
$\ve{v}$ over $\R$, $\C$, or $\Ha$ is given as $\|\ve{v}\| = \sqrt{\ve{v}^\H\ve{v}}$.
Another criterion is the \emph{orthogonality defect}~\cite{Conway:99,Fischer:02}%
\begin{equation}	\label{eq:orth_defect}
	\Omega(\ve{G}) = \frac{\prod_{k=1}^{K}\|\ve{g}_k\|}{\vol(\Lat(\ve{G}))}
\end{equation}
of a lattice basis which is given as the product of the basis vectors' norms over the
volume of the lattice%
\begin{equation}	\label{eq:volume}
	\vol(\Lat(\ve{G}))=\sqrt{\det(\ve{G}^\H\ve{G})}
\end{equation}
given as the volume of a related $K$-dimensional fundamental parallelotope. Thereby,
the volume is the same for all generator matrices that span the same lattice
\cite{Conway:99,Fischer:02}.

\subsubsection{Real Representation of Complex Lattices}

The Gaussian integers are isomorphic to the two-dimensional real-valued integer
lattice $\Z^2$ (generator matrix\footnote{%
	 $\ve{I}_K$ denotes the $K \times K$ identity matrix
	 and $\ve{0}_K$ the $K \times K$ all-zero matrix
	 (all elements are $0$).%
}
$\ve{G}=\ve{I}_2$), cf.~\cite{Conway:99}. Hence, a complex lattice with generator
matrix $\ve{G}\in\C^{N \times K}$ defined over $\G$ is isomorphically expressed by a
real lattice (over $\Z$) with the generator matrix
$\ve{G}_\mathsf{r,\G}=\ve{G}_\mathsf{r}\ve{I}_{2K}=\ve{G}_\mathsf{r}$,
cf.~\eqref{eq:equiv_real_compl}.

The Eisenstein integers are isomorphic to the real hexagonal lattice $\ve{A}_2$. Thus,
a real lattice representation is obtained via~\cite{Stern:19}%
\begin{equation}	\label{eq_equiv_real_E}
			\ve{G}_\mathsf{r,\E} = 
				\underbrace{\ve{G}_\mathsf{r}}_{\text{\eqref{eq:equiv_real_compl}}}
				\underbrace{\begin{bmatrix}
					\ve{I}_K & -\frac{1}{2}\ve{I}_K\\
					\ve{0}_K & \frac{\sqrt{3}}{2} \ve{I}_K
				\end{bmatrix} }_{\ve{G}_\E}
\end{equation}
where, for $K=1$, the right-hand-side matrix in~\eqref{eq_equiv_real_E} represents a
generator matrix of the $\ve{A}_2$ lattice \cite{Conway:99}.

\subsubsection{Real and Complex Representation of Quaternion-Valued Lattices}

The Lipschitz integers are isomorphic to the four-dimensional (real-valued) integer
lattice $\Z^4$ (generator matrix $\ve{I}_4$). Quaternionic lattices over $\L$ can
equivalently be expressed by complex-valued lattices over $\G$ via the generator matrix
$\ve{G}_\mathsf{c,\L}=\ve{G}_\mathsf{c}\ve{I}_{2K}= \ve{G}_\mathsf{c}\in\C^{2N \times 2K}$
according to~\eqref{eq:equiv_compl_quat}, or by real-valued lattices (over $\Z$) with the
corresponding generator matrix
$\ve{G}_\mathsf{r,\L}=\ve{G}_\mathsf{r}\ve{I}_{4K}= \ve{G}_\mathsf{r}\in\R^{4N \times 4K}$
from~\eqref{eq:equiv_real_quat}.

Regarding the Hurwitz integers, an isomorphism to the four-dimensional
\emph{checkerboard lattice} $\ve{D}_4$ is present \cite{Conway:99}. This isomorphism can be
exploited in order to define an equivalent real-valued representation (over $\Z$) for
lattices over $\Hu$. Here, the equivalent real-valued generator matrix is obtained as%
\begin{equation}	\label{eq:equiv_real_hur}
		\ve{G}_\mathsf{r,\Hu}=
		\underbrace{\ve{G}_\mathsf{r}}_{\text{\eqref{eq:equiv_real_quat}}}
	\underbrace{\begin{bmatrix}
		 \ve{I}_K &\ve{0}_K &\ve{0}_K & \frac{1}{2}\ve{I}_K \\
		 \ve{0}_K &\ve{I}_K &\ve{0}_K & \frac{1}{2}\ve{I}_K \\
		 \ve{0}_K &\ve{0}_K &\ve{I}_K & \frac{1}{2}\ve{I}_K \\
		 \ve{0}_K &\ve{0}_K &\ve{0}_K & \frac{1}{2}\ve{I}_K \\
	\end{bmatrix}}_{\ve{G}_\Hu}
	\in \R^{4N \times 4K}
\end{equation}
by incorporating the generator matrix of the $\ve{D}_4$ lattice.\footnote{%
	In particular, the generator matrix of the lattice \emph{dual} to
	$\ve{D}_4$ is employed that actually corresponds to a version of
	the original $\ve{D}_4$ lattice which is scaled by a factor of
	$\frac{1}{2}$ \cite{Conway:99}. Using that strategy, the resulting
	points directly correspond to the set of Hurwitz integers (with
	half-integer values).%
}
In the same way, an equivalent complex-valued representation (over~$\G$) is realized
by using the generator matrix%
\begin{equation}	\label{eq:equiv_compl_hur}
	\ve{G}_\mathsf{c,\Hu}=
	\underbrace{\ve{G}_\mathsf{c}}_{\text{\eqref{eq:equiv_compl_quat}}}
	\begin{bmatrix}
		\ve{I}_K & (\frac{1}{2}+\frac{1}{2}\,\i)\ve{I}_K \\
		\ve{0}_K & (\frac{1}{2}+\frac{1}{2}\,\i)\ve{I}_K \\
	\end{bmatrix}
	\in \C^{2N \times 2K} \; .
\end{equation}

\section{Algorithms for Generalized Lattice Problems}	\label{sec:algorithms}

\noindent In this section, algorithms that are suited to (approximately) solve
particular lattice problems are reviewed and generalized.

\subsection{Lattice Basis Reduction and the LLL Algorithm}

The task of lattice basis reduction is to find a \emph{more suited basis}
$\ve{G}_\mathrm{red}=\big[\ve{g}_{\mathrm{red},1},\dots,\ve{g}_{\mathrm{red},K}\big]$
for the representation of the lattice spanned by the (unreduced) generator matrix
$\ve{G}$. For generalized lattices, a unimodular integer matrix
$\ve{T}=\big[\ve{t}_1,\dots,\ve{t}_K\big]\in\I^{K \times K}$ according
to~\eqref{eq:lattice_reduction} has to be found in such a way that particular quality
criteria are fulfilled.

To assess when a lattice basis is reduced, several different criteria can be defined.
One is the minimization of the orthogonality defect~\eqref{eq:orth_defect}, i.e.,
the optimality criterion reads%
\begin{equation}	\label{eq:minOrth}
	\ve{T} = \argmin_{\substack{\ve{T}\in\I^{K \times K}\\ 
			\det(\ve{T}^\H \ve{T})=1}} \!\Omega (\ve{G}\ve{T})
		   = \argmin_{\substack{\ve{T}\in\I^{K \times K}\\ \det(\ve{T}^\H \ve{T})=1}} 
		   \!\frac{\prod_{k=1}^{K}\|\ve{g}_{\mathrm{red},k}\|}{\vol(\Lat(\ve{G}))} \; .
\end{equation}
Hence, the norms of all basis vectors are incorporated. An alternative approach is to
consider the lengths of these vectors individually, e.g., $\|\ve{g}_{\mathrm{red},1}\|$.
The minimization of the \emph{maximum squared norm} among the basis vectors is particularly
known under the name \emph{shortest basis problem} (SBP) and described~by%
\begin{equation}	\label{eq:sbp}
	\ve{T} = \argmin_{\substack{\ve{T}\in\I^{K \times K}\\ 
			\det(\ve{T}^\H \ve{T})=1}} \max_{k=1,\dots,K} \|\ve{G}\ve{t}_k\|^2 \; .
\end{equation} 

The LLL algorithm \cite{Lenstra:82} is suboptimal w.r.t.\ the above
lattice-basis-reduction criteria, i.e., it only approximates these optimization
problems. Nevertheless, a polynomial asymptotic complexity is ensured (over $K$),
whereas for the exact solutions to \eqref{eq:minOrth} and \eqref{eq:sbp} an exponential
complexity is required, cf., e.g., \cite{Daude:94,Agrell:02}. Moreover, it is possible
to derive certain performance guarantees (i.e., bounds) for LLL reduction. More details
will be provided in Sec.~\ref{sec:bounds}. In the following, the concept of LLL
reduction will be generalized.

\subsubsection{Gram--Schmidt Orthogonalization}

The LLL reduction and its related criteria operate on the Gram--Schmidt
orthogonalization (GSO) \cite{Press:07} of the generator matrix\footnote{%
	The matrices $\ve{Q}$ and $\ve{R}$ often describe the \emph{QR decomposition}
	of a matrix, in which $\ve{Q}$ is usually assumed to be a unitary matrix
	($\ve{Q}^\H\ve{Q}=\ve{I}$). Given the GSO (without normalization), the column
	norms of $\ve{Q}$ are then absorbed in $\ve{R}$ (non-unit main diagonal). In
	this work, we assume that the matrix $\ve{Q}$ does not have to be unitary
	but that $\ve{R}$ is a matrix with unit main diagonal.%
}%
\begin{equation}
	\ve{Q}\ve{R} = \ve{G} \ve{P}\; .
\end{equation}
In particular, $\ve{Q}=\big[\ve{q}_1,\dots,\ve{q}_K\big]$ forms an $N \times K$ matrix
with \emph{orthogonal columns}, and
$\ve{R}=\big[r_{l,k}\big]$, $l=1,\dots,K$, $k=1,\dots,K$,
an upper triangular $K \times K$ matrix with unit main diagonal
($r_{k,k}=1$, $k=1,\dots,K$). The $K \times K $ matrix $\ve{P}$ (\emph{permutation matrix}
with a single $1$ per column and row and all other elements equal to $0$) can be used to
sort the Gram--Schmidt vectors in $\ve{Q}$ according to their length during the
orthogonalization process (known as \emph{pivoting}).

The GSO procedure\footnote{%
	To simplify the notation, we denote the selection of the elements with index
	$k,\dots,l$ of a vector $\ve{g}$ by $\ve{g}_{k:l}$. Given a matrix $\ve{G}$,
	the selection of a submatrix composed of the rows $k,\dots,l$ and the columns
	$m,\dots,n$ is denoted as $\ve{G}_{k:l,m:n}$.%
}
is presented in Algorithm~\ref{alg:gso}. In every step $k=1,\dots,K$, the pivoting is
performed first, i.e., the shortest of the remaining columns $\ve{q}_k,\dots,\ve{q}_K$
is inserted at position $k$ (and removed at its original position). Afterwards, the
remaining columns are projected onto the orthogonal complement of $\ve{q}_k$. In
contrast to other implementations, e.g., \cite{Press:07,Gan:09}, all multiplications are
defined in such a way that a non-commutative behavior, e.g., for the case of quaternion
numbers, is taken into account. Hence, the procedure can be used for real-, complex-, or
quaternion-valued numbers.

\subsubsection{Generalized LLL Reduction Criteria}

In the initial publication on real-valued LLL (RLLL) reduction, the matrices $\ve{R}$ and
$\ve{Q}$ have been evaluated w.r.t.\ two conditions:
i) the property $|r_{l,k}| \leq \frac{1}{2}$, $1 \leq l < k \leq K$, has to be fulfilled
for the upper triangular part of $\ve{R}$, known as \emph{size-reduction} condition, and
ii) postulating a size-reduced matrix $\ve{R}$, the Lov{\'a}sz condition
$\|\ve{q}_k\|^2 \geq (\delta - |r_{k-1,k}|^2) \cdot \|\ve{q}_{k-1}\|^2$ has to hold for
the norms of the columns of $\ve{Q}$. The parameter $\delta$ is used as a quality
parameter and enables a trade-off between reduction quality and the runtime of the
algorithm.

Taking a closer look at the size-reduction condition, it becomes apparent that the LLL
algorithm \cite{Lenstra:82} can be interpreted to form some kind of Euclidean algorithm
for matrices, see also~\cite{Napias:96}. In particular, this condition is forced by a
modulo reduction according to~\eqref{eq:mod_Z}, i.e., by a (Euclidean) division as defined
in~\eqref{eq:division} with the divisor $v=1$, where the resulting remainder
$r_{l,k}=\rho$ is a \emph{small remainder} since $|r_{l,k}|\leq \frac{1}{2} < |v|=1$.
The size-reduction condition is alternatively expressed by $\Q_\Z\{r_{l,k}\}=0$, i.e., all
non-zero, non-diagonal values in $\ve{R}$ have to be located within the Voronoi cell of
$\Z$ w.r.t.\ the origin.

Taking advantage of the interpretation that the size-reduction operation forms a modulo
reduction over the particular Euclidean ring and the fact that the Lov{\'a}sz condition is
generally valid for Euclidean rings as it compares norms of vectors,
\emph{generalized LLL reduction criteria} can be defined.
\begin{dfn}[Generalized LLL Reduction]	\label{eq:generalLLL}
	A generator matrix $\ve{G}=\ve{Q}\ve{R}$ that spans a lattice over the Euclidean
	integer ring $\I$ is LLL-reduced over~$\I$, if
	\begin{enumerate}[label=\roman*)]
		\item $\ve{R}$ is size-reduced according to%
			\begin{equation}
			\Q_\I \{r_{l,k}\} = 0 \; , \qquad 1 \leq l < k \leq K \; ,
			\end{equation}
	\item the respective Lov{\'a}sz condition%
	\begin{equation}	\label{eq:lovasz_general}
		\|\ve{q}_k\|^2 \geq (\delta - |r_{k-1,k}|^2) \cdot \|\ve{q}_{k-1}\|^2
	\end{equation}
	is fulfilled. The quality parameter can be chosen according to%
	\begin{equation}
		\delta \in (\epsilon_{\I}^2,1] \; ,
	\end{equation}
	where $\epsilon_{\I}^2$ denotes the maximum squared quantization~error for elements
	quantized to the Euclidean integer ring~$\I$.
	\end{enumerate}
\end{dfn}
\begin{proof}
	Since, after size reduction, $\Q_\I\{r_{l,k}\}=0$ is valid, $r_{l,k}$ forms the 
	remainder $\rho$ of the division with the divisor $v=1$ as defined
	in~\eqref{eq:division}. If $\I$ is a Euclidean ring,
	$|r_{l,k}|\leq \epsilon_\I < |v| = 1$, i.e., a \emph{small remainder} is present.
	Hence, the Lov{\'a}sz condition becomes operative if
	$0<(\delta-\epsilon_{\I}^2)\leq 1$, i.e., if $\delta \in (\epsilon_{\I}^2,1]$.
\end{proof}
\begin{remark}
	For the integer lattice $\Z$, we have $\delta\in(\frac{1}{4},1]$ since
	$\epsilon_\Z^2=\frac{1}{4}$. This result is identical to the range initially derived
	in \cite{Lenstra:82,Akhavi:03}.	Given $\I=\G$, $\delta\in(\frac{1}{2},1]$ is valid as
	$\epsilon_{\G}^2=\frac{1}{2}$, cf.~\cite{Gan:09}. For lattices over $\E$,
	$\delta\in(\frac{1}{3},1]$ is obtained due to $\epsilon_{\E}^2=\frac{1}{3}$.
	Considering lattices over the Hurwitz integers $\Hu$, $\delta\in(\frac{1}{2},1]$ as
	$\epsilon_{\Hu}^2=\frac{1}{2}$. Since the Lipschitz integers do	not form a Euclidean
	ring, an LLL reduction over $\L$ can---in general---not be defined. This can be seen
	if the maximum squared quantization error $\epsilon_{\L}^2$ is inserted into the
	Lov{\'a}sz condition: even when $\delta=1$,
	$\delta - |r_{k-1,k}|^2 = 0$ if $|r_{k-1,k}|=1$, i.e., the reduction may become
	inoperative.
\end{remark}

A generalization of the size-reduction operation from \cite{Lenstra:82,Gan:09} is provided
in Algorithm~\ref{alg:size}---given the (reduced) basis $\ve{G}_\mathrm{red}$, its related
matrix $\ve{R}$, the transformation matrix $\ve{T}$, the indices $l$ and $k$, and the integer
ring $\I$ as input variables. For lattices over $\Z$ and $\G$, the operations are equivalent
to the ones defined in the RLLL algorithm \cite{Lenstra:82} and the complex LLL (CLLL)
algorithm~\cite{Gan:09}, respectively. However, the generalized definition also enables an
LLL reduction over the Eisenstein integers (ELLL algorithm), over the Hurwitz integers
(QLLL algorithm, ``Q'' for quaternions), and over other Euclidean rings.

Given complex and quaternion-valued matrices, the reduction can alternatively be performed
w.r.t.\ their equivalent real- and real-/complex-valued matrix representations, as defined
in \eqref{eq:equiv_real_compl}, \eqref{eq:equiv_real_quat}, and \eqref{eq:equiv_compl_quat}.
Then, the reduction has to be done with real-valued and/or complex-valued algorithms. However,
during the GSO, the particular structure of these matrices (and the isomorphism) is destroyed,
i.e., the resulting reduced basis and the related integer matrix cannot be reconverted into
equivalent complex/quaternion-valued representations, see also~\cite{Stern:19}. Consequently,
the quality of the reduction will not necessarily be the same. More details will be given in
Sec.~\ref{sec:bounds}.

\begin{algorithm}[t]{
		\small
		\caption{\label{alg:gso} Gram--Schmidt Orthogonalization with Pivoting.} %
		$[\ve{Q},\ve{R},\ve{P}] = \textsc{GSO}(\ve{G})$
		\begin{algorithmic}[1]
			\State {$\ve{Q}=\ve{G}$, $\ve{R}=\ve{I}_K$, $\ve{P}=\ve{I}_K$ }
			\For {$k=1,\dots,K$}
			\State {$k_\mathrm{m}=\argmin_{l=k,\dots,K} \| \ve{q}_l\|$}
			\If {$k_\mathrm{m} \neq k$}			{\color{\commentcolor}\Comment{pivoting}}
			\State {$\ve{Q} = [\ve{q}_{1:k-1},\ve{q}_{k_\mathrm{m}},
				\ve{q}_{k:k_\mathrm{m}-1},\ve{q}_{k_\mathrm{m}+1:K}]$}
			\State {$\ve{P} = [\ve{p}_{1:k-1},\ve{p}_{k_\mathrm{m}},
				\ve{p}_{k:k_\mathrm{m}-1},\ve{p}_{k_\mathrm{m}+1:K}]$}
			\State {In the upper $k-1$ rows of $\ve{R}$ (index $1:k-1$):}
			\Statex {\qquad\quad$\ve{R} = [\ve{r}_{1:k-1},\ve{r}_{k_\mathrm{m}},
				\ve{r}_{k:k_\mathrm{m}-1},\ve{r}_{k_\mathrm{m}+1:K}]$}
			\EndIf
			\For {$l=k+1,\dots,K$}	{\color{\commentcolor}\Comment{orthogonalization}}
			\State {$r_{k,l}=\ve{q}_k^\H \ve{q}_l \cdot\|\ve{q}_k\|^{-2}$}
			\State {$\ve{q}_l = \ve{q}_l - \ve{q}_k r_{k,l}$}
			\EndFor
			\EndFor
		\end{algorithmic}
	}
\end{algorithm}
\subsubsection{Generalized LLL Reduction Algorithm}

In Algorithm~\ref{alg:lll}, a generalized variant of the LLL algorithm is provided. In
contrast to other implementations, e.g., in \cite{Gan:09}, all multiplications are performed
in the right order to account for non-commutative behavior. In particular, in the first line,
a GSO with pivoting is calculated. The pivoting is not necessarily required, but it is
well-known that sorted Gram--Schmidt vectors may speed up the following reduction process
\cite{Fischer:10}. In the loop, the size reduction for the element $r_{k-1,k}$ (over the
particular ring $\I$) is done first. Then, the respective Lov{\'a}sz condition can be checked.
If it is not fulfilled, the columns $k-1$ and $k$ are \emph{swapped} in $\ve{G}_\mathrm{red}$
and the (unimodular) transformation matrix $\ve{T}$. As a consequence, the matrices $\ve{Q}$
and $\ve{R}$ have to be updated. This can either be done by a complete recalculation of the
GSO, or by the procedure in Algorithm~\ref{alg:qr} that restricts the recalculation to all
elements which have to be updated. It is an adapted variant of the procedure in \cite{Gan:09},
additionally taking the right order of all multiplications into account. If the check of the
Lov{\'a}sz condition is successful, the elements $r_{l,k}$, $l=k-2,k-3,\dots,1$, are finally
reduced and the algorithm continues with the next reduction step until $k=K$.

\begin{algorithm}[t]{
		\small
		\caption{\label{alg:size} Generalized Size Reduction.} %
		$[\ve{G}_\mathrm{red},\ve{R},\ve{T}] = 
		\textsc{SizeRed}(\ve{G}_\mathrm{red}, \ve{R}, \ve{T}, l, k, \I)$
		\begin{algorithmic}[1]
			\State {$r_\mathrm{q} = \Q_{\I}\{r_{l,k}\}$ }
			\If {$r_\mathrm{q}\neq 0$}
			\State {$\ve{g}_{\mathrm{red},k} = 
				\ve{g}_{\mathrm{red},k} - \ve{g}_{\mathrm{red},l} \,r_\mathrm{q}$}
			\State {$\ve{t}_{k} = \ve{t}_{k} - \ve{t}_{l}\, r_\mathrm{q}$}
			\State {In the upper $l$ rows of $\ve{R}$ (index $1:l$):}
			\Statex {\quad\,~$\ve{r}_{k} = \ve{r}_{k} - \ve{r}_{l} \,r_\mathrm{q}$}
			\EndIf
		\end{algorithmic}
	}
\end{algorithm}
\begin{algorithm}[t]{
		\small
		\caption{\label{alg:lll} Generalized LLL Reduction.} %
		$[\ve{G}_\mathrm{red},\ve{Q},\ve{R},\ve{T}] = 
			\textsc{LLL}(\ve{G}, \delta, \I)$
		\begin{algorithmic}[1]
			\State {$[\ve{Q},\ve{R},\ve{T}] = \textsc{GSO}(\ve{G})$}
				{\color{\commentcolor}\Comment{initial GSO with pivoting}}
			\State{$\ve{G}_\mathrm{red}=\ve{G}\ve{T}$, $k=2$}
			\While {$k\leq K$}
			\State {$[\ve{G}_\mathrm{red},\ve{R},\ve{T}] = 
				\textsc{SizeRed}(\ve{G}_\mathrm{red}, \ve{R}, \ve{T}, k-1, k, \I)$}
			\If{$ \|\ve{q}_k\|^2 < (\delta - |r_{k-1,k}|^2) \cdot \|\ve{q}_{k-1}\|^2$}
				{\color{\commentcolor}\Comment{swap}}
			\State {$\ve{G}_\mathrm{red}=[\ve{g}_{\mathrm{red},1:k-2},\ve{g}_{\mathrm{red},k},
				\ve{g}_{\mathrm{red},k-1},\ve{g}_{\mathrm{red},k+1:K}]$}
			\State {$\ve{T}=[\ve{t}_{1:k-2},\ve{t}_{k},	\ve{t}_{k-1},\ve{t}_{k+1:K}]$}
			\State {$[\ve{Q},\ve{R}] = \textsc{UpdateQR}(\ve{Q},\ve{R},k)$}
			\State{$k=\max(2,k-1)$}
			\Else	{\color{\commentcolor}\Comment{Lov{\'a}sz condition fulfilled}}
			\For {$l=k-2,k-3,\dots,1$}
			\State{$[\ve{G}_\mathrm{red},\ve{R},\ve{T}] = 
				\textsc{SizeRed}(\ve{G}_\mathrm{red}, \ve{R}, \ve{T}, l, k, \I)$}
			\EndFor
			\State {$k = k + 1$}
			\EndIf
			\EndWhile
		\end{algorithmic}
	}
\end{algorithm}
\begin{algorithm}[t]{
		\small
		\caption{\label{alg:qr} GSO update if columns $k$ and $k-1$ are swapped.} %
		$[\ve{Q},\ve{R}] = \textsc{UpdateQR}(\ve{Q},\ve{R},k)$
		\begin{algorithmic}[1]
			\State {$\ve{\tilde{q}}_{k-1}= \ve{q}_{k-1}$, 
				$\ve{\tilde{q}}_{k}= \ve{q}_k$, $\ve{\tilde{R}}=\ve{R}$ }
					{\color{\commentcolor}\Comment{temporal variables}}
			\State {$\ve{{q}}_{k-1}= \ve{q}_k + \ve{q}_{k-1} r_{k-1,k}$}
			\State {$\ve{{r}}_{k-1,k}= {r}_{k-1,k}^* \cdot
				\|\ve{\tilde{q}}_{k-1}\|^2 \cdot \|\ve{q}_{k-1}\|^{-2}$}
			\State {$\ve{q}_k = \ve{\tilde{q}}_{k-1} - \ve{q}_{k-1} r_{k-1,k}$}
			\For{$l=k+1,\dots,K$}
			\State{$r_{k-1,l}=r_{k-1,k} r_{k-1,l} + 
				r_{k,l} \cdot \|\ve{\tilde{q}}_{k}\|^2 \cdot \| \ve{q}_{k-1}\|^{-2}$}
			\State{$r_{k,l}=\tilde{r}_{k-1,l} - \tilde{r}_{k-1,k} r_{k,l}$}
			\EndFor
			\For{$l=1,\dots,k-2$}
			\State{$r_{l,k-1}=r_{l,k}$}
			\State{$r_{l,k}=\tilde{r}_{l,k-1}$}
			\EndFor
		\end{algorithmic}
	}
\end{algorithm}
\subsubsection{Pseudo-QLLL Reduction}

Even though an LLL reduction over the Lipschitz integers can---due to the non-existent
Euclidean property---not be defined in general, a \emph{pseudo-QLLL reduction} can be
defined instead. In particular, the reduction only becomes inoperative if $|r_{k-1,k}|=1$.
Hence, choosing $\delta=1$, the LLL algorithm can be applied if the probability that
$|r_{k-1,k}|=1$ tends to zero.

Such a case is, e.g., present if the elements of the generator matrix are drawn i.i.d.ly
from a continuous distribution (e.g., an i.i.d. Gaussian one). Then, if the quantization
in the size-reduction steps is performed w.r.t.\ $\I=\L$ and if additionally the parameter
$\delta=1$ is chosen, a pseudo-QLLL-reduced basis is obtained. Obviously, no general
performance guarantees or bounds can be derived in that case---however, it is at least
ensured that the Lov{\'a}sz condition is still operative and that, thus, some kind of
``optimized'' basis is produced.

\subsection{Shortest Independent Vectors in Lattices}	\label{subsec:smp}

Another important lattice problem is the determination of the shortest linearly independent
vectors in lattices. They are related to the so-called \emph{successive minima}. In
particular, the $k$\textsuperscript{th} successive minimum of a lattice with $N\times K$
generator matrix $\ve{G}$, $k=1,\dots,K$, is defined as \cite{Minkowski:91,Lagarias:90}%
\begin{equation}
	\mu_{k} = \inf \{\mu \mid \dim \{\mathrm{span}
		\{\Lat(\ve{G})\cap \mathrm{B}_N(\mu)\}\}=k\} \; ,
\end{equation}
where $\mathrm{B}_N(\mu)$ denotes the $N$-dimensional ball with hyperradius $\mu$ centered
at the origin. In words, $\mu_{k}$ denotes the smallest radius in which $k$ linearly
independent vectors can be found within the hyperball. The related lattice points with%
\begin{equation}	
	\mu_{k} = \|\ve{\lambda}_{\mathrm{m},k}\| \; ,\qquad k=1,\dots,K \;,
\end{equation}
form the $K$ linearly independent vectors with the shortest (Euclidean) norms.

Closely related to the \emph{successive-minima problem} (SMP)---the determination of the $K$
shortest independent vectors in a lattice---is the \emph{shortest independent vector problem}
(SIVP). Here, only the maximum of the (squared) norms has to be short as possible, cf.\ the SBP
in~\eqref{eq:sbp}. Hence, the (generalized) SIVP is a weakened variant of the SMP and
defined as%
\begin{equation}	\label{eq:sivp}
		\ve{T} = \argmin_{\substack{\ve{T}\in\I^{K \times K}\\ 
				\rank(\ve{T})=K}} \max_{k=1,\dots,K} \|\ve{G}\ve{t}_k\|^2 \; .
\end{equation}
Obviously, every optimal solution for the successive-minima problem is also optimal w.r.t.\ the
SIVP \cite{Minkowski:89,Ding:15}.

As becomes apparent from~\eqref{eq:sivp}, the $K$ independent lattice vectors do not necessarily
form a basis of the lattice spanned by $\ve{G}$. In particular, if the $K$ integer vectors
$\ve{t}_k$ are combined into the integer transformation matrix
$\ve{T}=\big[\ve{t}_1,\dots,\ve{t}_K\big]$, the transformed generator matrix
$\ve{G}_\mathrm{tra}=\ve{G}\ve{T}$ may only define a \emph{sublattice} of the original one: the
integer vectors $\ve{t}_k$ are only required to be linearly independent, but they do not have
form a unimodular transformation matrix in combination. Hence, $\ve{T}$ may only have full rank,
resulting in a ``thinned'' lattice when multiplied by $\ve{G}$, depending on the particular
determinant $\det(\ve{T})$. Further details can, e.g., be found in \cite{Stern:19,Fischer:19}.
Consequently, the successive minima can be used as lower bounds for the norms of the basis
vectors of any (alternative) basis for a lattice spanned by a particular generator
matrix~$\ve{G}$.

\subsubsection{Generalized Successive Minima}

Even though the successive minima have initially been considered for real-valued lattices over
$\Z$, e.g., in \cite{Minkowski:91,Lagarias:90}, it is quite obvious that they can also be
determined if other integer rings~$\I$ are present. The only condition is that the shortest
$K$ linearly independent  lattice vectors are found---no additional constraints as, e.g., a
Euclidean property of the particular ring are imposed.

In contrast to LLL reduction, the isomorphism of a complex or quaternionic matrix $\ve{G}$
and its equivalent real-valued representation $\ve{G}_\mathsf{r}$ according
to \eqref{eq:equiv_real_compl} and \eqref{eq:equiv_real_quat}, respectively, holds for the
successive minima as shown in the following Theorem.
\begin{thm}[Successive Minima of Complex and Quaternionic Lattices]	\label{th:smp}
	Given an $N \times K$ generator matrix $\ve{G}$ for which
	the successive minima over $\I$ are given as%
	\begin{equation}	\label{eq:suc_minima}
		\ve{\mu}_\I=\big[\mu_1,\mu_{2},\dots,\mu_{K}] \in\R^{K}\; ,
	\end{equation}
	the successive minima of the equivalent
	$D_\mathsf{r} N \times D_\mathsf{r} K$
	real-valued generator matrix
		$\ve{G}_{\mathsf{r},\I} = \ve{G}_\mathsf{r} \ve{G}_\I$,
	where $\ve{G}_\mathsf{r}$ denotes the $D_\mathsf{r} N \times D_\mathsf{r} K$
	real-valued representation of
	$\ve{G}$, $\ve{G}_\I$ the $D_\mathsf{r} K \times D_\mathsf{r} K$
	real-valued generator matrix of the integer ring~$\I$, and
	$D_\mathsf{r}$ the number of (real-valued) components per scalar,
	cf.~\eqref{eq:equiv_real_compl} and \eqref{eq:equiv_real_quat},
	are obtained as%
	\begin{equation}	
		\ve{\mu}_{\mathsf{r},\I}=\big[
			\underbrace{\mu_1,\dots,\mu_{1}}_{D_\mathsf{r}\text{~times}},
			\underbrace{\mu_{2},\dots,\mu_{2}}_{D_\mathsf{r}\text{~times}},
			\dots,\underbrace{\mu_{K},\dots,\mu_{K}}_{D_\mathsf{r}\text{~times}}\big]
			\in \R^{D_\mathrm{r}K}\; ,
	\end{equation}
	i.e., each successive minimum from~\eqref{eq:suc_minima} occurs $D_\mathsf{r}$ times. This
	property is valid for complex lattices ($D_\mathsf{r}=2$) over both Gaussian and Eisenstein
	integers, as well as for quaternionic lattices ($D_\mathsf{r}=4$) over both Lipschitz and
	Hurwitz integers.
\end{thm}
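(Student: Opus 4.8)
The plan is to exhibit the real-representation map as a norm-preserving $\R$-linear isomorphism that carries the $\I$-lattice $\Lat(\ve{G})$ onto the real lattice $\Lat(\ve{G}_{\mathsf{r},\I})$, and then to transport the successive-minima structure through it, using that every lattice point may be multiplied by the $D_\mathsf{r}$ units $1,\i$ (complex) or $1,\i,\j,\k$ (quaternion) to produce $D_\mathsf{r}$ equinorm, $\R$-independent copies.

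First I would establish the correspondence. For a point $\ve{\lambda}=\ve{G}\ve{\zeta}\in\Lat(\ve{G})$, stacking the $D_\mathsf{r}$ real components of $\ve{\lambda}$ yields a vector equal to $\ve{G}_\mathsf{r}\,\tilde{\ve{\zeta}}$, where $\tilde{\ve{\zeta}}$ collects the real components of $\ve{\zeta}$; this is exactly the isomorphism of matrix products already recorded in \eqref{eq:equiv_real_compl} and \eqref{eq:equiv_real_quat}. As $\ve{\zeta}$ ranges over $\I^K$, $\tilde{\ve{\zeta}}$ ranges over $\Z^{D_\mathsf{r}K}$ once the ring's own generator matrix $\ve{G}_\I$ is absorbed (trivial for $\G,\L$; equal to $\ve{G}_\E,\ve{G}_\Hu$ for the hexagonal/$\ve{D}_4$ rings, cf.\ \eqref{eq_equiv_real_E} and \eqref{eq:equiv_real_hur}), so the image is precisely $\Lat(\ve{G}_{\mathsf{r},\I})$. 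Since $|c|^2$ (resp.\ $|q|^2$) equals the sum of squares of the real components, the map preserves the Euclidean norm and hence sends $\mathrm{B}_N(\mu)$ to $\mathrm{B}_{D_\mathsf{r}N}(\mu)$.

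Next I would exploit the unit action. Because $\ve{G}(\ve{\zeta}u)=\ve{\lambda}u$, right multiplication of any lattice point by a unit $u$ with $|u|=1$ again gives a lattice point of the same norm. Taking $u\in\{1,\i\}$ (complex) or $u\in\{1,\i,\j,\k\}$ (quaternion), the real representations of the $\ve{\lambda}u$ span the real image of the one-dimensional $\I$-line $\ve{\lambda}\,\C$ (resp.\ $\ve{\lambda}\,\Ha$), so these $D_\mathsf{r}$ copies are $\R$-linearly independent whenever $\ve{\lambda}\neq\ve{0}$; the key point here is that $\Ha$ is a division ring, so $\ve{\lambda}q=\ve{0}$ with $\ve{\lambda}\neq\ve{0}$ forces $q=0$. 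More generally, if $\ve{\lambda}_1,\dots,\ve{\lambda}_m$ are $\C$- (resp.\ $\Ha$-) independent, the $D_\mathsf{r}m$ vectors $\ve{\lambda}_k u_j$ are $\R$-independent, and the real span of a $d$-dimensional $\I$-subspace has real dimension $D_\mathsf{r}d$.

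Finally I would pin the real minima from both sides. For the upper bound, if $\ve{\lambda}_1,\dots,\ve{\lambda}_K$ attain the $\I$-minima with $\|\ve{\lambda}_k\|=\mu_k$, the $D_\mathsf{r}k$ vectors $\{\ve{\lambda}_{k'}u_j : k'\le k\}$ are $\R$-independent and of norm at most $\mu_k$, so the $(D_\mathsf{r}k)$-th real minimum is $\le\mu_k$. For the lower bound, for every $\mu<\mu_k$ the $\I$-points of norm $\le\mu$ span $\I$-dimension at most $k-1$, hence real dimension at most $D_\mathsf{r}(k-1)$, so $\Lat(\ve{G}_{\mathsf{r},\I})\cap\mathrm{B}_{D_\mathsf{r}N}(\mu)$ spans at most $D_\mathsf{r}(k-1)$ real dimensions; thus the $(D_\mathsf{r}(k-1)+1)$-th real minimum is $\ge\mu_k$. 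Combining the two bounds, each $\mu_k$ occupies exactly the indices $D_\mathsf{r}(k-1)+1,\dots,D_\mathsf{r}k$, which is the asserted multiplicity. I expect the main obstacle to be the bookkeeping that makes the norm-preserving bijection a genuine lattice isomorphism for the non-trivial rings $\E$ and $\Hu$ (where $\ve{G}_\I\neq\ve{I}$), together with the quaternionic non-commutativity---fixing the right (rather than left) module convention throughout so that the division-ring independence argument remains valid.
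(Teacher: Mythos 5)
Your proposal is correct and rests on the same mechanism as the paper's proof: the norm-preserving real representation combined with right multiplication by unit-norm ring elements, which turns each lattice vector into $D_\mathsf{r}$ equinorm, $\R$-linearly independent copies. You in fact go further than the paper in one useful respect: the paper only exhibits the $D_\mathsf{r}$-tuples of independent equinorm vectors (essentially your upper bound) and then asserts the repetition of the minima, whereas you close the argument with the explicit lower bound via the dimension count $\dim_\R = D_\mathsf{r}\cdot\dim_\I$ applied to the span of the lattice points inside a ball of radius $\mu<\mu_k$ --- which is what actually pins $\mu_k$ to the index block $D_\mathsf{r}(k-1)+1,\dots,D_\mathsf{r}k$. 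One small correction: for $\I=\E$ the element $\i$ does not lie in the ring, so your unit set $\{1,\i\}$ must be replaced there by, e.g., $\{1,\omega\}$ with $\omega=\mathrm{e}^{\frac{2\pi}{3}\i}$; these are still $\R$-independent elements of modulus one in $\E$, so nothing else changes (for $\Hu$ the units $1,\i,\j,\k$ you chose are fine since $\L\subset\Hu$). Everything else, including the right-module convention that keeps the division-ring independence argument valid over $\Ha$, is sound.
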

\begin{proof}
	We start with the complex case ($D_\mathsf{r}=2$). Given $\ve{G}_\mathsf{r}$ according
	to~\eqref{eq:equiv_real_compl}, pairs of orthogonal (and, thus, linearly independent) column
	vectors occur at the indices $k$ and $k+K$, which additionally posses the norms
	$\|\ve{g}_{\mathsf{r},k}\|=\|\ve{g}_{\mathsf{r},k+K}\|=\|\ve{g}_{k}\|$, $k=1,\dots,K$.
	The same is obviously valid for the matrix
	$\ve{G}_{\mathsf{r,\G}}=\ve{G}_\mathsf{r}\ve{I}_{2K}$, but also for the matrix
	$\ve{G}_{\mathsf{r,\E}}=\ve{G}_\mathsf{r}\ve{G}_{\E}$: Since $\ve{G}_{\E}$
	from~\eqref{eq_equiv_real_E} is a non-singular matrix,
	$\rank(\ve{G}_{\mathsf{r,\E}})=\rank(\ve{G}_{\mathsf{r}})$, i.e., the columns of
	$\ve{G}_{\mathsf{r,\E}}$ are still independent. Concerning its column norms, the left half
	(index $1,\dots,K$) does not change in comparison to $\ve{G}_{\mathsf{r}}$ due to the identity
	part in $\ve{G}_{\E}$. Regarding the right half of the columns ($K+1,\dots,2K$), we obtain the
	squared norms
	$\|\ve{g}_{\mathsf{r},\E,k}\|^2=\frac{1}{4}\|\ve{g}_{k}^{(1)}\|^2
		+\frac{3}{4}\|\ve{g}_{k}^{(2)}\|^2+\frac{1}{4}\|\ve{g}_{k}^{(2)}\|^2
		+\frac{3}{4}\|\ve{g}_{k}^{(1)}\|^2 =
		\|\ve{g}_{k}^{(1)}+\ve{g}_{k}^{(2)}\|^2=\|\ve{g}_{k}\|^2$,
	i.e., these norms do not change either.\footnote{%
			Interpretation: The right part in~\eqref{eq_equiv_real_E} represents the
			\emph{Eisenstein unit}
			$\omega=\mathrm{e}^{\frac{2\pi}{3}\mathrm{i}}=-\frac{1}{2} + \frac{\sqrt{3}}{2}\i$
			with $|\omega|=1$, i.e., only a rotation in the complex plane is performed. The left
			(identity) part corresponds to the $0$\textsuperscript{th} root of unity
			$\mathrm{e}^0=1$.%
	}
	Hence, each lattice vector in $\Lat(\ve{G}_\mathsf{r,\I})$ has a linearly independent
	counterpart with the same length; both of them isomorphically represent one lattice vector
	of $\Lat(\ve{G})$ (with the same norm).	As a consequence, for $\ve{G}_\mathsf{r,\I}$, pairs of
	linearly independent lattice vectors
	$\ve{\lambda}_{\mathrm{m},k}$ and $\ve{\lambda}_{\mathrm{m},k+1}$ are obtained that yield
	successive minima with the same value, i.e., $\mu_{k}=\mu_{k+1}$, $k=1,3,\dots,2K-1$.
	
	For quaternion-valued matrices ($D_\mathsf{r}=4$), the proof is similar:
	In $\ve{G}_\mathsf{r}$ according to~\eqref{eq:equiv_real_quat},	orthogonal vectors with
	$\|\ve{g}_{\mathsf{r},k}\|=\|\ve{g}_{\mathsf{r},k+K}\|=\|\ve{g}_{\mathsf{r},k+2K}\|
		=\|\ve{g}_{\mathsf{r},k+3K}\|=\|\ve{g}_{k}\|$
	are present. Given the ring of Lipschitz integers, the same holds for the matrix
	$\ve{G}_\mathsf{r,\L}=\ve{G}_\mathsf{r}\ve{I}_{4K}$. Considering lattices over Hurwitz
	integers, we have $\ve{G}_\mathsf{r,\Hu}=\ve{G}_\mathsf{r}\ve{G}_{\Hu}$, where
	$\rank(\ve{G}_{\mathsf{r,\Hu}})=\rank(\ve{G}_{\mathsf{r}})$. The non-singular matrix
	$\ve{G}_{\Hu}$ defined in~\eqref{eq:equiv_real_hur} consists of an identity part for the
	column indices $1,\dots,3K$. For the rightmost part ($3K+1,\dots,4K$), we obtain the squared
	norms
	$\|\ve{g}_{\mathsf{r},\Hu,k}\|^2=4\cdot \frac{1}{4}\cdot
		\|\ve{g}_k^{(1)}+\ve{g}_k^{(2)}+\ve{g}_k^{(3)}+\ve{g}_k^{(4)}\|^2 =\|\ve{g}_k\|^2$,
	i.e., they remain the same.\footnote{%
		Interpretation: The rightmost part in~\eqref{eq:equiv_real_hur} represents the
		\emph{Hurwitz unit} $\frac{1}{2}(1+\i+\j+\k)$ with $|\frac{1}{2}(1+\i+\j+\k)|=1$, i.e.,
		only a rotation in quaternionic space is performed. The left (identity) part corresponds
		to the units $1$, $\i$, and $\j$.%
	}
	Hence, $\mu_{k}=\mu_{k+1}=\mu_{k+2}=\mu_{k+3}$, $k=1,5,\dots,4K-3$, is obtained.
\end{proof}
\begin{remark}
	Given lattices over the Gaussian integers, this isomorphism has been recognized	earlier, e.g.,
	in~\cite{Ding:15}. To the best of our knowledge, all other integer rings have not been
	considered before.
	
	In the quaternion-valued case, an isomorphism based on the equivalent complex-valued
	representation~\eqref{eq:equiv_compl_quat} can be derived in an analogous way. Then, the
	successive minima are repeated two times in the complex description.
\end{remark}

Due to Theorem~\ref{th:smp}, the successive minima and the related lattice points and integer
vectors can isomorphically be determined by solving the problem for the equivalent real-valued
representation. Hence, a permutation of the real or complex integer vectors can be found such
that an integer transformation matrix $\ve{T}_\mathsf{r}$ is formed that possesses the particular
structure defined in~\eqref{eq:equiv_real_compl} or \eqref{eq:equiv_real_quat}. Then, this matrix
can be reconverted to a complex or quaternion-valued representation, see also
\cite[Example~4.3]{Stern:19}. Both the ``direct'' determination of the successive minima over
$\C$ or $\Ha$, and their ``indirect'' determination over $\R$, finally lead to the same result.
Please note that, given lattices over Eisenstein or Hurwitz integers, a reconversion is required
at the end to compensate for the matrices $\ve{G}_\E$ and $\ve{G}_{\Hu}$ in~\eqref{eq_equiv_real_E}
and~\eqref{eq:equiv_real_hur}, respectively. This reconversion process will be described below
(and, additionally, in Appendix~\ref{app:helperSMP}).

\subsection{Generalized Determination of the Successive Minima}

For the determination of the successive minima, a special $K$-dimensional variant of the
shortest-vector problem has to be solved. It is well-known that already the classical
shortest-vector problem is NP-hard, i.e., that the computational complexity grows exponentially
with the dimension $K$. Nevertheless, at least for small dimensions, algorithms have been
proposed which can efficiently solve the shortest-vector problem. The most prominent one is the
so-called \emph{sphere decoder} \cite{Agrell:02}.

To solve the SMP, several algorithms have been proposed within the last
few years \cite{Ding:15,Fischer:16,Wen:19}. One possible strategy applied in \cite{Ding:15} is
to solve the shortest-vector problem $K$ times. Another strategy employed in
\cite{Fischer:16,Wen:19} is to solve an adapted variant thereof once in the beginning, afterwards
only operating on the initial result.

In this work, we address the list-based approach initially proposed in \cite{Fischer:16} which
has been shown to perform well w.r.t.\ runtime behavior if moderate dimensions are present
($K < 20$ in the real-valued case), see also the comparison in \cite{Wen:19}. In particular, in
that approach, the  sphere decoder \cite{Agrell:02} is initially applied to generate a list that
contains all lattice points within a hyperball of a predefined search radius. Then, among those
candidates, the shortest linearly independent ones are selected. 

\begin{algorithm}[tb]{
		\small
		\caption{\label{alg:smp} List-Based Determination of Successive Minima.} %
		$[\ve{G}_{\mathrm{tra}},\ve{T}] = \textsc{SMP}(\ve{G},\I)$
		\begin{algorithmic}[1]
			\State {$\ve{G}_{\mathsf{r},\I}=\textsc{RingToZ}(\ve{G},\I)$ } 
				{\color{\commentcolor}\Comment{real-valued representation}}
			\State {$\big[\ve{G}_\mathrm{red},\ve{T}_{\mathrm{LLL}}\big]=
				\textsc{LLL}(\ve{G}_{\mathsf{r},\I},1,\Z)$ }
				{\color{\commentcolor}\Comment{reduced basis}}
			\State {$\ve{C}_\mathrm{t}=\textsc{ListSphereDecoder}(
				\ve{G}_\mathrm{red},\max_k \|\ve{g}_{\mathrm{red},k}\|^2)$}
			\State {$\ve{C}=\ve{T}_{\mathrm{LLL}}\ve{C}_\mathrm{t}$}
				{\color{\commentcolor}\Comment{convert to original basis}}
			\State {$\ve{C}_\mathrm{u}=\textsc{ZToRing}(\ve{C},\I)$}
				{\color{\commentcolor}\Comment{go back to ring $\I$}}
			\State {$\ve{C}_\mathrm{s}=\textsc{Sort}(\ve{C}_\mathrm{u},
				\ve{G}\ve{C}_\mathrm{u})$}
				{\color{\commentcolor}\Comment{sort w.r.t.\ norm}}
			\State {$\ve{i}=\textsc{RowEchelon}(\ve{C}_\mathrm{s})$}
				{\color{\commentcolor}\Comment{indices of row-echelon form}}
			\State {$\ve{T}=\big[\ve{c}_{i_1},\dots,\ve{c}_{i_K}\big]$}
				{\color{\commentcolor}\Comment{shortest independent vectors}}
			\State {$\ve{G}_\mathrm{tra}=\ve{G}\ve{T}$}
				{\color{\commentcolor}\Comment{transformed generator matrix}}
		\end{algorithmic}
	}
\end{algorithm}

The generalized concept is provided in Algorithm~\ref{alg:smp}. As mentioned above, a
list-based variant of the sphere decoder is initially applied. As the sphere decoder
\cite{Agrell:02} in combination with Schnorr--Euchner enumeration \cite{Schnorr:94} only
operates over real numbers, the generator matrix $\ve{G}$ has to be converted to its
equivalent real-valued representation (given the integer ring $\I$). This is done with the
procedure \mbox{\textsc{RingToZ}} which is listed in Algorithm~\ref{alg:toZ} in
Appendix~\ref{app:helperSMP}. For the list sphere decoder, an initial search radius has to
be provided. The naive approach would be to use the maximum (squared) column norm of
$\ve{G}_{\mathsf{r},\mathbb{I}}$ as the (squared) radius, since the (squared) solution to the
SMP can never be worse than that value. However, the search radius (and the related
complexity) can significantly be decreased by applying a reduced basis via the LLL algorithm
(Line~2) instead, using the quality parameter $\delta=1$. Since this call has a low complexity
\cite{Akhavi:03}, it is negligible in comparison to the call of the list sphere decoder,
which is subsequently applied \cite[Algorithm~\textsc{AllClosest\-Points}]{Agrell:02}.
It results in a matrix of integer candidate vectors
$\ve{C}_\mathrm{t}\in\Z^{D_\mathsf{r} K \times N_\mathrm{c}}$, where $N_\mathrm{c}$ denotes
the list size. A list of respective candidate vectors w.r.t.\ the unreduced basis
$\ve{G}_{\mathsf{r},\mathbb{I}}$ is subsequently calculated by the multiplication with
$\ve{T}_\mathrm{LLL}$ (Line~4). The integer candidate vectors over the original ring $\I$ are
finally obtained by the procedure \textsc{ZToRing} which is listed in Algorithm~\ref{alg:fromZ}
in Appendix~\ref{app:helperSMP}. In particular, in that procedure, the original complex- or
quaternion-valued representations are reconstructed from the real-valued ones. In Line~6 of
Algorithm~\ref{alg:smp}, the candidate vectors are then sorted in ascending order w.r.t.\ their
norms. This can be done by a standard sorting algorithm \cite{Press:07}. In the procedure
\textsc{RowEchelon}, which is listed in Appendix~\ref{app:helperSMP} (Algorithm~\ref{alg:re}),
the matrix of sorted candidate vectors is transformed to row-echelon form. At the particular
indices $\ve{i}=\big[i_1,\dots,i_K\big]$ where a new dimension is established (``steps'' in the
row-echelon form), the vector resulting in the $k$\textsuperscript{th} successive minimum is
found as it leads to the shortest vector that is independent from the previous $k-1$ ones.
The related transformation matrix $\ve{T}$ is formed in Line~8, and the respective transformed
generator matrix $\ve{G}_\mathrm{tra}$ finally in Line~9.

The reconversion to the original ring $\I$ (as performed in Line~5 in Algorithm~\ref{alg:smp})
can alternatively be performed \emph{after} the calculation of the matrices $\ve{T}$ and
$\ve{G}_\mathrm{tra}$ (defined over $\Z$ and $\R$) takes place. However, then, for complex-
and quaternion-valued lattices, the calculation of the row-echelon form has to be performed
over real numbers with the equivalent $2K\times 2N_\mathrm{c}$ and $4K\times 4N_\mathrm{c}$
representation $\ve{C}$, respectively.

\section{Generalized Quality Bounds and Asymptotic Computational Complexity}	\label{sec:bounds}

\noindent Based on the generalized criteria and algorithms discussed previously, quality bounds
are derived and compared to each other in this section. In addition, the asymptotic complexity of
both above-mentioned (generalized) algorithms is evaluated.

\subsection{Bounds on the Norms}

The norms of the basis vectors are suited quantities to assess the quality of a lattice basis.
Since the respective successive minima are given as the norms of the shortest independent vectors
in the particular lattice, they serve as lower bounds on the norms resulting from any
lattice-basis-reduction scheme. 

To solve the SIVP~\eqref{eq:sivp} and the SBP~\eqref{eq:sbp}, respectively, it is required that
the \emph{maximum} of the $K$ norms has to be as small as possible. However, as stated in
\cite[p.~35]{Nguyen:09}, it is generally not possible to derive bounds for the
$K$\textsuperscript{th} successive minimum and any other of them except from the first. In
particular, $\mu_2,\dots,\mu_K$ can become arbitrarily large in comparison to the lattice volume.
It is quite obvious that the same holds for the norms of the basis vectors
$\ve{g}_2,\dots,\ve{g}_K$. Nevertheless, for the first successive minimum $\mu_1$ and the related
basis vector $\ve{g}_1$, bounds can in general be given.

\subsubsection{First Successive Minimum}

Given a real-valued lattice ($\I=\Z$) with generator matrix $\ve{G}\in\R^{N \times K}$, the
squared first successive minimum is bounded according to \emph{Minkowski's first theorem}
\cite{Hermite:50,Nguyen:09,Zhang:12}%
\begin{equation}	\label{eq:sm_bound_Z}
	\mu_{\Z,1}^2 \leq \eta_K\, \vol^{\frac{2}{K}}(\Lat(\ve{G})) \; ,
\end{equation}
where the factor $\eta_K$, which depends on the particular dimension, is called Hermite's
constant. It is only known for dimensions up to $K=8$ as well as $K=24$,
cf.\ \cite[Table on p.~33]{Nguyen:09}. However, it has been shown that Hermite's constant can
be upper-bounded by the term \cite{Blichfeldt:29}%
\begin{equation}	\label{eq:hermite}
	\eta_K \leq \frac{2}{\pi}\, \Gamma\left(2+\frac{K}{2}\right)^{\frac{2}{K}} \; ,
\end{equation}
where $\Gamma(x)=(x-1)!$ denotes the Gamma function.

Minkowski's first theorem is generalized in the following Theorem.
\begin{thm}[Generalized Bound on the First Successive Minimum]	\label{th:fsm}
	In lattices with $N \times K$ generator matrix $\ve{G}$ that are defined over the
	integer ring $\I$ and for which $D_\mathsf{r}$ denotes the number of real-valued components
	per scalar, the first successive minimum is bounded by%
	\begin{equation}	\label{eq:sm_bound}
		\mu_{\I,1}^2 \leq 
			\eta_{D_\mathsf{r} K} |\det\nolimits(\ve{G}_{\I})|^{\frac{2}{D_\mathsf{r} K}}
			\vol^{\frac{2}{K}}(\Lat(\ve{G})) \; .
	\end{equation}
\end{thm}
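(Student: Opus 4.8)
The plan is to reduce the generalized bound to the real-valued Minkowski bound~\eqref{eq:sm_bound_Z} by passing through the equivalent real-valued representation of the lattice. The key observation is that Theorem~\ref{th:smp} already tells us that the first successive minimum over $\I$ equals the first successive minimum of the equivalent real-valued generator matrix $\ve{G}_{\mathsf{r},\I} = \ve{G}_\mathsf{r}\ve{G}_\I$, i.e., $\mu_{\I,1} = \mu_{\R,1}$ where the latter refers to the $D_\mathsf{r}N \times D_\mathsf{r}K$ real lattice $\Lat(\ve{G}_{\mathsf{r},\I})$. This is the crucial link: it converts a statement about a complex/quaternionic lattice into one about an ordinary integer lattice of dimension $D_\mathsf{r}K$, to which classical Minkowski theory applies directly.

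First I would apply~\eqref{eq:sm_bound_Z} to the real lattice of rank $D_\mathsf{r}K$, obtaining
\begin{equation*}
	\mu_{\R,1}^2 \leq \eta_{D_\mathsf{r} K}\,
		\vol^{\frac{2}{D_\mathsf{r} K}}(\Lat(\ve{G}_{\mathsf{r},\I})) \; .
\end{equation*}
The second step is to rewrite the volume of the real lattice in terms of the volume of the original lattice over $\I$ and the determinant of the ring generator matrix $\ve{G}_\I$. Using $\ve{G}_{\mathsf{r},\I} = \ve{G}_\mathsf{r}\ve{G}_\I$ together with the multiplicativity of determinants and the determinant identity~\eqref{eq:det_real_compl} (for $D_\mathsf{r}=2$) or~\eqref{eq:det_real_quat} (for $D_\mathsf{r}=4$), one relates $\det(\ve{G}_\mathsf{r}^\T\ve{G}_\mathsf{r})$ to $\det(\ve{G}^\H\ve{G})$. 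Concretely, I expect
\begin{equation*}
	\vol(\Lat(\ve{G}_{\mathsf{r},\I})) =
		|\det(\ve{G}_\I)| \cdot \vol^{D_\mathsf{r}}(\Lat(\ve{G})) \; ,
\end{equation*}
since the real representation raises the $\I$-volume to the $D_\mathsf{r}$ power (each complex or quaternionic dimension contributes $D_\mathsf{r}$ real dimensions of the same scale) and the factor $\ve{G}_\I$ contributes its own determinant. Substituting this into the Minkowski bound and collecting the exponents---$\vol^{\frac{2}{D_\mathsf{r}K}}(\Lat(\ve{G}_{\mathsf{r},\I}))$ becomes $|\det(\ve{G}_\I)|^{\frac{2}{D_\mathsf{r}K}}\,\vol^{\frac{2}{K}}(\Lat(\ve{G}))$---yields exactly~\eqref{eq:sm_bound}, after identifying $\mu_{\I,1}=\mu_{\R,1}$.

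The main obstacle I anticipate is bookkeeping the volume relation precisely, particularly verifying the claimed power $D_\mathsf{r}$ on $\vol(\Lat(\ve{G}))$ and the correct appearance of $|\det(\ve{G}_\I)|$ without spurious scaling factors. For $\I\in\{\G,\L\}$ the matrix $\ve{G}_\I$ is the identity, so $\det(\ve{G}_\I)=1$ and the bound collapses to pure Minkowski scaling; the delicate cases are $\E$ and $\Hu$, where $\ve{G}_\E$ and $\ve{G}_\Hu$ from~\eqref{eq_equiv_real_E} and~\eqref{eq:equiv_real_hur} carry nontrivial determinants ($\frac{\sqrt 3}{2}$ per complex dimension, and the $\ve{D}_4$ scaling factor per quaternionic dimension). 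I would check that the determinant identities~\eqref{eq:det_real_compl} and~\eqref{eq:det_real_quat} are used in the correct direction and that the Gramian volume $\vol(\Lat(\ve{G}))=\sqrt{\det(\ve{G}^\H\ve{G})}$ from~\eqref{eq:volume} maps cleanly onto the real Gramian determinant. Once that single volume-transfer identity is nailed down, the rest is a routine substitution and exponent arithmetic.
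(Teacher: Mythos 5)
Your proposal is correct and follows essentially the same route as the paper's proof: invoke Theorem~\ref{th:smp} to identify $\mu_{\I,1}$ with the first successive minimum of the real representation $\ve{G}_{\mathsf{r},\I}=\ve{G}_\mathsf{r}\ve{G}_\I$, apply Minkowski's bound~\eqref{eq:sm_bound_Z} in dimension $D_\mathsf{r}K$, and transfer the volume via $\vol(\Lat(\ve{G}_{\mathsf{r},\I}))=|\det(\ve{G}_\I)|\,\vol^{D_\mathsf{r}}(\Lat(\ve{G}))$, which is exactly the factorization $\det(\ve{G}_{\mathsf{r},\I}^\T\ve{G}_{\mathsf{r},\I})=\det(\ve{G}_\I)^2\det(\ve{G}_\mathsf{r}^\T\ve{G}_\mathsf{r})$ combined with~\eqref{eq:det_real_compl}/\eqref{eq:det_real_quat} that the paper uses in~\eqref{eq:proofSM}. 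Your anticipated volume-transfer identity is the correct one, so no gap remains.
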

\begin{proof}
	According to Theorem~\ref{th:smp}, we have $\mu_{\I,1}^2=\mu_{\mathsf{r},\Z,1}^2$, the latter
	denoting the first successive minimum of the equivalent $D_\mathrm{r}N \times D_\mathrm{r}K$
	real-valued representation $\ve{G}_\mathsf{r}$.	Moreover,
	$\vol^{\frac{2}{D_\mathsf{r}K}}(\Lat(\ve{G}_\mathsf{r}))=\vol^{\frac{2}{K}}(\Lat(\ve{G}))$,
	cf.~\eqref{eq:det_real_compl}, \eqref{eq:det_real_quat}, and~\eqref{eq:volume}. Hence,
	applying Minkowski's first theorem~\eqref{eq:sm_bound_Z} w.r.t.\ $\mu_{\mathsf{r},\Z,1}^2$, we
	obtain%
	\begin{equation}	 \label{eq:proofSM}
		\begin{aligned}
			\mu_{\I,1}^2 &\leq \eta_{D_\mathsf{r}K}\, 
				\vol^{\frac{2}{D_\mathsf{r}K}}(\Lat(\ve{G}_{\mathsf{r},\I})) \\
			&= \eta_{D_\mathsf{r}K} 
			\det\nolimits^{\frac{1}{D_\mathsf{r}K}}
				(\ve{G}_{\mathsf{r},\I}^\T\ve{G}_{\mathsf{r},\I}^{})\\
			&= \eta_{D_\mathsf{r}K}\det\nolimits^{\frac{1}{D_\mathsf{r}K}}
				(\ve{G}_{\I}^\T\ve{G}_{\mathsf{r}}^\T\ve{G}_{\mathsf{r}}^{}\ve{G}_{\I}^{})\\
			&= \eta_{D_\mathsf{r}K}|\det\nolimits(\ve{G}_{\I})|^{\frac{2}{D_\mathsf{r}K}}
				\det\nolimits^{\frac{1}{D_\mathsf{r}K}}(\ve{G}_{\mathsf{r}}^\T\ve{G}_{\mathsf{r}})\\
			&= \eta_{D_\mathsf{r}K}|\det\nolimits(\ve{G}_{\I})|^{\frac{2}{D_\mathsf{r}K}} 
				\vol^{\frac{2}{K}}(\Lat(\ve{G}))
		\end{aligned}
	\end{equation}
	since the equivalent real-valued generator matrix of the integer ring $\I$, denoted as $\ve{G}_\I$,
	is a non-singular matrix.
\end{proof}
\begin{remark}
	Given lattices over $\Z$ ($D_\mathsf{r}=1$), $\G$ ($D_\mathsf{r}=2$), or $\L$ ($D_\mathsf{r}=4$),
	the matrix $\ve{G}_\I$ is an identity matrix, i.e., $\det(\ve{G}_\I)=1$. For complex lattices
	defined over $\E$ we have $\det(\ve{G}_\E)=({\sqrt{3}}/{2})^K$, cf.~\eqref{eq_equiv_real_E}, and
	for quaternionic ones over $\Hu$, $\det(\ve{G}_\Hu) = ({1}/{2})^K$, see~\eqref{eq:equiv_real_hur}.
	Hence, we obtain $|\det\nolimits(\ve{G}_{\E})|^{\frac{2}{D_\mathsf{r}K}}=\sqrt{3}/2$ and
	$|\det\nolimits(\ve{G}_{\Hu})|^{\frac{2}{D_\mathsf{r}K}}=1/\sqrt{2}$, respectively.
	
	Noteworthy, for the set of Eisenstein integers, \eqref{eq:sm_bound} is equivalent to the bound
	that has been derived for the special case of imaginary quadratic fields in \cite{Lyu:20}.
\end{remark}

When comparing the two types of complex lattices considered in this work, it can be recognized that
the bound~\eqref{eq:sm_bound} is smaller for $\I=\E$ than for $\I=\G$ since
$\frac{\sqrt{3}}{2}\approx0.866$. Hence, in general, the first successive minimum is expected to be
smaller. Given quaternion-valued lattices, the bound for $\Hu$ is lowered by a factor of
$\frac{1}{\sqrt{2}}\approx0.707$ in comparison to the bound for $\L$.

\subsubsection{First Basis Vector of an LLL Basis}	\label{subsec:lll_bounds}

In the initial publication on LLL reduction over $\Z$ \cite{Lenstra:82}, it has been shown that the
squared norm of the first basis vector can be bounded. Below, this bound is generalized for lattices
over integer rings~$\I$.
\begin{thm}[Generalized Bound on the First Vector Norm of an LLL Basis]
	Given a lattice with $N \times K$ generator matrix $\ve{G}$ which is LLL-reduced over the
	particular Euclidean integer ring $\I$, the first basis vector is bounded as%
	\begin{equation}	\label{eq:LLLbound_general}
		\mu_{\I,1}^2\leq \|\ve{g}_{\I,1}\|^2 \leq 
		\left(\frac{1}{\delta - \epsilon_{\I}^2}\right)^{\frac{K-1}{2}} 
		\vol^{\frac{2}{K}}(\Lat(\ve{G})) \; ,
	\end{equation}
	where $\epsilon_{\I}^2$ denotes the particular maximum squared quantization error.
\end{thm}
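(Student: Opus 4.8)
The plan is to mirror the classical argument from \cite{Lenstra:82} while abstracting the single place where the real-valued quantization error $\frac{1}{4}$ enters, replacing it with the ring-dependent quantity $\epsilon_\I^2$. The key observation is that the entire bound rests on the Lov\'asz condition~\eqref{eq:lovasz_general} together with size reduction, both of which are stated for a general Euclidean ring $\I$ in Definition~\ref{eq:generalLLL}. First I would unwind the Lov\'asz condition in the form $\|\ve{q}_k\|^2 \geq (\delta - |r_{k-1,k}|^2)\|\ve{q}_{k-1}\|^2$, and use the size-reduction guarantee $|r_{k-1,k}| \leq \epsilon_\I$ (established in the proof of Definition~\ref{eq:generalLLL}, where $r_{l,k}$ is the small remainder of a division by $v=1$). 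Substituting the worst case $|r_{k-1,k}|^2 \leq \epsilon_\I^2$ yields the chain
\begin{equation}
	\|\ve{q}_{k-1}\|^2 \leq \frac{1}{\delta - \epsilon_\I^2}\,\|\ve{q}_k\|^2 \; , \qquad k=2,\dots,K \; ,
\end{equation}
which is the crucial monotonicity estimate. Because the GSO columns $\ve{q}_k$ are orthogonal, we have $\|\ve{g}_{\I,1}\| = \|\ve{q}_1\|$ (the first basis vector coincides with its Gram--Schmidt vector), so iterating the above inequality gives $\|\ve{q}_1\|^2 \leq (\delta - \epsilon_\I^2)^{-(k-1)}\|\ve{q}_k\|^2$ for every $k$.

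Next I would relate $\|\ve{q}_1\|^2$ to the lattice volume. Taking the product of the iterated inequalities over $k=1,\dots,K$ and using that the volume factors through the orthogonal GSO as $\vol(\Lat(\ve{G}))=\prod_{k=1}^{K}\|\ve{q}_k\|$ (which follows from~\eqref{eq:volume} and the triangular structure of $\ve{R}$ with unit diagonal), one obtains
\begin{equation}
	\|\ve{q}_1\|^{2K} \leq \left(\frac{1}{\delta-\epsilon_\I^2}\right)^{\sum_{k=1}^{K}(k-1)} \prod_{k=1}^{K}\|\ve{q}_k\|^2 = \left(\frac{1}{\delta-\epsilon_\I^2}\right)^{\frac{K(K-1)}{2}}\vol^2(\Lat(\ve{G})) \; .
\end{equation}
Taking the $K$\textsuperscript{th} root and recalling $\|\ve{g}_{\I,1}\|=\|\ve{q}_1\|$ delivers exactly the stated upper bound with exponent $\frac{K-1}{2}$. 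The lower bound $\mu_{\I,1}^2 \leq \|\ve{g}_{\I,1}\|^2$ is immediate, since $\mu_{\I,1}$ is by definition the shortest nonzero lattice vector length and $\ve{g}_{\I,1}$ is one particular lattice vector.

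The main subtlety — and the step I would verify most carefully — is the inequality $|r_{k-1,k}| \leq \epsilon_\I$ in the \emph{generalized} setting. In the real case this is the familiar $|r_{k-1,k}|\leq\frac{1}{2}$, but for $\I\in\{\G,\E,\Hu\}$ the off-diagonal entry lies in the Voronoi cell of the ring and its magnitude is controlled precisely by the maximum squared quantization error $\epsilon_\I^2$ recorded earlier in Section~\ref{sec:extensions}. One must also confirm that $\delta - \epsilon_\I^2 > 0$ so that the denominator is positive and the geometric decay is genuine; this is exactly the admissibility range $\delta\in(\epsilon_\I^2,1]$ guaranteed by Definition~\ref{eq:generalLLL}, which is why the LLL reduction is well-defined over $\I$ in the first place. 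Everything else is a routine telescoping of the orthogonal Gram--Schmidt norms, identical in structure to the classical proof and requiring no commutativity, so it transfers verbatim to the complex and quaternionic cases.
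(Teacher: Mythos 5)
Your proposal is correct and follows essentially the same route as the paper's own proof: iterate the Lov\'asz condition with the size-reduction bound $|r_{k-1,k}|^2\leq\epsilon_\I^2$ to obtain $\|\ve{q}_1\|^2\leq(\delta-\epsilon_\I^2)^{-(k-1)}\|\ve{q}_k\|^2$, take the product over $k$, and use $\prod_k\|\ve{q}_k\|^2=\vol^2(\Lat(\ve{G}))$ before extracting the $K$\textsuperscript{th} root. The only difference is that you make explicit the justification $|r_{k-1,k}|\leq\epsilon_\I$ and the positivity of $\delta-\epsilon_\I^2$, which the paper leaves implicit by referencing Definition~\ref{eq:generalLLL}.
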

\begin{proof}
	The lower bound is obvious as $\mu_{\I,1}$ is the norm of the shortest non-zero lattice vector.
	The upper bound is, similar to~\cite{Lenstra:82}, derived as follows: Due to the Lov{\'a}sz
	condition~\eqref{eq:lovasz_general},%
	\begin{equation}
		\|\ve{q}_k\|^2 \leq \frac{1}{\delta - \epsilon_{\I}^2} \, \|\ve{q}_{k+1}\|^2\; .
	\end{equation}
	Hence, since in the (unsorted) GSO $\ve{q}_1=\ve{g}_1$,%
	\begin{equation}
			\|\ve{g}_{\I,1}\|^2 \leq
			\left(\frac{1}{\delta - \epsilon_{\I}^2}\right)^{k-1} \|\ve{q}_k\|^2 \; ,
			\qquad k=1,\dots,K \, .
	\end{equation}
	Taking the product over all indices $K$, we obtain%
	\begin{equation}
		\|\ve{g}_{\I,1}\|^{2K} \leq \left(\frac{1}{\delta - \epsilon_{\I}^2}\right)^{\frac{K(K-1)}{2}}
			\prod_{k=1}^{K}\|\ve{q}_k\|^2 \; .
	\end{equation}
	Since $\prod_{k=1}^{K}\|\ve{q}_k\|^2=\vol^{2}(\Lat(\ve{G}))$, see, e.g., \cite{Lenstra:82},%
	\begin{equation}
		\|\ve{g}_{\I,1}\|^2 \leq \left(\frac{1}{\delta - \epsilon_{\I}^2}\right)^{\frac{K-1}{2}}
			\vol^{\frac{2}{K}}(\Lat(\ve{G})) \; .
	\end{equation}
\end{proof}
\begin{remark}
	The bounds derived in \cite{Lenstra:82} ($\Z$) and \cite{Gan:09} ($\G$) cover special cases in
	which the particular maximum squared quantization errors $\epsilon_\Z^2=\frac{1}{4}$ and
	$\epsilon_{\G}^2=\frac{1}{2}$ have already been incorporated. Via~\eqref{eq:LLLbound_general}, the
	additional bounds for $\E$ ($\epsilon_\E^2=\frac{1}{3}$) and $\Hu$ ($\epsilon_\Hu^2=\frac{1}{2}$)
	are readily obtained.
\end{remark}
\subsubsection{Comparison of the Bounds}

The above upper bounds on the first successive minimum and the first vector norm of an LLL basis
are---normalized to the volume of the lattice---illustrated in Fig.~\ref{fig:boundsNorms}. In the
upper plot, the complex case is considered. In the bottom plot, quaternion-valued matrices are
regarded. To this end, the exact value for Hermite's constant has been chosen for all dimensions where
it is exactly known; otherwise, the approximation~\eqref{eq:hermite} has been used. For LLL reduction,
the (optimal) parameter $\delta=1$ is assumed.

Considering the first successive minimum, the superiority of lattices over $\E$ and $\Hu$, in
comparison to lattices over $\G$ and $\L$, or their isomorphic real- and complex-valued representations
over $\Z$ and $\G$, respectively, is clearly visible. For $N \times 2$ matrices, in both the complex-
and quaternion-valued case, the LLL-reduced first vector norms are identical to the respective first
successive minima. This is not a surprise since an LLL reduction with $\delta=1$ is equivalent to a
Gaussian reduction, cf.~\cite{Yao:02,Gan:09}. It is quite obvious that this relation does not hold any
more if the LLL reduction is applied to the equivalent $2N\times 4$ or $4N \times 8$ isomorphic
representations. When increasing the dimensions, the loss of the LLL approach in comparison to the
successive minima becomes more and more apparent. Among all variants of LLL reduction, given $\delta=1$,
the QLLL one ($\I=\Hu$) performs the best, followed by the ELLL one ($\I=\E$). The RLLL approach
($\I=\Z$) already shows a significant gap, and the CLLL one ($\I=\G$) performs the worst.

\begin{figure}[t]
	\centerline{%
		\includegraphics{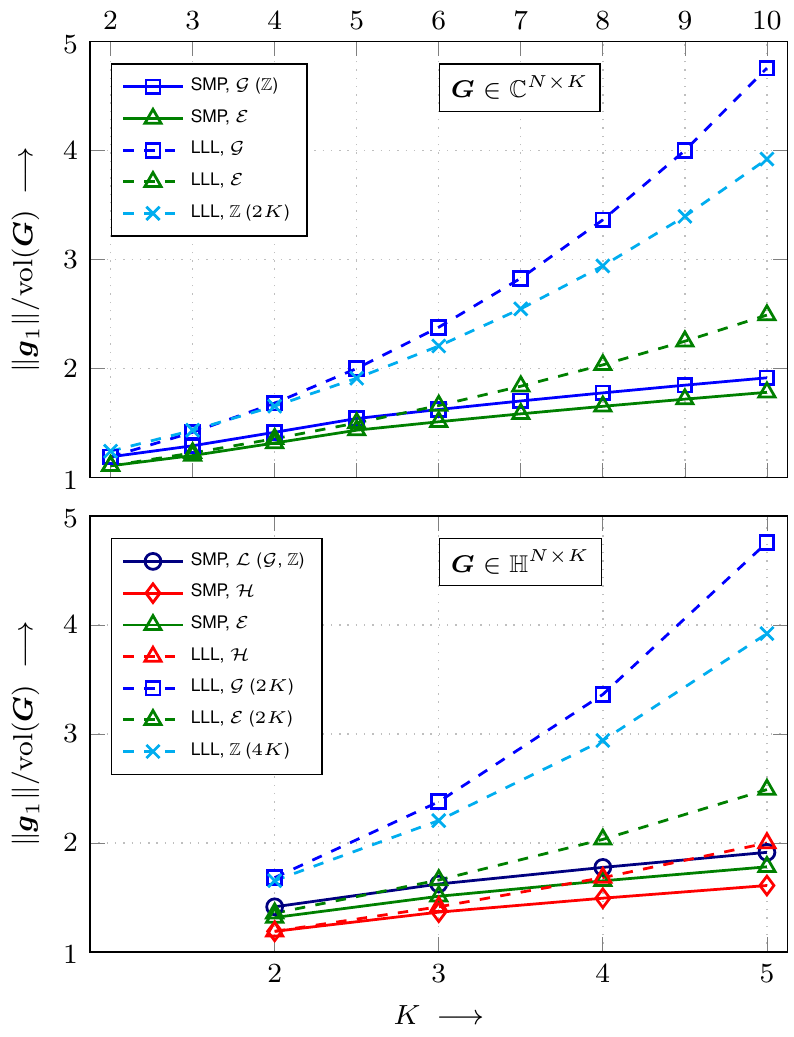}%
	}%
	\caption{\label{fig:boundsNorms}Upper bounds on the normalized first successive minimum
		(solid lines) and the normalized first vector norm of an LLL-reduced basis	with parameter
		$\delta=1$ (dashed lines) over the dimension $K$. Top: complex-valued lattices. Bottom:
		quaternion-valued lattices.}%
\end{figure}

In Fig.~\ref{fig:boundsNorms}, the quality parameter $\delta=1$ is considered. Still the question remains
which type of LLL reduction is---depending on $\delta$---the best-performing one in an asymptotic
manner. This comparison is expressed by the following Corollary.
\begin{cor}[Comparison of the First Vector Norm of LLL Variants]	\label{cor:norms}
	Given two variants of LLL reduction operating over lattices of rank $K_1$ and $K_2$, and related
	Euclidean integer rings with the maximum squared quantization errors
	$\epsilon_{1}^2$ and $\epsilon_{2}^2$, the first one has a smaller asymptotic ($K_1,K_2\to\infty$)
	upper bound~\eqref{eq:LLLbound_general} for the first basis vector than the second one if%
	\begin{equation}	\label{cor:norm}
		\frac{(\delta - \epsilon_{2}^2)^{{K_2-1}{}}}{({\delta - \epsilon_{1}^2)^{{K_1-1}{}}}} < 1 \; .
	\end{equation}
\end{cor}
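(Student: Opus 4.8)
The plan is to strip the two instances of the bound \eqref{eq:LLLbound_general} down to their volume-independent prefactors and then massage the resulting inequality into the claimed form by taking reciprocals and squaring. Concretely, for variant $i\in\{1,2\}$ the upper bound on the squared first basis vector norm reads $C_i\,\vol^{2/K_i}(\Lat(\ve{G}))$ with
\begin{equation}
	C_i=\left(\frac{1}{\delta-\epsilon_i^2}\right)^{\frac{K_i-1}{2}} \; .
\end{equation}
As in Fig.~\ref{fig:boundsNorms}, I would compare the bounds \emph{normalized to the lattice volume}, so that the common (or, for two representations of the same lattice, provably identical via \eqref{eq:det_real_compl}, \eqref{eq:det_real_quat} and \eqref{eq:volume}) factor $\vol^{2/K_i}(\Lat(\ve{G}))$ drops out and the asymptotic comparison of the two bounds reduces to the comparison of the exponential prefactors $C_1$ and $C_2$.

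Next I would convert $C_1<C_2$ into \eqref{cor:norm}. Since $\delta\in(\epsilon_i^2,1]$ forces $\delta-\epsilon_i^2>0$, every base is strictly positive, so all powers are well defined and positive. Rewriting $C_1<C_2$ as
\begin{equation}
	(\delta-\epsilon_1^2)^{-\frac{K_1-1}{2}}<(\delta-\epsilon_2^2)^{-\frac{K_2-1}{2}}
\end{equation}
and taking reciprocals---which reverses the inequality---yields $(\delta-\epsilon_2^2)^{(K_2-1)/2}<(\delta-\epsilon_1^2)^{(K_1-1)/2}$, i.e.\ the ratio $(\delta-\epsilon_2^2)^{(K_2-1)/2}/(\delta-\epsilon_1^2)^{(K_1-1)/2}$ is strictly less than one. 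Squaring both sides---legitimate because squaring is strictly increasing on the positive reals---clears the halved exponents and produces exactly \eqref{cor:norm}.

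The algebra itself is routine; the one point that needs care is the justification for discarding the volume factor. The cleanest route is to observe that the comparison is intended between two ring representations of the \emph{same} underlying lattice (for instance a lattice over $\Hu$ against its isomorphic complex representation over $\G$), for which the determinant identities of Sec.~\ref{sec:extensions} guarantee $\vol^{2/K_1}(\Lat(\ve{G}))=\vol^{2/K_2}(\Lat(\ve{G}))$ exactly; the remaining, purely exponential, dependence on $K_1,K_2$ is then what the qualifier ``asymptotic'' refers to. If instead one allows genuinely different lattices, the same conclusion is reached by normalizing each bound to its own lattice volume, exactly as plotted in Fig.~\ref{fig:boundsNorms}.
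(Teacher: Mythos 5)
Your proposal is correct and follows essentially the same route as the paper's derivation in Appendix~\ref{app:LLLbounds}: discard the volume factor by noting that $\vol^{2/(D_\mathsf{r}K)}$ of the equivalent real/complex representations equals $\vol^{2/K}(\Lat(\ve{G}))$ of the original lattice, then compare the remaining prefactors $(\delta-\epsilon_i^2)^{-(K_i-1)/2}$ using positivity of the bases. Your extra steps (taking reciprocals and squaring) simply spell out what the paper compresses into ``since both terms within the braces are positive.''
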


In Appendix~\ref{app:LLLbounds}, \eqref{cor:norm} is briefly derived and the particular comparisons are
listed for all considered LLL variants. The main conclusions are: Given complex matrices
($\ve{G}\in\C^{N\times K}$), the ELLL approach generally performs better than the CLLL one, and it also
leads to a smaller bound when compared with the RLLL one (within the relevant range of $\delta$). In
contrast, the CLLL algorithm performs worse than the RLLL one, except from the case when
$\delta=\frac{3}{4}$, see also~\cite{Gan:09}. Given quaternionic lattices ($\ve{G}\in\Ha^{N \times K}$),
it can be stated that the QLLL algorithm shows better bounds than its CLLL, ELLL, and RLLL counterparts
within the relevant range of $\delta$. Hence, its use is highly recommendable in the quaternion-valued
case.

\subsection{Bounds on the Product of the Norms}

Even though the above bounds are restricted to the first vectors, it is possible to gain some indirect
knowledge on all vectors. In particular, bounds on their \emph{product} can be given.

\subsubsection{Product of the Successive Minima}

According to \emph{Minkowski's second theorem} \cite{Siegel:98,Nguyen:09}, for $L=1,\dots,K$, the product
of the first $L$ squared successive minima is bounded~by%
\begin{equation}	\label{eq:boundPSMZ}
	\prod\nolimits_{l=1}^{L} \mu_{\Z,l}^2 \leq \eta_K^L\,\vol^{\frac{2L}{K}}(\Lat(\ve{G})) \; .
\end{equation}
Since, for $L=K$,%
\begin{equation}	\label{eq:boundPSMZK}
	\prod\nolimits_{k=1}^{K} \mu_{\Z,k}^2 \leq \eta_K^K\,\vol^{2}(\Lat(\ve{G})) \; ,
\end{equation}
a bound on the orthogonality defect~\eqref{eq:orth_defect} of the transformed matrix
$\ve{G}_\mathrm{tra,\Z}=\ve{G}\ve{T}_\Z$ containing the shortest independent lattice vectors
(cf.\ Sec.~\ref{sec:algorithms}) is readily obtained.
\begin{thm}[Bound on the Orthogonality Defect of the Shortest Independent Vectors
		in a Real-Valued Lattice]	\label{th:orthZ}
	The orthogonality defect of the transformed matrix $\ve{G}_\mathrm{tra}$ is upper bounded by%
	\begin{equation}	\label{eq:orthDZ}
		\Omega(\ve{G}_\mathrm{tra,\Z}) \leq \eta_K^\frac{K}{2} \; .
	\end{equation}
\end{thm}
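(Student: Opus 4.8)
The plan is to apply Minkowski's second theorem in the special case $L=K$, as stated in~\eqref{eq:boundPSMZK}, and combine it with the definition of the orthogonality defect~\eqref{eq:orth_defect}. The transformed matrix $\ve{G}_\mathrm{tra,\Z}=\ve{G}\ve{T}_\Z$ has as its columns the $K$ shortest independent lattice vectors, so its column norms are exactly the successive minima; that is, $\|\ve{g}_{\mathrm{tra},k}\|=\mu_{\Z,k}$ for $k=1,\dots,K$.

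First I would write out the orthogonality defect of $\ve{G}_\mathrm{tra,\Z}$ directly from~\eqref{eq:orth_defect}, namely
\begin{equation}
	\Omega(\ve{G}_\mathrm{tra,\Z}) =
		\frac{\prod_{k=1}^{K}\|\ve{g}_{\mathrm{tra},k}\|}{\vol(\Lat(\ve{G}_\mathrm{tra,\Z}))}
		= \frac{\prod_{k=1}^{K}\mu_{\Z,k}}{\vol(\Lat(\ve{G}))} \; ,
\end{equation}
where I use that the shortest independent vectors, though possibly spanning only a sublattice, are here taken so that $\ve{T}_\Z$ is the optimal full-rank transformation and $\vol(\Lat(\ve{G}_\mathrm{tra,\Z}))=\vol(\Lat(\ve{G}))$ for the successive-minima solution. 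Squaring and substituting the bound~\eqref{eq:boundPSMZK} yields
\begin{equation}
	\Omega^2(\ve{G}_\mathrm{tra,\Z})
		= \frac{\prod_{k=1}^{K}\mu_{\Z,k}^2}{\vol^{2}(\Lat(\ve{G}))}
		\leq \frac{\eta_K^K\,\vol^{2}(\Lat(\ve{G}))}{\vol^{2}(\Lat(\ve{G}))}
		= \eta_K^K \; .
\end{equation}
Taking the (positive) square root gives $\Omega(\ve{G}_\mathrm{tra,\Z})\leq\eta_K^{K/2}$, which is exactly~\eqref{eq:orthDZ}.

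The one point that requires care---and which I expect to be the main subtlety rather than a genuine obstacle---is the identification of the volume of the lattice spanned by $\ve{G}_\mathrm{tra,\Z}$ with that of the original lattice. As noted in Sec.~\ref{subsec:smp}, a full-rank integer matrix $\ve{T}$ may only define a sublattice with volume scaled by $|\det(\ve{T})|$, so in general $\vol(\Lat(\ve{G}\ve{T}_\Z))=|\det(\ve{T}_\Z)|\,\vol(\Lat(\ve{G}))$. For the \emph{statement} as given, one must take $\ve{G}_\mathrm{tra}$ to be the matrix whose columns achieve the successive minima; the orthogonality defect is invariant under the determinant scaling because the same factor $|\det(\ve{T}_\Z)|$ appears in both the numerator (implicitly, through the fact that Minkowski's bound~\eqref{eq:boundPSMZK} is stated relative to $\vol(\Lat(\ve{G}))$) and the denominator. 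Hence the quotient in the defect cancels cleanly, and no additional determinant factor survives into~\eqref{eq:orthDZ}. The rest is the routine manipulation above, so the proof is essentially a one-line consequence of Minkowski's second theorem once the defect is written out.
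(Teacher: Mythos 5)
Your overall strategy is the same as the paper's: write out the squared orthogonality defect and substitute Minkowski's second theorem with $L=K$, i.e.,~\eqref{eq:boundPSMZK}. The final bound is correct, but the way you handle the volume in the denominator contains a genuine error. You assert that $\vol(\Lat(\ve{G}_\mathrm{tra,\Z}))=\vol(\Lat(\ve{G}))$ ``for the successive-minima solution,'' and then, acknowledging the sublattice issue, claim that the factor $|\det(\ve{T}_\Z)|$ ``appears in both the numerator and the denominator'' and therefore cancels. Neither claim is right. The successive-minima vectors need not generate the full lattice, so $\ve{T}_\Z$ can be non-unimodular and the volumes can genuinely differ; and the numerator $\prod_k \mu_{\Z,k}$ carries no determinant factor at all --- Minkowski's theorem bounds the successive minima in terms of $\vol(\Lat(\ve{G}))$, the volume of the \emph{original} lattice, not of the sublattice. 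There is nothing to cancel.

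The correct resolution, which is what the paper does, is a one-sided inequality rather than a cancellation: since $\ve{T}_\Z$ is an integer matrix of full rank, $\sqrt{\det(\ve{T}_\Z^\H\ve{T}_\Z)}\geq 1$, hence $\vol(\Lat(\ve{G}_\mathrm{tra,\Z}))\geq\vol(\Lat(\ve{G}))$, and therefore
\begin{equation*}
	\Omega^2(\ve{G}_\mathrm{tra,\Z})
	=\frac{\prod_{k=1}^{K}\mu_{\Z,k}^2}{\vol^2(\Lat(\ve{G}_\mathrm{tra,\Z}))}
	\leq\frac{\prod_{k=1}^{K}\mu_{\Z,k}^2}{\vol^2(\Lat(\ve{G}))}
	\leq\eta_K^K \; .
\end{equation*}
The determinant factor does survive, but it only makes the defect smaller, so the bound still holds. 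With that one-line correction your proof coincides with the paper's.
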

\begin{proof}
	If $\ve{T}_\Z$ is unimodular, \eqref{eq:orthDZ} directly follows from~\eqref{eq:boundPSMZK}	since
	$\Lat(\ve{G}_\mathrm{tra,\Z})=\Lat(\ve{G})$ is valid. If $\ve{T}_\Z$ is non-unimodular,
	$\Lat(\ve{G}_\mathrm{tra,\Z})\neq\Lat(\ve{G})$. However, since then,
	$\sqrt{\det(\ve{T}_\Z^\H\ve{T}_\Z)}>1$, we have
	$\vol(\Lat(\ve{G}_\mathrm{tra,\Z})) > \vol(\Lat(\ve{G}))$. Consequently,%
	\begin{equation}
		\Omega^2(\ve{G}_\mathrm{tra,\Z}) =
		\frac{\prod\nolimits_{k=1}^{K} \mu_{\Z,k}^2}{\vol^2(\Lat(\ve{G}_\mathrm{tra,\Z}))}
		< \frac{\prod\nolimits_{k=1}^{K} \mu_{\Z,k}^2}{\vol^2(\Lat(\ve{G}))} \; ,
	\end{equation}
	i.e., \eqref{eq:orthDZ} still follows from~\eqref{eq:boundPSMZK}.
\end{proof}
\begin{thm}[Generalized Bound on the Product of the Successive Minima and the Related Orthogonality Defect]
	In lattices with $N \times K$ generator matrix $\ve{G}$ that are defined over the integer ring $\I$ and
	for which $D_\mathsf{r}$ denotes the number of real-valued components per scalar, the product of the
	first $L$ successive minima is bounded by%
	\begin{equation}
		\prod\nolimits_{l=1}^{L} \mu_{\I,l}^2 \leq 
			\eta_{D_\mathsf{r}K}^L  |\det\nolimits(\ve{G}_{\I})|^{\frac{2L}{D_\mathsf{r} K}} 
			\vol^{\frac{2L}{K}}(\Lat(\ve{G})) \; ,
	\end{equation}
	and the orthogonality defect by%
	\begin{equation}
		\Omega(\ve{G}_\mathrm{tra,\I})=
		\eta_{D_\mathsf{r} K}^\frac{K}{2} |\det\nolimits(\ve{G}_{\I})|^{\frac{1}{D_\mathsf{r}}} \; .
	\end{equation}
\end{thm}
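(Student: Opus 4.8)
The plan is to reduce the whole statement to the real-valued representation $\ve{G}_{\mathsf{r},\I}=\ve{G}_\mathsf{r}\ve{G}_\I$ and apply Minkowski's \emph{second} theorem there, exactly as Theorem~\ref{th:fsm} reduced the first-successive-minimum bound to Minkowski's \emph{first} theorem. The crucial input is Theorem~\ref{th:smp}: each successive minimum $\mu_{\I,l}$ of the lattice over $\I$ appears exactly $D_\mathsf{r}$ times among the successive minima of the $D_\mathsf{r}K$-dimensional real lattice $\Lat(\ve{G}_{\mathsf{r},\I})$. Because these real minima are sorted as $\mu_1$ ($D_\mathsf{r}$ times), then $\mu_2$ ($D_\mathsf{r}$ times), and so on, the first $L$ minima over $\I$ account for precisely the first $D_\mathsf{r}L$ real minima, so that
\begin{equation}
	\prod\nolimits_{l=1}^{D_\mathsf{r}L}\mu_{\mathsf{r},\Z,l}^2
	= \Big(\prod\nolimits_{l=1}^{L}\mu_{\I,l}^2\Big)^{D_\mathsf{r}} \; .
\end{equation}

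First I would invoke Minkowski's second theorem~\eqref{eq:boundPSMZ} for the real lattice of dimension $D_\mathsf{r}K$ with the choice $L'=D_\mathsf{r}L$, which together with the previous display yields
\begin{equation}
	\Big(\prod\nolimits_{l=1}^{L}\mu_{\I,l}^2\Big)^{D_\mathsf{r}}
	\leq \eta_{D_\mathsf{r}K}^{D_\mathsf{r}L}\,
		\vol^{\frac{2L}{K}}(\Lat(\ve{G}_{\mathsf{r},\I})) \; .
\end{equation}
Next I would substitute the volume identity already established inside the proof of Theorem~\ref{th:fsm}, namely $\vol^{\frac{2}{D_\mathsf{r}K}}(\Lat(\ve{G}_{\mathsf{r},\I}))=|\det(\ve{G}_\I)|^{\frac{2}{D_\mathsf{r}K}}\vol^{\frac{2}{K}}(\Lat(\ve{G}))$, which follows from $\ve{G}_{\mathsf{r},\I}=\ve{G}_\mathsf{r}\ve{G}_\I$ together with \eqref{eq:det_real_compl}, \eqref{eq:det_real_quat}, and \eqref{eq:volume}. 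Raising this identity to the power $D_\mathsf{r}L$ and then taking the $D_\mathsf{r}$-th root of the displayed inequality divides every exponent by $D_\mathsf{r}$ and collapses them to the claimed
\begin{equation}
	\prod\nolimits_{l=1}^{L}\mu_{\I,l}^2 \leq
		\eta_{D_\mathsf{r}K}^{L}\,|\det(\ve{G}_\I)|^{\frac{2L}{D_\mathsf{r}K}}\,
		\vol^{\frac{2L}{K}}(\Lat(\ve{G})) \; .
\end{equation}

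For the orthogonality defect I would specialize to $L=K$, so the product bound becomes $\prod_{k=1}^{K}\mu_{\I,k}^2\leq \eta_{D_\mathsf{r}K}^{K}|\det(\ve{G}_\I)|^{\frac{2}{D_\mathsf{r}}}\vol^{2}(\Lat(\ve{G}))$, in direct analogy with \eqref{eq:boundPSMZK}. Since the columns of $\ve{G}_\mathrm{tra,\I}$ are by construction the shortest independent vectors, $\prod_{k=1}^{K}\|\ve{g}_{\mathrm{tra},k}\|^2=\prod_{k=1}^{K}\mu_{\I,k}^2$, and I would then repeat the case distinction of Theorem~\ref{th:orthZ}: if $\ve{T}_\I$ is unimodular then $\vol(\Lat(\ve{G}_\mathrm{tra,\I}))=\vol(\Lat(\ve{G}))$, so dividing by $\vol^2(\Lat(\ve{G}))$ in \eqref{eq:orth_defect} and taking the square root gives $\Omega(\ve{G}_\mathrm{tra,\I})\leq\eta_{D_\mathsf{r}K}^{K/2}|\det(\ve{G}_\I)|^{\frac{1}{D_\mathsf{r}}}$; if $\ve{T}_\I$ is merely full rank then $\vol(\Lat(\ve{G}_\mathrm{tra,\I}))>\vol(\Lat(\ve{G}))$ only tightens the bound. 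The remaining work is pure exponent bookkeeping; the single step requiring genuine care — and the only place an error could creep in — is the repetition count from Theorem~\ref{th:smp}, i.e.\ correctly pairing the first $L$ minima over $\I$ with the first $D_\mathsf{r}L$ minima over $\Z$, since it is precisely this that produces the $D_\mathsf{r}$-th power and hence the factor $\tfrac{1}{D_\mathsf{r}}$ in all final exponents.
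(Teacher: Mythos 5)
Your proposal is correct and follows exactly the route the paper intends: its proof of this theorem is the single sentence that the argument proceeds ``equivalently to the derivations \dots{} for Theorem~\ref{th:fsm} and Theorem~\ref{th:orthZ},'' and you have simply filled in those steps---applying Minkowski's second theorem to the real representation with $L'=D_\mathsf{r}L$, using the repetition count from Theorem~\ref{th:smp} to extract the $D_\mathsf{r}$-th power, substituting the volume identity from~\eqref{eq:proofSM}, and reusing the unimodular/full-rank case distinction of Theorem~\ref{th:orthZ}. The exponent bookkeeping checks out, so no further comment is needed.
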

\begin{proof}
	The argumentation can be done equivalently to the derivations which are given in the proofs for
	Theorem~\ref{th:fsm} and Theorem~\ref{th:orthZ}.
\end{proof}
\begin{remark}
	Given lattices over $\Z$ ($D_\mathsf{r}=1$), $\G$ ($D_\mathsf{r}=2$), or $\L$ ($D_\mathsf{r}=4$),
	$|\det(\ve{G}_\I)|^{\frac{2L}{D_\mathsf{r}K}}=|\det(\ve{G}_\I)|^{\frac{1}{D_\mathsf{r}}}=1$.
	For complex lattices defined over $\E$ we have
	$|\det(\ve{G}_\E)|^{\frac{2L}{D_\mathsf{r}K}}=({\sqrt{3}}/{2})^L$
	and $|\det(\ve{G}_\E)|^{\frac{1}{D_\mathsf{r}}}=(3/4)^{\frac{K}{4}}$.
	For quaternionic ones over $\Hu$, $|\det(\ve{G}_\Hu)|^{\frac{2L}{D_\mathsf{r}K}}=(1/\sqrt{2})^L$, and
	$|\det(\ve{G}_\Hu)|^{\frac{1}{D_\mathsf{r}}}=(1/2)^{\frac{K}{4}}$.
\end{remark}

When employing complex lattices over $\E$, the upper bound on the orthogonality defect shrinks by a
factor of $(3/4)^{\frac{K}{4}}$ in comparison to lattices over $\G$. In contrast to the bound on the
first successive minimum in~\eqref{eq:sm_bound}, the expected gain grows with the dimension $K$. For
quaternion-valued lattices over $\I=\Hu$, the gap to the bound over $\I=\L$ grows by a factor of
$(1/2)^{\frac{K}{4}}$.

\subsubsection{Product of the Vector Norms of an LLL Basis}	\label{subsec:lll_boundsP}

A bound for the product of the norms of an LLL-reduced basis can be derived in a similar way. In
particular, it is known that at least the product over \emph{all} basis vectors is bounded
\cite{Lenstra:82,Nguyen:09}. This fact is generalized in the following Theorem.

\begin{thm}[Generalized Bound on the Product of the Norms of an LLL-Reduced Basis and its
	Related Orthogonality Defect]	\label{th:LLLPbound_general}
	Given a lattice $\Lat(\ve{G})$ spanned by the $N \times K$ generator matrix $\ve{G}$,
	the product of the norms of the basis vectors of an equivalent LLL-reduced matrix
	$\ve{G}_{\mathrm{red},\I}$ obtained over the integer ring $\I$ w.r.t.\ the quality parameter
	$\delta$ is bounded as%
	\begin{equation}	\label{eq:LLLPbound_general}
		\prod_{k=1}^{K}\mu_{\I,k}^2\leq \prod_{k=1}^{K}\|\ve{g}_{\I,k}\|^2 \leq \vol^{2}(\Lat(\ve{G}))
			\cdot \underbrace{\left(\frac{1}{\delta - \epsilon_{\I}^2}\right)^{\frac{K(K-1)}{2}} }_
			{\geq \Omega^2(\ve{G}_\mathrm{red,\I})} \;.
	\end{equation}
	Thereby, the rightmost term defines an upper bound on the squared orthogonality	defect of
	$\ve{G}_\mathrm{red,\I}$.
\end{thm}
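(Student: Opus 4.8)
The plan is to generalize the classical Lenstra--Lenstra--Lov\'asz argument to the ring $\I$, reducing everything to the interplay between the basis-vector norms $\|\ve{g}_{\I,k}\|$, the Gram--Schmidt norms $\|\ve{q}_k\|$, and the lattice volume. First I would record the Pythagorean decomposition induced by the GSO $\ve{G}_{\mathrm{red},\I}=\ve{Q}\ve{R}$: since $\ve{g}_{\I,k}=\sum_{l=1}^{k}\ve{q}_l r_{l,k}$ with orthogonal columns $\ve{q}_l$ and $r_{k,k}=1$, orthogonality gives
\begin{equation}
	\|\ve{g}_{\I,k}\|^2 = \|\ve{q}_k\|^2 + \sum_{l=1}^{k-1}|r_{l,k}|^2\,\|\ve{q}_l\|^2 \; .
\end{equation}
The cross terms vanish because $\ve{q}_l^\H\ve{q}_{l'}=0$ for $l\neq l'$, and $\ve{q}_l^\H\ve{q}_l=\|\ve{q}_l\|^2$ is a \emph{real} scalar and therefore commutes; hence the identity survives even under non-commutative quaternionic multiplication. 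I would then invoke the two properties from Definition~\ref{eq:generalLLL}: size reduction yields $|r_{l,k}|^2\leq\epsilon_\I^2$, and the Lov\'asz condition~\eqref{eq:lovasz_general} yields $\|\ve{q}_l\|^2\leq\alpha^{\,k-l}\|\ve{q}_k\|^2$ with $\alpha:=(\delta-\epsilon_\I^2)^{-1}>0$.

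The central step is the per-vector estimate $\|\ve{g}_{\I,k}\|^2\leq\alpha^{\,k-1}\|\ve{q}_k\|^2$. Substituting the two bounds above into the decomposition gives $\|\ve{g}_{\I,k}\|^2\leq\|\ve{q}_k\|^2\big(1+\epsilon_\I^2\sum_{j=1}^{k-1}\alpha^{\,j}\big)$, and evaluating the geometric sum reduces the claim to the scalar inequality $\epsilon_\I^2\leq 1-\delta+\epsilon_\I^2$, i.e.\ to $\delta\leq 1$, which holds by hypothesis. This is the only place where the upper end of the admissible range of $\delta$ enters, and it is the step I expect to demand the most care; it is exactly the generalization of the bound $\|\ve{b}_i\|^2\leq 2^{\,i-1}\|\ve{b}_i^*\|^2$ of~\cite{Lenstra:82}, recovered for $\delta=\tfrac34$, $\epsilon_\I^2=\tfrac14$, where $\alpha=2$.

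Taking the product over $k=1,\dots,K$ and using the Gram--Schmidt determinant identity $\prod_{k=1}^{K}\|\ve{q}_k\|^2=\vol^{2}(\Lat(\ve{G}))$ already employed in Sec.~\ref{subsec:lll_bounds} then gives
\begin{equation}
	\prod_{k=1}^{K}\|\ve{g}_{\I,k}\|^2 \leq \alpha^{\sum_{k=1}^{K}(k-1)}\prod_{k=1}^{K}\|\ve{q}_k\|^2 = \Big(\tfrac{1}{\delta-\epsilon_\I^2}\Big)^{\frac{K(K-1)}{2}}\vol^{2}(\Lat(\ve{G})) \; ,
\end{equation}
which is the upper bound in~\eqref{eq:LLLPbound_general}.

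For the lower bound $\prod_{k}\mu_{\I,k}^2\leq\prod_{k}\|\ve{g}_{\I,k}\|^2$ I would note that the columns of $\ve{G}_{\mathrm{red},\I}$ are $K$ linearly independent lattice vectors; sorting their norms ascending, the $j$ smallest-norm ones are $j$ linearly independent vectors contained in a ball of radius equal to the $j$\textsuperscript{th} smallest norm, so by definition of $\mu_{\I,j}$ that minimum is dominated by this radius, and the product inequality follows. Finally, since LLL reduction uses a unimodular $\ve{T}$, we have $\Lat(\ve{G}_{\mathrm{red},\I})=\Lat(\ve{G})$ and hence $\vol(\Lat(\ve{G}_{\mathrm{red},\I}))=\vol(\Lat(\ve{G}))$; dividing the established upper bound by $\vol^{2}(\Lat(\ve{G}))$ and comparing with the definition~\eqref{eq:orth_defect} identifies the rightmost term of~\eqref{eq:LLLPbound_general} as an upper bound on $\Omega^2(\ve{G}_{\mathrm{red},\I})$, completing the argument.
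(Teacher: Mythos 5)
Your proof is correct and takes essentially the same route as the paper's: the orthogonal GSO decomposition of $\|\ve{g}_{\I,k}\|^2$, the size-reduction bound $|r_{l,k}|^2\leq\epsilon_\I^2$ combined with the iterated Lov\'asz condition to obtain the per-vector estimate $\|\ve{g}_{\I,k}\|^2\leq(\delta-\epsilon_\I^2)^{-(k-1)}\|\ve{q}_k\|^2$, and the product identity $\prod_{k}\|\ve{q}_k\|^2=\vol^2(\Lat(\ve{G}))$. You merely fill in two details the paper leaves implicit---the explicit verification of the geometric-sum inequality (reducing it to $\delta\leq1$) and the sorting argument for the lower bound via the successive minima---so no substantive difference.
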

\begin{proof}
	The lower bound is readily obtained since $\mu_{\I,k}$ represents the norms of the shortest
	independent vectors in the particular lattice. The upper bound is, similar to the proof of the
	original bound in \cite{Lenstra:82}, given as follows: After LLL reduction, the
	$k$\textsuperscript{th} basis vector can be written as%
	\begin{equation}
		\begin{aligned}
			\|\ve{g}_{\I,k}\|^2 &= \|\ve{q}_k\|^2 + \sum_{l=1}^{k-1} |r_{k,l}|^2 \|\ve{q}_l\|^2\\
			&\leq \|\ve{q}_k\|^2 + \sum_{l=1}^{k-1}\epsilon_{\I}^2 
				\left(\frac{1}{\delta-\epsilon_{\I}^2}\right)^{k-l} \|\ve{q}_k\|^2 \\
			&=\left(1 + \epsilon_{\I}^2\sum_{l=1}^{k-1}
				\left(\frac{1}{\delta-\epsilon_{\I}^2}\right)^{k-l}\right) \cdot \|\ve{q}_k\|^2 \\
			&\leq \left(\frac{1}{\delta-\epsilon_{\I}^2}\right)^{k-1}\cdot\|\ve{q}_k\|^2 \; .
		\end{aligned}
	\end{equation}
	Forming the product over all basis vectors, we obtain%
	\begin{equation}
		\begin{aligned}
			\prod_{k=1}^{K} \|\ve{g}_{\I,k}\|^2 \leq 
				\left(\frac{1}{\delta-\epsilon_{\I}^2}\right)^{\frac{K(K-1)}{2}}
				\underbrace{\prod_{k=1}^{K}\|\ve{q}_k\|^2}_{\vol^2(\Lat(\ve{G}))} \; .
		\end{aligned}
	\end{equation}
\end{proof}
\begin{figure}
	\centerline{%
		\includegraphics{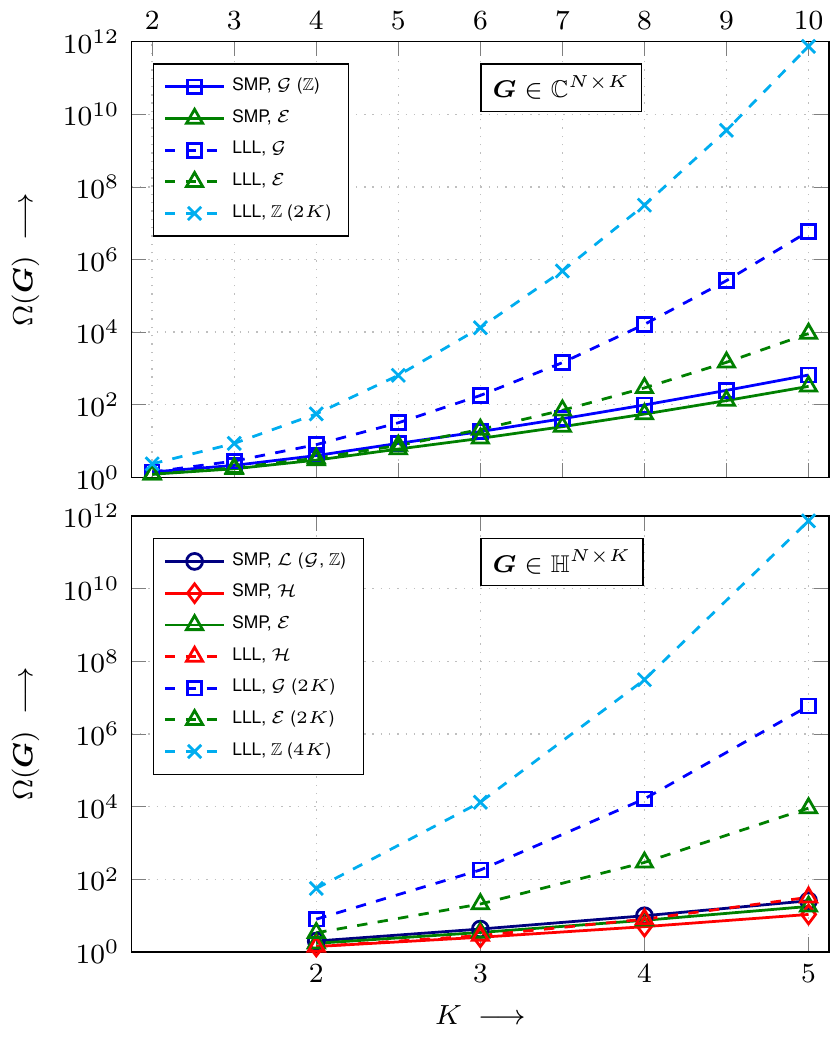}%
	}%
	\caption{\label{fig:boundsProd}Upper bounds on the orthogonality defect of the transformed
		matrix $\ve{G}_\mathrm{tra,\I}$ formed by the shortest independent vectors of the
		lattice (solid lines) and on the orthogonality defect of the LLL-reduced basis
		$\ve{G}_\mathrm{red,\I}$ with parameter $\delta=1$ (dashed lines) over the dimension $K$.
		Top: complex-valued lattices. Bottom: quaternion-valued	lattices.}%
\end{figure}
\subsubsection{Comparison of the Bounds}

In Fig.~\ref{fig:boundsProd}, the different bounds on the orthogonality defect are depicted
(Top: complex lattices; Bottom: quaternionic lattices). Again, the exact Hermite's constant has
been chosen if it is known for the particular dimension; otherwise, it has been approximated
by~\eqref{eq:hermite}. The quality parameter $\delta=1$ is assumed for LLL reduction.

Considering the orthogonality defect of the transformed matrix $\ve{G}_\mathrm{tra,\I}$ formed by
the shortest independent vectors, the superiority of lattices over $\E$
(Fig.~\ref{fig:boundsProd} Top) and $\Hu$ (Fig.~\ref{fig:boundsProd} Bottom) in comparison to
lattices defined over $\G$ and $\L$, respectively, is clearly visible. In the quaternion-valued
case, the QLLL reduction approaches the orthogonality defect of $\ve{G}_\mathrm{tra,\Hu}$ quite
well. In general, the QLLL reduction shows the best reduction quality, followed by the ELLL one.
Interestingly, the CLLL approach ($\I=\G$) performs better than the RLLL one ($\I=\Z$). This is
contrary to the behavior for the bounds on the first vector depicted in Fig.~\ref{fig:boundsNorms},
for which the RLLL reduction performed better. The reason is the additional factor $K$ in the
exponent of~\eqref{eq:LLLPbound_general} in comparison to~\eqref{eq:LLLbound_general}, i.e.,
the doubled dimension carries great weight in the RLLL approach.

Given a particular quality parameter $\delta$, the asymptotic behavior of the different types of
LLL reduction still has to be analyzed. This can be done with the following Corollary.
\begin{cor}[Comparison of the Orthogonality Defect of LLL Variants]	\label{cor:prod}
	Given two variants of LLL reduction operating over lattices of rank $K_1$ and $K_2$, and
	related Euclidean integer rings with the maximum squared quantization errors $\epsilon_{1}^2$
	and $\epsilon_{2}^2$, the first one has a smaller asymptotic ($K_1,K_2\to\infty$) upper bound
	on the orthogonality defect in~\eqref{eq:LLLPbound_general}	than the second one if%
	\begin{equation}
		\frac{(\delta - \epsilon_{2}^2)^{{K_2(K_2-1)}{}}}
			{({\delta - \epsilon_{1}^2)^{{K_1(K_1-1)}{}}}} < 1 \; .
\end{equation}
This inequality is straightforwardly obtained from~\eqref{eq:LLLPbound_general}.
\end{cor}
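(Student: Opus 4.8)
The plan is to obtain the claim as a one-line consequence of the bound~\eqref{eq:LLLPbound_general} established in Theorem~\ref{th:LLLPbound_general}: I read off the upper bound on the (squared) orthogonality defect, pass to the defect itself by taking a square root, instantiate it for the two variants, and finally compare the two resulting bounds by forming their quotient. No new machinery is needed beyond the bound already proved.

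First I would recall that the rightmost term of~\eqref{eq:LLLPbound_general} bounds the squared orthogonality defect, so that $\Omega^2(\ve{G}_{\mathrm{red},\I}) \leq (\delta - \epsilon_{\I}^2)^{-K(K-1)/2}$ and hence
\[
  \Omega(\ve{G}_{\mathrm{red},\I}) \leq (\delta - \epsilon_{\I}^2)^{-K(K-1)/4} .
\]
Instantiating this for the two LLL variants yields the bounds $(\delta - \epsilon_{1}^2)^{-K_1(K_1-1)/4}$ and $(\delta - \epsilon_{2}^2)^{-K_2(K_2-1)/4}$. The assertion that the first variant admits the smaller bound is, by definition, the inequality $(\delta - \epsilon_{1}^2)^{-K_1(K_1-1)/4} < (\delta - \epsilon_{2}^2)^{-K_2(K_2-1)/4}$; clearing the negative exponents (both bases being positive, see below) rewrites it as
\[
  \frac{(\delta - \epsilon_{2}^2)^{K_2(K_2-1)/4}}{(\delta - \epsilon_{1}^2)^{K_1(K_1-1)/4}} < 1 .
\]
Since $x \mapsto x^4$ is strictly increasing on $(0,\infty)$ and numerator and denominator are positive, raising both sides to the fourth power preserves the inequality and produces exactly the stated condition
\[
  \frac{(\delta - \epsilon_{2}^2)^{K_2(K_2-1)}}{(\delta - \epsilon_{1}^2)^{K_1(K_1-1)}} < 1 .
\]

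There is essentially no technical obstacle; the only items requiring care are bookkeeping. I must track the exponent correctly when halving $K(K-1)/2$ to $K(K-1)/4$ in passing from $\Omega^2$ to $\Omega$, and I must invoke the admissibility constraint $\delta \in (\epsilon_{\I}^2,1]$ from Definition~\ref{eq:generalLLL} to guarantee $\delta - \epsilon_i^2 > 0$, which is precisely what legitimizes both the removal of the negative exponents and the fourth-power step. I would also add a brief remark that the qualifier \emph{asymptotic} refers only to the intended regime of growing dimension (where, e.g., a complex rank-$K$ lattice is compared against its real rank-$2K$ representation, so that $K_1,K_2\to\infty$ jointly); the comparison of the two upper bounds is in fact exact for every finite pair $(K_1,K_2)$, so no genuine limiting argument is needed.
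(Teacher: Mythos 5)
Your proposal is correct and follows essentially the same route as the paper, which simply notes that the condition follows directly from~\eqref{eq:LLLPbound_general} because both bases $\delta-\epsilon_i^2$ are positive (guaranteed by the admissible range $\delta\in(\epsilon_\I^2,1]$), so that comparing the two upper bounds reduces to the stated ratio being less than one. Your extra bookkeeping (square root of $\Omega^2$, then fourth power) is an equivalent monotonicity argument, and your observation that the equivalence is exact for each finite pair $(K_1,K_2)$ — with the asymptotics only entering when one extracts a $K$-independent condition on $\delta$ as in Appendix~\ref{app:LLLpbounds} — is consistent with the paper.
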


The particular comparisons are provided in Appendix~\ref{app:LLLpbounds}. We briefly summarize
the main results w.r.t.\ the asymptotic behavior of the orthogonality defect: Regarding complex
matrices, the CLLL approach performs better than the RLLL one, cf.~the discussion above. This is
a novel result that has not been recognized in the original CLLL paper \cite{Gan:09}. It is also
quite obvious that the ELLL approach performs better than the RLLL one for most $\delta$ except
from $\delta\approx\frac{1}{3}$. Besides, the ELLL strategy generally performs better than the
CLLL one. Given quaternionic lattices, the QLLL reduction is more powerful than the CLLL one.
Moreover, it performs better than the ELLL reduction and also better than the RLLL reduction,
except from a small range where $\delta\approx\frac{1}{2}$.

\subsection{Asymptotic Computational Complexity}

The asymptotic computational complexity is studied next. This includes a general discussion on
the complexity as well as a comparison of the different variants, i.e., of the complexity if
different types of integer rings are applied.

\subsubsection{Complexity of the List-Based Determination of the Successive Minima}
	\label{subsubsec:compl_smp}

In Algorithm~\ref{alg:smp}, three main steps can be identified: i) the call of the (real-valued)
LLL algorithm to find short initial basis vectors, ii) the call of the list sphere decoder that
provides all points within a hypersphere where the maximum norm of the basis vectors defines the
radius, and iii) the calculation of the row-echelon form.

Even if $\delta=1$, the LLL algorithm has a polynomial complexity given a particular dimension
\cite{Akhavi:03}; hence, it can efficiently be performed, see also the discussion below. Moreover,
the transformation to row-echelon form (Algorithm~\ref{alg:re}) that applies a simple Gaussian
elimination has a polynomial complexity over $K$ as well as the list size $N_\mathrm{c}$. In
particular, its asymptotic complexity reads $\mathcal{O}(K^2 N_\mathrm{c})$, since (less than) $K$
rows of $\ve{C}_\mathrm{s}$ have to be updated $K$ times, particularly each time when an independent vector
was found.

Hence, the crucial point in the algorithm is the call of the sphere decoder \cite{Agrell:02},
which is known to have an exponential complexity (over $K$). In~\cite{Fischer:16}, it has been
stated that the number of candidates, i.e., the number of points within a \emph{real-valued}
hypersphere, can be approximated by%
\begin{equation}
	N_{\mathrm{c},\Z} \approx \frac{(\pi \psi^2)^{\frac{K}{2}} }{\frac{K}{2}!\, \vol(\Lat(\ve{G}))}
\end{equation}
for the real-valued generator matrix $\ve{G}\in\R^{N\times K}$, see also~\cite{Conway:99}, where
$\psi^2$ denotes the squared search radius which is defined as
$\psi^2=\max_k\|\ve{g}_{\mathrm{red},k}\|^2$. Since the maximum norm of an LLL-reduced basis
(as well as the $K$\textsuperscript{th} successive minimum) cannot be bounded, the list size cannot
be bounded, too. This list size is generalized in the following Corollary.
\begin{cor}[Approximative List Size in the Generalized Determination of the Successive Minima]
	In lattices with $N \times K$ generator matrix $\ve{G}$ that are defined over the integer ring
	$\I$ and for which $D_\mathsf{r}$ denotes the number of real-valued components per scalar, the
	list size in the determination of the successive minima is approximated by%
	\begin{equation}
		\begin{aligned}
			N_{\mathrm{c},\I} &\approx 
			\frac{(\pi \psi^2)^{\frac{D_\mathsf{r}K}{2}} }{{\frac{D_\mathsf{r}K}{2}}!\,
					\vol(\Lat(\ve{G}_\mathsf{r,\I}))} \\
			&= \frac{(\pi \psi^2)^{\frac{D_\mathsf{r}K}{2}} }{{\frac{D_\mathsf{r}K}{2}}!\, 
					|\det(\ve{G}_\I)|\, \vol^{D_\mathsf{r}}(\Lat(\ve{G}))}  \; . 
		\end{aligned}
	\end{equation}
\end{cor}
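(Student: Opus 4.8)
The plan is to transfer the known real-valued count of lattice points inside a hyperball to the equivalent real-valued representation $\ve{G}_\mathsf{r,\I}$, and then to re-express the appearing fundamental volume through the original lattice volume $\vol(\Lat(\ve{G}))$. First I would observe that the list sphere decoder in Algorithm~\ref{alg:smp} operates on $\ve{G}_\mathsf{r,\I}=\ve{G}_\mathsf{r}\ve{G}_\I$, which spans a lattice of real dimension $D_\mathsf{r}K$; hence the relevant hyperball is $D_\mathsf{r}K$-dimensional, and I would substitute $D_\mathsf{r}K$ for $K$ in the preceding real-valued approximation $N_{\mathrm{c},\Z}$ from \cite{Fischer:16}. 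Since the volume of a $d$-dimensional ball of radius $\psi$ equals $(\pi\psi^2)^{d/2}/(d/2)!$, taking $d=D_\mathsf{r}K$ yields the numerator $(\pi\psi^2)^{D_\mathsf{r}K/2}/(D_\mathsf{r}K/2)!$; dividing by $\vol(\Lat(\ve{G}_\mathsf{r,\I}))$ then gives the first line of the claim.

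Next I would rewrite $\vol(\Lat(\ve{G}_\mathsf{r,\I}))$ in terms of $\vol(\Lat(\ve{G}))$. Using the factorization $\ve{G}_\mathsf{r,\I}=\ve{G}_\mathsf{r}\ve{G}_\I$ with the \emph{square} matrix $\ve{G}_\I$ together with the volume definition~\eqref{eq:volume}, I obtain $\vol(\Lat(\ve{G}_\mathsf{r,\I}))=\sqrt{\det(\ve{G}_\I^\T\ve{G}_\mathsf{r}^\T\ve{G}_\mathsf{r}\ve{G}_\I)}=|\det(\ve{G}_\I)|\,\sqrt{\det(\ve{G}_\mathsf{r}^\T\ve{G}_\mathsf{r})}$. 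The determinant identities \eqref{eq:det_real_compl} (for $D_\mathsf{r}=2$) and \eqref{eq:det_real_quat} (for $D_\mathsf{r}=4$) then yield $\sqrt{\det(\ve{G}_\mathsf{r}^\T\ve{G}_\mathsf{r})}=\vol^{D_\mathsf{r}}(\Lat(\ve{G}))$ in both cases, so that $\vol(\Lat(\ve{G}_\mathsf{r,\I}))=|\det(\ve{G}_\I)|\,\vol^{D_\mathsf{r}}(\Lat(\ve{G}))$. Substituting this into the first line produces the second line.

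The argument is essentially a dimension substitution combined with determinant bookkeeping, so the only genuinely delicate step will be checking that $\sqrt{\det(\ve{G}_\mathsf{r}^\T\ve{G}_\mathsf{r})}$ equals $\vol^{D_\mathsf{r}}(\Lat(\ve{G}))$ uniformly across the two cases: for $D_\mathsf{r}=2$ one reads $\vol^2(\Lat(\ve{G}))=\det(\ve{G}^\H\ve{G})=\sqrt{\det(\ve{G}_\mathsf{r}^\T\ve{G}_\mathsf{r})}$ directly from \eqref{eq:det_real_compl}, whereas for $D_\mathsf{r}=4$ the fourth root in \eqref{eq:det_real_quat} forces $\sqrt{\det(\ve{G}_\mathsf{r}^\T\ve{G}_\mathsf{r})}=\vol^4(\Lat(\ve{G}))$. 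Finally, since the underlying point count is itself the Gaussian-heuristic approximation inherited from \cite{Fischer:16}, the ``$\approx$'' propagates unchanged through both lines.
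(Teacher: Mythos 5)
Your proposal is correct and follows essentially the same reasoning the paper gives (in the Remark immediately after the Corollary): apply the Gaussian-heuristic point count from \cite{Fischer:16} to the $D_\mathsf{r}K$-dimensional real representation $\ve{G}_\mathsf{r,\I}$, then factor $\vol(\Lat(\ve{G}_\mathsf{r,\I}))=|\det(\ve{G}_\I)|\,\vol^{D_\mathsf{r}}(\Lat(\ve{G}))$ via \eqref{eq:det_real_compl} and \eqref{eq:det_real_quat}. Your explicit check that the square root of $\det(\ve{G}_\mathsf{r}^\T\ve{G}_\mathsf{r})$ yields $\vol^{D_\mathsf{r}}(\Lat(\ve{G}))$ in both the complex and quaternionic cases is exactly the bookkeeping the paper leaves implicit.
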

\begin{remark}
	In the denominator, $\vol(\Lat(\ve{G}_\mathsf{r,\I}))$ has to be incorporated since the search
	is actually performed with the $D_\mathsf{r}N\times D_\mathsf{r}K$ real-valued representation
	$\ve{G}_\mathsf{r,\I}$ of $\ve{G}$ with
	$\vol(\ve{G}_\mathsf{r,\I})= |\det(\ve{G}_\I)|\,\vol^{D_\mathsf{r}}(\ve{G})$, cf.\
	Secs.~\ref{sec:extensions} and~\ref{sec:algorithms}.
\end{remark}

For lattices over Eisenstein and Hurwitz integers, we have $\det(\ve{G}_{\E})=(\sqrt{3}/2)^K$ and
$\det(\ve{G}_{\Hu})=(1/2)^{K}$, respectively. Hence, the number of candidates is increased in
comparison to lattices over Gaussian and Lipschitz integers, for which the determinant is one. This
is quite obvious as the Eisenstein as well as the Hurwitz integers constitute denser
packings---within the same hypervolume, more points are located. However, please note that the
initial RLLL reduction is then performed with $\ve{G}_{\mathsf{r,\E}}$ or $\ve{G}_{\mathsf{r,\Hu}}$.
Thereby, lower search radii may be obtained (see the above bounds), counteracting the increase in
list size.

\subsubsection{Complexity of LLL Reduction}	\label{subsubsec:LLLcompl}

The computational complexities of the different LLL approaches are finally assessed and compared to
each other. They are evaluated w.r.t.\ the number of \emph{real-valued multiplications} since,
in hardware implementation, multiplications are usually much more costly than additions.

It has been shown in the literature that an upper bound on the number of iterations in the LLL
algorithm, i.e., on the number of runs of the code lines within the while-loop in
Algorithm~\ref{alg:lll}, can be given \cite{Daude:94,Ling:07}. To this end, the elements of the
generator matrix have to be assumed to be drawn from a real-valued unit-variance uniform
\cite{Daude:94} or Gaussian \cite{Ling:07} distribution. Then, the asymptotic number of iterations
reads $\mathcal{O}(K \log_{\frac{1}{\delta}}(K))$ if $\delta < 1$, where the base of the logarithm
is the inverse of the quality parameter~$\delta$. For the case when $\delta=1$, see~\cite{Akhavi:03}.
It has been derived in \cite{Ling:07} that the same behavior holds for the complex case if a
circular-symmetric unit-variance Gaussian distribution is present. Adapting the derivation
in~\cite{Ling:07} to the quaternion-valued case, it can straightforwardly be shown that a
circular-symmetric unit-variance quaternionic Gaussian distribution leads to the same result.

It is well-known that the complexity inside the while loop of the LLL algorithm---assuming an
efficient implementation of the Gram--Schmidt update---is dominated by the for-loop for size
reduction (Lines 11--13 in Algorithm~\ref{alg:lll}), which has an asymptotic complexity of
$\mathcal{O}(NK)$, cf., e.g., \cite{Gan:09}. Hence, in total,\footnote{%
	The initial GSO (with pivoting) according to Algorithm~\ref{alg:gso} is not relevant in
	the analysis of the asymptotic behavior of the LLL algorithm as it only has a complexity of
	$\mathcal{O}(NK)$ required once in the beginning. Besides, even if a naive update of the GSO is
	performed within the while-loop, the total asymptotic complexity of one iteration still reads
	$\mathcal{O}(NK)$.%
}
this leads to the famous result that the LLL reduction as implemented in Algorithm~\ref{alg:lll}
has the asymptotic complexity $\mathcal{O}(K^3N\log_{\frac{1}{\delta}}(K))$. Noteworthy, this
complexity analysis holds for all types of LLL reduction described in this work---under the
assumption that the operations are performed in the particular real-, complex-, or
quaternion-valued arithmetic.

A complexity comparison of different LLL variants w.r.t\ the number of totally required
\emph{real-valued} multiplications, denoted as $M_\mathsf{r}$, is of interest. In \cite{Gan:09},
such a comparison was given for complex matrices and CLLL vs.\ RLLL reduction. This comparison is
generalized in the following Corollary.
\begin{cor}[Comparison of the Complexity of LLL Variants]
	Given two variants of LLL reduction operating over lattices of rank $K_1$ and $K_2$, and
	related Euclidean integer rings $\I_1$ and $\I_2$ for which $N_\mathsf{r,\I}$ denotes the
	number of real-valued multiplications required to implement a multiplication in the respective
	field, their ratio of the number of real-valued multiplications in the particular LLL algorithm
	$M_{\mathsf{r},\I}$ reads%
	\begin{equation}\label{eq:LLLnumber}
		\frac{M_\mathsf{r,\I_1}}{M_\mathsf{r,\I_2}} \approx
			\frac{N_\mathsf{r,\I_1}}{N_\mathsf{r,\I_2}} \cdot 
			\frac{K_1^3N_1\log_{\frac{1}{\delta}}(K_1)}
			{K_2^3 N_2\log_{\frac{1}{\delta}}(K_2)}\cdot\xi_{\I_1,\I_2} \; ,
	\end{equation}
	where
	$\xi_{\I_1,\I_2}={\mathrm{Pr}\{\Q_{\I_1}\{r_{l,k}\}\neq0\}}/
	{\mathrm{Pr}\{\Q_{\I_2}\{r_{l,k}\}\neq0\}}$
	denotes the ratio between the probabilities	that the particular size-reduction operation has to
	be performed.
	
	In an asymptotic view, \eqref{eq:LLLnumber} simplifies to%
		\begin{equation}\label{eq:LLLnumberAs}
			\lim_{K_1,K_2 \to \infty}\frac{M_\mathsf{r,\I_1}}{M_\mathsf{r,\I_2}} \approx
				\frac{N_\mathsf{r,\I_1}}{N_\mathsf{r,\I_2}} \cdot
				\frac{K_1^3N_1}{K_2^3 N_2}\cdot\xi_{\I_1,\I_2} \; ,
	\end{equation}
	since $\log_{\frac{1}{\delta}}(K_1)/\log_{\frac{1}{\delta}}(K_2)\to 1$.
\end{cor}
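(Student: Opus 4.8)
The plan is to write the total count of real-valued multiplications $M_\mathsf{r,\I}$ as a product of a ring-independent ``raw'' operation count and two ring-dependent factors, and then to form the ratio so that the common count cancels. Concretely, I would argue that, for the generalized LLL algorithm of Algorithm~\ref{alg:lll} executed in the native arithmetic of the ring $\I$,
\begin{equation}
	M_\mathsf{r,\I} \approx c\, N_\mathsf{r,\I}\, K^3 N \log_{\frac{1}{\delta}}(K)\,
		\mathrm{Pr}\{\Q_\I\{r_{l,k}\}\neq 0\} \; ,
\end{equation}
with a universal (ring-independent) constant $c$. The three factors are, respectively, the cost of a single scalar (field) multiplication measured in real-valued operations, the asymptotic LLL operation count established above, and the expected fraction of size-reduction steps that are actually carried out.

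I would justify the factors as follows. First, I invoke the complexity $\mathcal{O}(K^3 N \log_{\frac{1}{\delta}}(K))$ derived above, which counts scalar multiplications, holds uniformly for all rings, and whose leading contribution stems from the size-reduction for-loop (Lines~11--13), i.e.\ the repeated calls of Algorithm~\ref{alg:size}; in particular, the implicit constant is ring-independent to leading order. Next, the factor $N_\mathsf{r,\I}$ translates each scalar multiplication into real-valued ones: $N_\mathsf{r,\Z}=1$, a complex product costs $N_\mathsf{r,\G}=N_\mathsf{r,\E}\in\{3,4\}$ real multiplications by~\eqref{eq:compl_mul}, and a quaternion product $N_\mathsf{r,\L}=N_\mathsf{r,\Hu}\in\{8,16\}$ by~\eqref{eq:quat_mul}. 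Finally, Algorithm~\ref{alg:size} performs its multiplication-bearing updates only in the branch $r_\mathrm{q}=\Q_\I\{r_{l,k}\}\neq 0$; when $\Q_\I\{r_{l,k}\}=0$ the update is skipped. Treating the off-diagonal entries $r_{l,k}$ as approximately identically distributed, the expected portion of executed reductions is $\mathrm{Pr}\{\Q_\I\{r_{l,k}\}\neq 0\}$, supplying the last factor.

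Then I would simply form the ratio for the two variants: the universal constant $c$ cancels, leaving
\begin{equation}
	\frac{M_\mathsf{r,\I_1}}{M_\mathsf{r,\I_2}} \approx
		\frac{N_\mathsf{r,\I_1}}{N_\mathsf{r,\I_2}}\cdot
		\frac{K_1^3 N_1 \log_{\frac{1}{\delta}}(K_1)}{K_2^3 N_2 \log_{\frac{1}{\delta}}(K_2)}\cdot
		\frac{\mathrm{Pr}\{\Q_{\I_1}\{r_{l,k}\}\neq 0\}}{\mathrm{Pr}\{\Q_{\I_2}\{r_{l,k}\}\neq 0\}} \; ,
\end{equation}
where the last ratio is precisely $\xi_{\I_1,\I_2}$, which gives~\eqref{eq:LLLnumber}. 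For the asymptotic form~\eqref{eq:LLLnumberAs} I would note that the only non-power-law dependence is the logarithm; since the base $\frac{1}{\delta}$ cancels, its ratio equals $\ln(K_1)/\ln(K_2)$, and because the variants being compared differ in rank only by a fixed factor (typically $D_\mathsf{r}$, e.g.\ $K_1=2K_2$ or $K_1=4K_2$), this tends to $1$ as $K_1,K_2\to\infty$. Dropping it yields the claimed expression.

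The step demanding the most care is the product decomposition itself---specifically, establishing that the size-reduction for-loop is the sole leading-order source of multiplications (so that the initial GSO, the pivoting, the Lov{\'a}sz test, and the QR update of Algorithm~\ref{alg:qr} are either lower order or ring-independent and hence do not disturb the leading factor), and that a single probability $\mathrm{Pr}\{\Q_\I\{r_{l,k}\}\neq 0\}$ adequately captures the expected number of executed reductions across all index pairs $(l,k)$. Both points rest on the same i.i.d.\ Gaussian modelling already invoked for the iteration-count bound of~\cite{Daude:94,Ling:07} and are heuristic in nature, which is exactly why the statement is phrased with ``$\approx$'' rather than with equality.
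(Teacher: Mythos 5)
Your proposal is correct and follows essentially the same route as the paper, which gives no separate formal proof but derives the corollary directly from the preceding complexity analysis: the ring-uniform $\mathcal{O}(K^3N\log_{\frac{1}{\delta}}(K))$ count dominated by the size-reduction for-loop, scaled by $N_{\mathsf{r},\I}$ real multiplications per field multiplication and by the probability $\mathrm{Pr}\{\Q_\I\{r_{l,k}\}\neq0\}$ that the reduction branch is executed, with the logarithm ratio tending to $1$ because the ranks differ only by the fixed factor $D_\mathsf{r}$. Your decomposition, cancellation of the universal constant, and the caveats you flag (heuristic i.i.d.\ modelling, leading-order dominance of Lines 11--13) match the paper's reasoning; only your labelling of which rank is the larger one is swapped relative to the paper's convention, which is immaterial.
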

\begin{remark}
	To account for the fact that the size reduction only has to be performed if
	$\Q_{\I}\{r_{l,k}\}\neq 0$, cf.\ Algorithm~\ref{alg:size}, the ratio $\xi_{\I_1,\I_2}$ is
	incorporated. As stated in \cite{Ling:09} for complex matrices, the real value $r_{l,k}$
	in the RLLL algorithm and the complex components $r_{l,k}^{(1)}$ and $r_{l,k}^{(2)}$ in the CLLL
	one, which can be assumed to be independent in a radial-symmetric model, have quite similar
	statistics. Hence, $\mathrm{Pr}\{\Q_{\G}\{r_{l,k}\}\neq0\}\approx 
	2 \,\mathrm{Pr}\{\Q_{\Z}\{r_{l,k}\}\neq 0\}$, i.e., $\xi_{\G,\Z}\approx2$.
	
	Considering the ELLL approach, in the size-reduction step, the quantization is not performed
	w.r.t.\ a square Voronoi cell as in the CLLL approach but w.r.t.\ a hexagonal one. The latter
	covers less space ($\vol(\E)=\vol(\ve{A}_2)=\frac{\sqrt{3}}{2}$) than the former
	($\vol(\G)=\vol(\Z^2)=1$). However, the hexagonal cell is more similar to a two-dimensional
	hypersphere, i.e., a circle, than the square one. Hence, it covers a circular-symmetric
	(Gaussian) distribution more precisely.	It can be expected that both effects roughly compensate
	each other.	Thus, $\xi_{\E,\G}\approx 1$.
	
	For quaternionic lattices, four independent components $r_{l,k}^{(1)}$, $r_{l,k}^{(2)}$,
	$r_{l,k}^{(3)}$, and $r_{l,k}^{(4)}$ can be assumed if the QLLL algorithm is applied. However,
	here it has to be taken into account that the Voronoi cell of $\Hu$	forms a 24-cell as mentioned
	in Sec.~\ref{sec:extensions}, which covers less volume ($\vol(\Hu)=\frac{1}{2}$) than a
	four-dimensional hypercube ($\vol(\L)=\vol(\Z^4)=1$). Nevertheless, the former covers a
	circular-symmetric (Gaussian) distribution in a better way, cf.~above. Hence, we can again
	assume that $\xi_{\Hu,\L}\approx 1$ and, as a consequence, that $\xi_{\Hu,\G}\approx 2$ and
	$\xi_{\Hu,\Z}\approx 4$.
\end{remark}

We briefly evaluate the asymptotic complexity ratios according to~\eqref{eq:LLLnumberAs}. To this
end, please note that for the straightforward multiplication of complex numbers
in~\eqref{eq:compl_mul}, four real-valued multiplications are required. For the quaternionic
multiplication as defined in~\eqref{eq:quat_mul}, 16 real-valued multiplications are necessary.
Hence, we obtain $N_{\mathsf{r},\Z}=1$, $N_{\mathsf{r},\G}=N_{\mathsf{r},\E}=4$, and
$N_{\mathsf{r},\Hu}=16$. Assuming complex matrices, the asymptotic ratios of the required real
multiplications calculate to
$\frac{M_\mathsf{r,\G}}{M_\mathsf{r,\Z}}\approx
	\frac{M_\mathsf{r,\E}}{M_\mathsf{r,\Z}}\approx \frac{1}{2}$
for the comparison of complex and real-valued processing (given $K_2=2K_1$, $N_2=2N_1$, and
$\xi_{\G,\Z}\approx\xi_{\E,\Z}\approx 2$). Hence, for the former, the complexity is roughly halved,
see also~\cite{Gan:09}. Assuming quaternionic matrices, we obtain the ratios
$\frac{M_\mathsf{r,\Hu}}{M_\mathsf{r,\G}}\approx
	\frac{M_\mathsf{r,\Hu}}{M_\mathsf{r,\E}}\approx \frac{1}{2}$
for quaternionic vs.\ complex processing as $\xi_{\Hu,\G}\approx\xi_{\Hu,\E}\approx 2$. Comparing
quaternionic and real-valued processing, $\frac{M_\mathsf{r,\Hu}}{M_\mathsf{r,\Z}}\approx \frac{1}{4}$,
since $K_2=4K_1$, $N_2=4N_1$, and $\xi_{\Hu,\Z}\approx 4$. Thus, the complexity is significantly
reduced by using the QLLL algorithm instead of its real/complex counterparts.

Finally, we briefly consider the complexity ratios for the case when ``advanced'' multiplication
schemes are applied. As mentioned in Sec.~\ref{sec:extensions}, a complex-valued multiplication
can be implemented by only $N_\mathsf{r}=3$ real-valued multiplications, and a quaternion-valued
multiplication by only $N_\mathsf{r}=8$ ones. Hence, for complex- vs.\ real-valued processing, the
ratios
$\frac{M_\mathsf{r,\G}}{M_\mathsf{r,\Z}}\approx
	\frac{M_\mathsf{r,\E}}{M_\mathsf{r,\Z}}\approx \frac{3}{8}$
are obtained. For quaternion-valued vs.\ complex-valued processing, the complexity ratios are given
as
$\frac{M_\mathsf{r,\Hu}}{M_\mathsf{r,\G}}\approx
	\frac{M_\mathsf{r,\Hu}}{M_\mathsf{r,\E}}\approx \frac{1}{3}$,
and for quaternion-valued vs.\ real-valued processing, the number of required multiplications is
reduced to
$\frac{M_\mathsf{r,\Hu}}{M_\mathsf{r,\Z}}\approx \frac{1}{8}$.
Please note that these decreased numbers of multiplications are accompanied by increased numbers
of required additions. Hence, the best-performing strategy largely depends on the particular
hardware architecture.

\subsection{Numerical Evaluation and Comparison}	\label{subsec:numBounds}

To complement the theoretical derivations provided in this section, numerical simulations have
been performed. In particular, for both the complex- and the quaternion-valued case, $10^6$
i.i.d.\ unit-variance complex or quaternionic Gaussian random matrices have been considered.
Since we are interested in upper bounds on the quality and complexity, the assessment is
performed by evaluating the $0.99$-quantiles of the particular quantities, i.e., the values which
are surpassed by exactly $1~\%$ of the observed realizations. Again, for LLL reduction, the
optimal quality parameter $\delta=1$ is assumed.

\begin{figure}
	\centerline{%
		\includegraphics{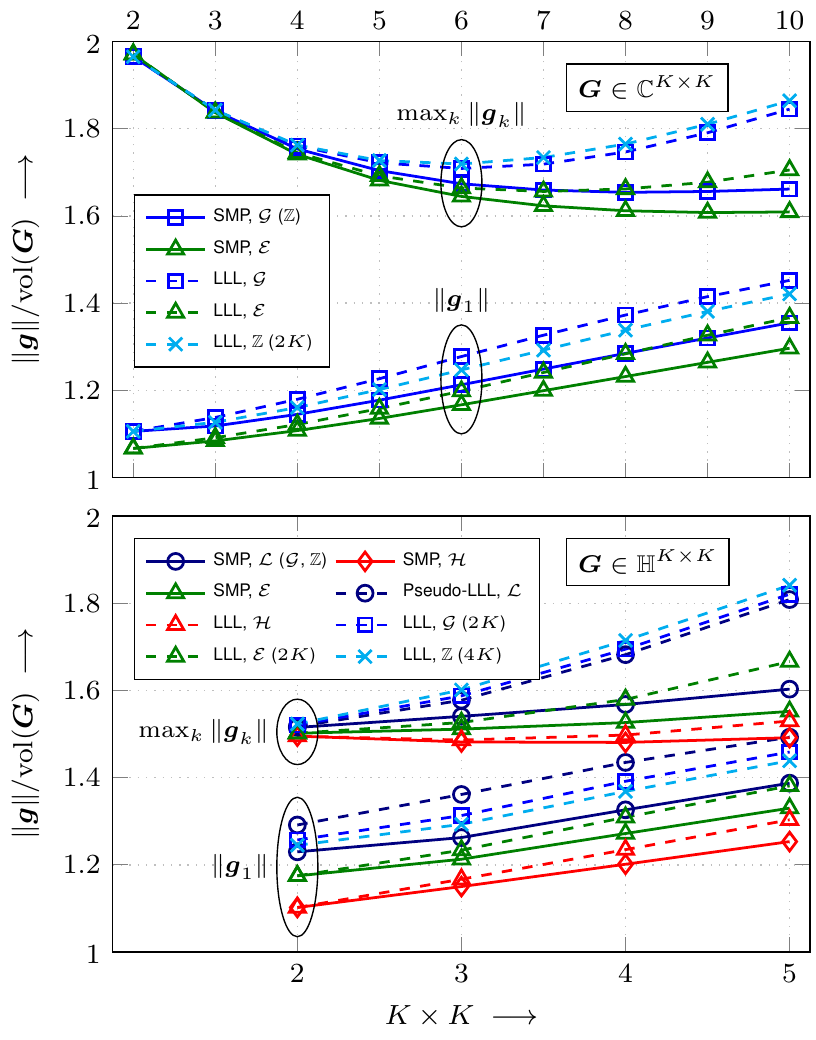}%
	}%
	\caption{\label{fig:statNorms}$0.99$-quantiles for the normalized first successive minimum
		(solid lines) and the normalized first vector norm of an LLL-reduced basis with parameter
		$\delta=1$ (dashed lines) as well as the related maximum values among all vectors over
		the dimensions $K \times K$ obtained from numerical simulations. Top: complex-valued
		lattices. Bottom: quaternion-valued lattices.}%
\end{figure}
\subsubsection{Norms of the Vectors}

In Fig.~\ref{fig:statNorms}, the $0.99$-quantiles of the normalized norms are illustrated,
cf.\ Fig.~\ref{fig:boundsNorms} (Top: complex matrices; Bottom: quaternionic matrices). In
addition to the norms of the first vectors, their maximum values are shown. The maximum among the
vector norms, which can only be evaluated numerically (see above), is the relevant quantity for
the SBP \eqref{eq:sbp} and the SIVP~\eqref{eq:sivp}.

Restricting the considerations to the norms of the first vectors in Fig.~\ref{fig:statNorms}, the
conclusions follow the ones that were drawn for the theoretical upper bounds in
Fig.~\ref{fig:boundsNorms}: Lattices over $\E$ and $\Hu$ possess lower first successive minima
than lattices over $\G$ and $\L$, respectively. The LLL reductions over $\E$ and $\Hu$ show the
best quality; their respective quantiles may even fall below the ones of the successive minima
over $\G$ and $\L$ for small dimensions. Among the ``genuine'' LLL approaches, the reduction over
$\G$ performs the worst. In the quaternion-valued case, only the curve for pseudo-QLLL reduction
(over $\L$) possesses higher quantiles. However, for statistical models like the i.i.d.\ Gaussian
one at hand, its application still results in a reasonable performance, though alternative
approaches may be more appropriate if the first vector is relevant.

For the maximum of the vector norms in Fig.~\ref{fig:statNorms}, similar conclusions can be
drawn---except for two important observations: First, in contrast to the first vector and similar
to the theoretical bounds on the orthogonality defect in Fig.~\ref{fig:boundsProd}, the CLLL
reduction performs better than the RLLL reduction. Second, in the quaternion-valued case, the
pseudo-QLLL reduction even possesses lower quantiles than the CLLL or RLLL reduction. Hence,
concerning an approximate solution for the SBP and the SIVP, respectively, the application of the
pseudo-QLLL reduction over $\L$ may even be beneficial in practice, if and only if the
``genuine'' QLLL reduction over $\Hu$ is not desired.

\subsubsection{Orthogonality Defect}

In Fig.~\ref{fig:statProd}, the $0.99$-quantile of the orthogonality defect is plotted as the
statistical quantity. Again, the complex case is shown at the top and the quaternionic case at
the bottom.

\begin{figure}
	\centerline{%
		\includegraphics{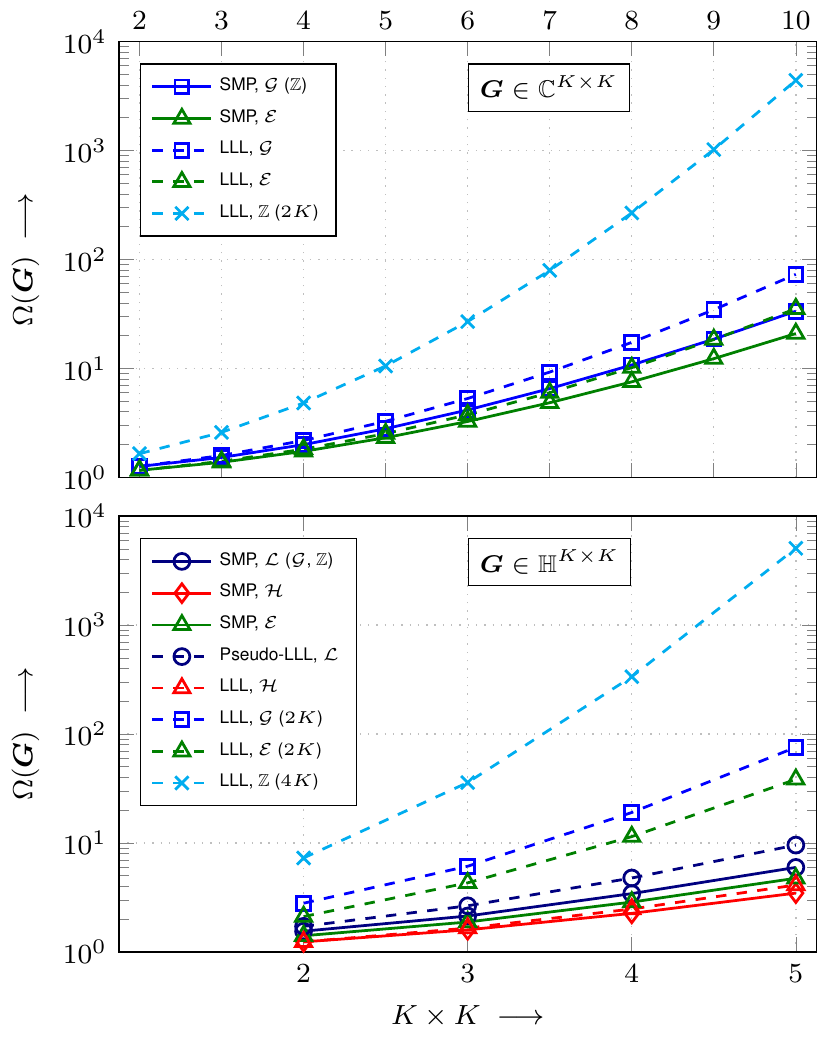}%
	}%
	\caption{\label{fig:statProd}$0.99$-quantiles for the orthogonality defect of the matrix
		formed by the shortest independent vectors of the lattice (solid lines) as well as the
		related basis vectors of an LLL-reduced basis with parameter $\delta=1$ (dashed lines)
		over the dimensions $K \times K$ obtained from numerical simulations. Top:
		complex-valued lattices. Bottom: quaternion-valued lattices.}%
\end{figure}

The shapes of the curves correspond to the behavior that can be expected when the theoretical
bounds from Fig.~\ref{fig:boundsProd} are regarded. For the complex case, the orthogonality
defect of the matrix formed by the shortest independent vectors over $\E$ is the lowest one,
whereas the application of the RLLL algorithm results in the worst quality.

For quaternion-valued lattices, the successive minima over $\Hu$ as well as the related LLL
reduction are accompanied by the smallest orthogonality defects. Again, the classical LLL
reduction over $\Z$ shows the worst quality. Surprisingly, at least w.r.t.\ the orthogonality
defect, the pseudo-QLLL reduction over $\L$ results in quite a good performance. In particular,
the orthogonality defect falls significantly below the one of any equivalent complex or
real-valued strategy.

\subsubsection{List Sizes of the Successive-Minima Algorithm}

Fig.~\ref{fig:statList} depicts the $0.99$-quantiles of the list size $N_\mathrm{c}$ for the
list-based determination of the successive minima of a lattice as described in
Sec.~\ref{subsec:smp}. Both the complex-valued ($\ve{G}\in\C^{K \times K}$) and the
quaternion-valued case ($\ve{G}\in\Ha^{K \times K}$) are considered.

\begin{figure}
	\centerline{%
		\includegraphics{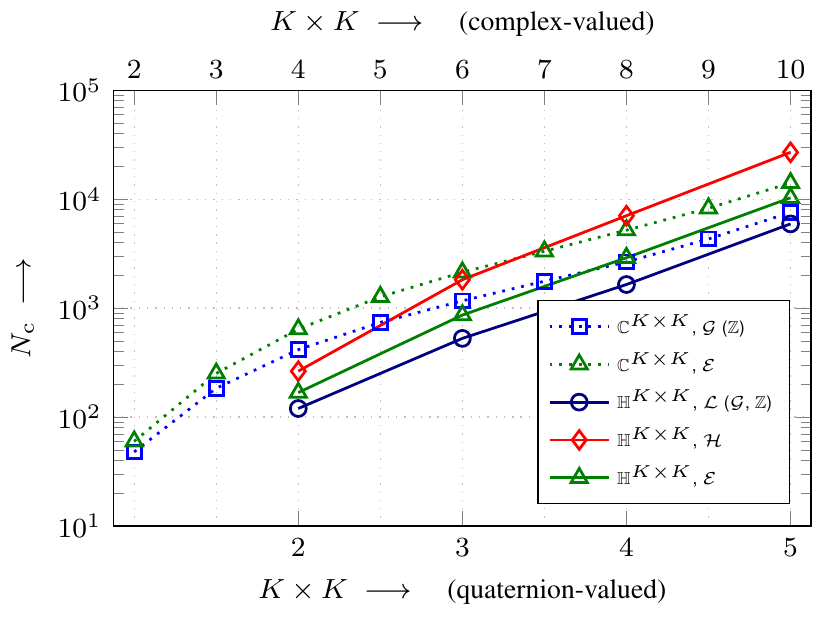}%
	}%
	\caption{\label{fig:statList}$0.99$-quantiles of the list size $N_\mathrm{c}$ for the
		determination of the successive minima according to Algorithm~\ref{alg:smp} over
		different integer rings $\I$ obtained by numerical simulations. Dashed curves:
		complex-valued lattices with generator matrix $\ve{G}\in\C^{K \times K}$. Solid curves:
		quaternion-valued lattices with generator matrix $\ve{G}\in\Ha^{K \times K}$.}%
\end{figure}

With regard to complex lattices, the determination of the successive minima over $\E$ is
accompanied by a slightly increased list size in comparison to $\G$. As explained in
Sec.~\ref{subsubsec:compl_smp}, this increase in complexity is caused by an increased number
of lattice points within a given search radius due to the denser packing. Obviously, even
though the initial search radius obtained by \emph{real-valued} LLL reduction may be lowered
a little bit for lattices over Eisenstein integers, the denser packing seems to be the more
dominating point.

The same holds for the quaternion-valued case: here, lattices over $\L$ result in the lowest
quantiles for the list size, whereas lattices over $\Hu$ are---due to the densest packing in
four dimensions---accompanied by the largest quantities. If the successive minima w.r.t.\ $\E$
are calculated for the equivalent complex-valued representation, the list size is located in
between the one of $\I=\L$ and $\I=\Hu$, as the packing is denser than the one of $\G^2$, but
sparser than the one of $\Hu$.

\subsubsection{Multiplications within the LLL Algorithm}

Finally, we assess the $0.99$-quantiles of the numbers of real-valued multiplications
$M_\mathsf{r}$ that are required to run the (generalized) LLL reduction as defined in
Algorithm~\ref{alg:lll}. To this end, the numbers are shown in Fig.~\ref{fig:mulStat}
(Top: complex lattices; Bottom: quaternionic lattices). Both the standard multiplication
approaches~\eqref{eq:compl_mul} and \eqref{eq:quat_mul}, respectively, with $N_\mathrm{r}=4$
and $N_\mathsf{r}=16$ real-valued multiplications (solid lines), and the reduced-complexity
variants with $N_\mathrm{r}=3$ and $N_\mathsf{r}=8$ multiplications (dashed-dotted lines),
are evaluated.

\begin{figure}
	\centerline{%
		\includegraphics{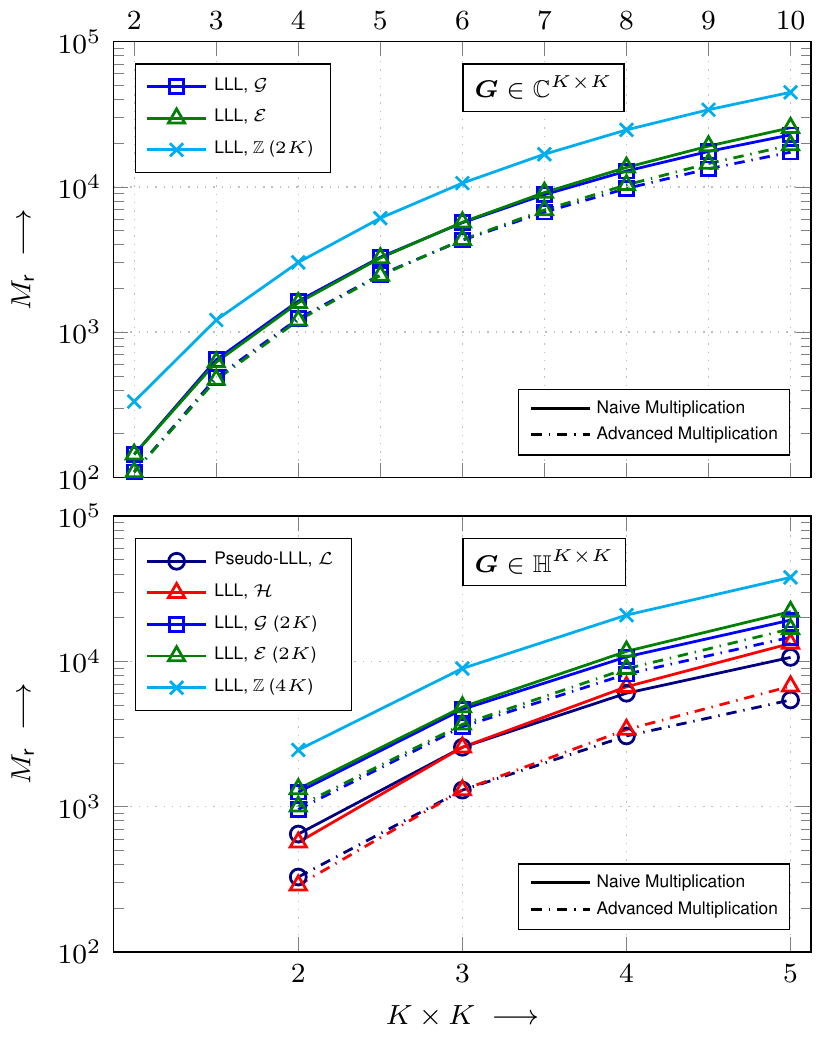}%
	}%
	\caption{\label{fig:mulStat}$0.99$-quantiles for the number of real-valued multiplications
		$M_\mathsf{r}$ within the generalized LLL reduction approach as stated in
		Algorithm~\ref{alg:lll} obtained by numerical simulations. Over each integer ring, the
		quality parameter \mbox{$\delta=1$} was used. Solid	lines indicate the naive
		implementation of the complex or quaternion-valued multiplication operations,
		dashed-dotted ones the implementation with reduced numbers of real-valued
		multiplications. Top: complex-valued lattices. Bottom: quaternion-valued lattices.}%
\end{figure}

Given the complex case in Fig.~\ref{fig:mulStat} (Top), it is clearly visible that the CLLL
and the ELLL algorithm roughly possess the same number of multiplications, which are halved
in comparison to the RLLL approach if the standard multiplication strategy is applied,
cf.\ Sec.~\ref{subsubsec:LLLcompl}. If the advanced complex multiplication with
$N_\mathrm{r}=3$ is used instead, these ratios are reduced to about $\frac{3}{8}$ like
predicted by the asymptotic assessment of the complexity.

For the quaternion-valued case in Fig.~\ref{fig:mulStat} (Bottom), the following conclusions
can be drawn: Pseudo-QLLL reduction over $\L$ and QLLL reduction over $\Hu$ possess about the
same number of real-valued multiplications. The same is valid for the CLLL and the ELLL
algorithm, using the equivalent complex-valued representation. Restricting to the standard
quaternion-valued multiplication approach, the number of multiplications can roughly be halved
in comparison to the CLLL and the ELLL approach, and roughly be reduced to one fourth in
comparison to the RLLL approach, if the reduction is performed over $\L$ or $\Hu$. When using
the advanced multiplication scheme for quaternionic numbers, these ratios can further be
reduced to one third and one eight, respectively, as already derived in the theoretical
analysis in Sec.~\ref{subsubsec:LLLcompl}.

\section{Application to MIMO Transmission}	\label{sec:mimo}

\noindent In this section, it is studied in which ways the generalized algorithms
derived and analyzed in the previous sections can be applied to the field of MIMO
communications. To this end, the system model of complex MIMO transmission is reviewed
and extended to the quaternion-valued case. Particular scenarios for the use of
quaternion-valued arithmetic are identified. In addition, lattice-based channel
equalization is adapted to the situation at hand. 

\subsection{SISO Fading Channel}

In wireless transmission, the fading model is quite popular to represent non-line-of-sight
connections. In the simplest case, only one transmit and one receive antenna are used, i.e.,
a single-input/single-output (SISO) fading channel is present. It can be modeled by the
system equation\footnote{%
	In order to simplify the notation, the time index is omitted in all system equations,
	i.e., one particular time step (modulation step) is considered.%
}%
\begin{equation}	\label{eq:siso_model}
	y = h \cdot x + n \; .
\end{equation}
Thereby, $x$ is a transmit symbol taken from the finite set of symbols $\mathcal{A}$
(signal constellation), $h$ denotes the fading coefficient (multiplicative distortion),
$n$ represents Gaussian noise (additive distortion), and $y$ is the disturbed
receive~symbol.

\subsubsection{Complex-Valued Transmission}

Most often, \eqref{eq:siso_model} is considered to be a complex-valued equation that
models radio-frequency transmission in the equivalent complex baseband
\cite{Fischer:02,vanTrees:04}. Then, a (zero-mean) QAM constellation
$\mathcal{A}\subset\C$ with the variance $\sigma_{x,\mathsf{c}}^2$ and the cardinality
$M_\mathsf{c}=|\mathcal{A}|$ is most commonly applied. These constellations form
(shifted) subsets of the Gaussian integers \cite{Fischer:16,Fischer:19}. Alternatively,
constellations based on the Eisenstein integers can be employed
\cite{Huber:94b,Tunali:15,Stern:19,Fischer:19}. In accordance, the fading coefficient
is complex (\emph{Rayleigh} fading \cite{Proakis:08}, usually normalized to
$\sigma_{h,\mathsf{c}}^2=1$), and we have to deal with (radial-symmetric) complex
Gaussian noise with some variance $\sigma_{n,\mathsf{c}}^2$.

Taking advantage of~\eqref{eq:equiv_real_compl}, the SISO Rayleigh-fading channel can
equivalently be modeled by a real-valued system with the system equation\footnote{%
	For $x$ and $n$, only the left column of the equivalent real-valued
	representation~\eqref{eq:equiv_real_compl} is employed, as the right
	column is completely redundant and not required to obtain the final
	result (left column) of $y$.%
}%
\begin{equation}
	\begin{bmatrix}
		y^{(1)}\\
		y^{(2)}
	\end{bmatrix}=
	\begin{bmatrix}
		h^{(1)} & -h^{(2)}\\
		h^{(2)} & \phantom{-}h^{(1)} 
	\end{bmatrix}
	\begin{bmatrix}
		x^{(1)}\\
		x^{(2)}
	\end{bmatrix}
	+
	\begin{bmatrix}
		n^{(1)}\\
		n^{(2)}
	\end{bmatrix} \; ,
\end{equation}
where the noise components $n^{(1)}$ and $n^{(2)}$ have the variance
$\sigma_{n,\mathsf{r}}^2 = \frac{1}{2}\sigma_{n,\mathsf{c}}^2$ and the variances of
the constellation's components read
$\sigma_{x,\mathsf{r}}^2 = \frac{1}{2}\sigma_{x,\mathsf{c}}^2$ if the components
are independent (e.g., in QAM).

\subsubsection{Quaternion-Valued Transmission}

On the basis of~\eqref{eq:siso_model}, a \emph{quaternion-valued} SISO fading channel
model can be defined. Then, the transmit symbols are consistently drawn from a
quaternion-valued signal constellation $\mathcal{A}\subset\Ha$ with the (4D)
cardinality $M_\mathsf{q}=|\mathcal{A}|$. The signal points can, e.g., be chosen as a
subset of the Lipschitz or the Hurwitz integers \cite{Stern:18,Frey:20}. The related
variance reads $\sigma_{x,\mathsf{q}}^2$. The fading coefficient is given as%
\begin{equation}
	h = (\underbrace{h^{(1)} + h^{(2)}\,\i}_{h^{\{1\}}}) +
		(\underbrace{h^{(3)} + h^{(4)}\,\i}_{h^{\{2\}}})\,\j \, ,
\end{equation}
i.e., it consists of \emph{four independent real-valued} or
\emph{two independent complex-valued} ones, respectively. Since two independent
unit-variance Rayleigh-fading coefficients are present, the variance reads
$\sigma_{h,\mathsf{q}}^2=2\,\sigma_{h,\mathsf{c}}^2=4\,\sigma_{h,\mathsf{r}}^2=2$.
In the same way, the noise is represented by%
\begin{equation}
		n = (\underbrace{n^{(1)} + n^{(2)}\,\i}_{n^{\{1\}}}) +
		(\underbrace{n^{(3)} + n^{(4)}\,\i}_{n^{\{2\}}})\,\j \, ,
\end{equation}
where $\sigma_{n,\mathsf{q}}^2=2\,\sigma_{n,\mathsf{c}}^2=4\,\sigma_{n,\mathsf{r}}^2$.
A disturbed receive symbol $y\in\Ha$ is finally obtained.

\begin{figure}[t]
	\centerline{%
		\includegraphics{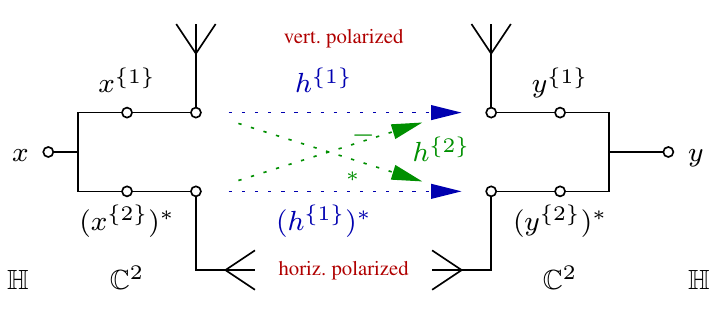}%
	}%
	\caption{\label{fig:dualPol}%
		Transmission with dual-polarized antennas over the SISO fading channel
		according to~\eqref{eq:equiv_compl_quat_siso}. At the transmitter and
		receiver, antenna pairs are present that transmit and receive electromagnetic
		waves which carry information in both the vertical polarization plane (top) and
		the horizontal polarization plane (bottom).	In the quaternion-valued fading
		factor $h=h^{\{1\}}+h^{\{2\}}\j$, both the direct gain $h^{\{1\}} $and the
		cross-polar gain $h^{\{2\}}$ are contained.}%
\end{figure}

Benefiting from~\eqref{eq:equiv_compl_quat}, the quaternion-valued SISO fading channel
is equivalently expressed by the complex-valued $2 \times 2$ system equation%
\begin{equation}	\label{eq:equiv_compl_quat_siso}
	\begin{bmatrix}
		y^{\{1\}}\\
		(y^{\{2\}})^*
	\end{bmatrix}=
	\begin{bmatrix}
		h^{\{1\}} & -h^{\{2\}}\\
		(h^{\{2\}})^* & (h^{\{1\}})^* 
	\end{bmatrix}
	\begin{bmatrix}
		x^{\{1\}}\\
		(x^{\{2\}})^*
	\end{bmatrix}
	+
	\begin{bmatrix}
		n^{\{1\}}\\
		(n^{\{2\}})^*
	\end{bmatrix} \; .
\end{equation}
This complex $2 \times 2$ system equation and its quaternion-valued (scalar)
representation, respectively, are well-suited to model particular transmission
scenarios:
\begin{enumerate}[label=\roman*)]
	\item Transmission with \emph{dual-polarized antennas}:
		as illustrated in Fig.~\ref{fig:dualPol}, \eqref{eq:equiv_compl_quat_siso}
		models the (SISO) transmission with one dual-polarized antenna at both the
		transmitter and the receiver side. In particular, both horizontal and vertical
		polarization of the electromagnetic wave are then used for the orthogonal
		transmission of \emph{two} complex-valued symbols at the same time and on the
		same frequency band. Thereby, the first complex fading factor $h^{\{1\}}$
		describes the \emph{direct gain} within the same polarization plane, whereas
		$h^{\{2\}}$ represents the \emph{cross-polar gain}, i.e., the crosstalk to the
		other polarization plane. Moreover, the noise samples $n^{\{1\}}$ and $n^{\{2\}}$
		describe the additive noise which is present at the vertically and horizontally
		polarized receive antenna, respectively. Further details can be found in
		\cite{Isaeva:95,Wysocki:06,Stern:18}.
	\item
	 Transmission with Alamouti (space-time) coding:
		the system equation~\eqref{eq:equiv_compl_quat_siso} corresponds to the one
		of the Alamouti scheme as a diversity technique \cite{Alamouti:98}.	Then, the
		complex symbols $x^{\{1\}}$ and $x^{\{2\}}$ may actually be radiated at
		different time steps (or, e.g., frequencies), but are processed jointly.
		Diversity is obtained if the channel gains $h^{\{1\}}$ and $h^{\{2\}}$
		(representing the two time steps, frequencies, \dots) are independent. The
		quaternion-valued SISO fading model can be seen as an alternative (scalar)
		representation of the $2\times 2$ space-time coding scheme. For further details
		and a deeper comparison of dual-polarized transmission and Alamouti coding,
		see~\cite{Qureshi:18}.
\end{enumerate}

\subsection{MIMO Fading Channel (Uplink Transmission)}

The complex- or quaternion-valued SISO fading models can be generalized to respective
MIMO ones. In this work, \emph{MIMO uplink transmission}---aka
\emph{MIMO multiple-access channel}---is treated in an exemplary way.

$K$ uncoordinated single-antenna user devices transmit their data---at the same time and
on the same frequency---to \emph{one} central receiver which is equipped with $N \geq K$
antennas. The related system equation is given as%
\begin{equation}
	\ve{y} = \ve{H} \cdot \ve{x} + \ve{n} \; .
\end{equation}
Here, $\ve{x}=[x_1,\dots,x_K]^\T$ denotes the vector of transmit symbols sent by the
users, $\ve{H}$ the $N \times K$ MIMO channel matrix, $\ve{n}=[n_1,\dots,n_N]^\T$ a
vector with $N$ noise samples, and $y=[y_1,\dots,y_N]^\T$ the vector of $N$ symbols which
are received at the central unit.

\subsubsection{Complex-Valued Transmission}

Most often, a complex-valued MIMO channel is considered. A popular model is that the
channel matrix%
\begin{equation}	\label{eq:MIMO_channel}
	\ve{H} = \big[\,h_{n,k}\,\big]_{\substack{n=1,\dots,N\\ k=1,\dots,K}} 
		\in \C^{ N\times K}
\end{equation}
contains i.i.d.\ complex Gaussian unit-variance channel gains $h_{n,k}\in\C$ that
describe (in equivalent complex baseband representation) the links between the user $k$
and the receive antenna $n$. At each receive antenna, complex Gaussian noise with the
(same) variance $\sigma_{n,\mathsf{c}}^2$ is assumed.

Given the assumption that the transmit symbols $x_1,\dots,x_K$ are drawn from a
complex-valued integer ring $\I$, i.e., $x_k\in\G$ or $x_k\in\E$, $k=1,\dots,K$, and
neglecting the noise, the receive symbols are drawn from (a subset of) a complex-valued
lattice as defined in~\eqref{eq:lattice}. Thereby, the generator matrix is given as
$\ve{G}=\ve{H}$.

The complex-valued system equation is equivalently expressed by the real-valued equation%
\begin{equation}
	\underbrace{
	\begin{bmatrix}
		\ve{y}^{(1)}\\
		\ve{y}^{(2)}
	\end{bmatrix}
	}_{\ve{y}_\mathsf{r}}
	=
	\underbrace{
	\begin{bmatrix}
		\ve{H}^{(1)} & -\ve{H}^{(2)}\\
		\ve{H}^{(2)} & \phantom{-}\ve{H}^{(1)} 
	\end{bmatrix}
	}_{\ve{H}_\mathsf{r}}
	\underbrace{
	\begin{bmatrix}
		\ve{x}^{(1)}\\
		\ve{x}^{(2)}
	\end{bmatrix}
	}_{\ve{x}_\mathsf{r}}
	+
	\underbrace{
	\begin{bmatrix}
		\ve{n}^{(1)}\\
		\ve{n}^{(2)}
	\end{bmatrix}
	}_{\ve{n}_\mathsf{r}} \; .
\end{equation}
Consequently, given the case that $\ve{x}\in\G^K$, an equivalent $2N \times 2K$
real-valued lattice (over $\I=\Z$) with the generator matrix $\ve{H}_\mathsf{r}$ is
spanned, where $\ve{x}_{\mathsf{r}}\in\Z^{2K}$. Lattices over $\I=\E$ can be expressed
according to~\eqref{eq_equiv_real_E}. 

\subsubsection{Quaternion-Valued Transmission}

It is possible to extend the complex-valued MIMO uplink model to the quaternion-valued
case. Then, the channel matrix is represented as%
\begin{equation}	\label{eq:MIMO_channel_H}
	\ve{H} = \big[\,h_{n,k}\,\big]_{\substack{n=1,\dots,N\\ k=1,\dots,K}} 
		\in \Ha^{ N\times K} \; ,
\end{equation}
i.e., it contains i.i.d.\ quaternion-valued Gaussian channel gains $h_{n,k}\in\Ha$
(Gaussian distribution in each component) with the total variance
$\sigma_{h,\mathsf{q}}^2=2\,\sigma_{h,\mathsf{c}}^2=2$, i.e., unit-variance channel
gains per complex component as often considered in MIMO communications. Under the
assumption that the transmit symbols $x_1,\dots,x_K$ are now chosen from a
quaternion-valued integer ring $\I$, i.e., $x_k\in\L$ or $x_k\in\Hu$, $k=1,\dots,K$,
and neglecting the noise, the receive symbols form a subset of a quaternion-valued
lattice according to~\eqref{eq:lattice} with $\ve{G}=\ve{H}$.

Given the scenario of MIMO uplink transmission via dual-polarized antennas as depicted
in Fig.~\ref{fig:dualMIMO}, the quaternion-valued channel gains describe the links
between the \emph{transmit-antenna pairs} $k$, $k=1,\dots,K$, and the
\emph{receive-antenna pairs} $n$, $n=1,\dots,N$. At each receive-antenna pair,
quaternion-valued noise samples with the (total) variance $\sigma_{n,\mathsf{q}}^2$ are
assumed that contain the noise of both polarization planes. When using the quaternionic
model for the representation of the Alamouti coding scheme, the horizontally-polarized
antennas can be replaced by ``virtual'' antennas that represent the transmission at
another time step or in another frequency band instead.

\begin{figure}[t]
	\centerline{%
		\includegraphics{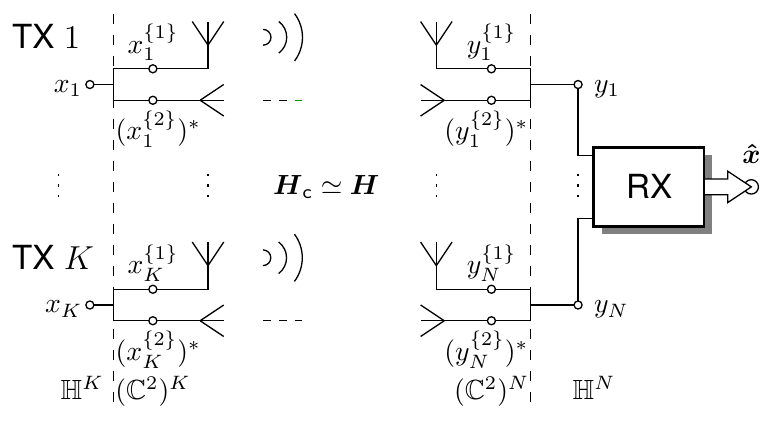}%
	}%
	\caption{\label{fig:dualMIMO}%
		Quaternion-valued MIMO uplink transmission (MIMO multiple-access channel)
		with dual-polarized antennas. $K$ uncoordinated pairs of transmit antennas
		(in each TX, one for vertical and one for horizontal polarization) radiate the
		users' data symbols to the $N$ pairs of receive antennas. The quaternion-valued
		receive symbols $y_1,\dots,y_N$ are processed jointly in a central receive unit
		(RX) in order to obtain estimates of the transmit symbols $x_1,\dots,x_K$.}%
\end{figure}

As becomes apparent from Fig.~\ref{fig:dualMIMO}, the quaternion-valued transmission
is actually realized by the equivalent $2N\times 2K$ complex-valued system model%
\begin{equation}	\label{eq:equiv_compl_MIMO}
	\underbrace{
	\begin{bmatrix}
		\ve{y}^{\{1\}}\\
		(\ve{y}^{\{2\}})^*
	\end{bmatrix}
	}_{\ve{y}_\mathsf{c}}
	=
	\underbrace{
	\begin{bmatrix}
		\ve{H}^{\{1\}} & -\ve{H}^{\{2\}}\\
		(\ve{H}^{\{2\}})^* & (\ve{H}^{\{1\}})^* 
	\end{bmatrix}
	}_{\ve{H}_\mathsf{c}}
	\underbrace{
	\begin{bmatrix}
		\ve{x}^{\{1\}}\\
		(\ve{x}^{\{2\}})^*
	\end{bmatrix}
	}_{\ve{x}_\mathsf{c}}
	+
	\underbrace{
	\begin{bmatrix}
		\ve{n}^{\{1\}}\\
		(\ve{n}^{\{2\}})^*
	\end{bmatrix}
	}_{\ve{n}_\mathsf{c}} \; .
\end{equation}
Hence, if $\ve{x}\in\L^K$, the quaternion-valued lattice $\Lat(\ve{H})$ is
isomorphically represented by a complex lattice ($\I=\G$) with the generator matrix
$\ve{G}=\ve{H}_\mathsf{c}$. In the same way, on the basis of~\eqref{eq:equiv_real_quat},
an equivalent real-valued lattice ($\I=\Z$) can be defined. A quaternion-valued
transmission with symbols drawn from $\I=\Hu$ can be represented by equivalent lattices
over $\I=\Z$ and $\I=\G$ according to~\eqref{eq:equiv_real_hur} and
\eqref{eq:equiv_compl_hur}, respectively.

\subsection{Lattice-Reduction-Aided and Integer-Forcing Equalization}	
	\label{subsec:lra}

In the MIMO (uplink) scenario, handling the (multi-user) interference is a crucial point.
It is well-known that (purely) linear channel equalization according to the zero-forcing
(ZF)  or the minimum mean-square error (MMSE) criterion does not achieve a satisfactory
performance since it does not exploit the MIMO channel's (spatial) diversity, see, e.g.,
\cite{Zheng:03,Tse:05}.

In contrast, the concepts of \emph{lattice-reduction-aided} (LRA) linear equalization
\cite{Yao:02,Windpassinger:03,Wuebben:11} and the closely related integer-forcing (IF)
linear equalization \cite{Zhan:14} are suited in order to achieve the receive diversity
of the MIMO channel \cite{Taherzadeh:07}, see also the discussion below. In the following,
a universal description thereof is provided that enables a lattice-based channel
equalization over all above-mentioned integer rings~$\I$. To that end, please note that
both the LRA and the IF receiver share the same equalization task; they only differ in the
way the integer interference is treated in combination with the channel code,
see~\cite{Fischer:19} or the section on soft-decision decoding given below.

The generalized concept is described as follows: The transmit symbols $x_1,\dots,x_K$ are
chosen from a subset $\mathcal{A}$ of the integer ring~$\I$. Then, as illustrated in
Fig.~\ref{fig:if_lra}, the channel matrix $\ve{H}$ is \emph{not directly} equalized by its
pseudoinverse\footnote{%
	The $K \times N$ (left) pseudoinverse of an $N \times K$ matrix $\ve{G}$ is calculated
	as $\ve{G}^{+}=(\ve{G}^\H\ve{G})^{-1}\ve{G}^\H$. If $N=K$, $\ve{G}^{+}=\ve{G}^{-1}$.
	The Hermitian of $\ve{G}^{+}$ is denoted as $\ve{G}^{+\H}=(\ve{G}^{+})^{\H}$, and as
	$\ve{G}^{-\H}$ if $\ve{G}$ is a square matrix.%
}
$\ve{H}^{+}$. Instead, a \emph{transformed} channel $\ve{H}_\mathrm{tra}$ is first handled
via the $K \times N$ filter matrix $\ve{F}=\big[\ve{f}_1^\H,\dots,\ve{f}_K^\H\big]^\H$
with the rows $\ve{f}_1,\dots,\ve{f}_K$. Assuming the ZF criterion, the filter matrix is
obtained as\footnote{%
	The lattice spanned by $\ve{G}=\ve{H}^{+\H}$ is the lattice which is \emph{dual} to the
	one spanned by $\ve{G}=\ve{H}$, cf.~\cite{Conway:99,Ling:09,Fischer:19}.%
}
\cite{Taherzadeh:07,Fischer:19}%
\begin{equation}	\label{eq:fac_ZF}
	\ve{F}^\H=\ve{H}_\mathrm{tra}^{+\H}=\ve{H}^{+\H}
		\underbrace{\ve{Z}^{\H}}_{\ve{T}} \; ,
\end{equation}
where the transformation is expressed by the \emph{integer matrix}
$\ve{T}\in\I^{K \times K}$. In particular, the matrix $\ve{F}$ \emph{shapes} the
interference in such a way that only \emph{integer interference} is left before
decoding---this integer interference is particularly expressed by the integer matrix
$\ve{Z}=\ve{T}^\H\in\I^{K \times K}$ and finally reversed via the matrix
$\ve{Z}^{-1} = \ve{T}^{-\H}$ as illustrated in Fig.~\ref{fig:if_lra}.

\begin{figure}[t]
	\centerline{%
		\includegraphics{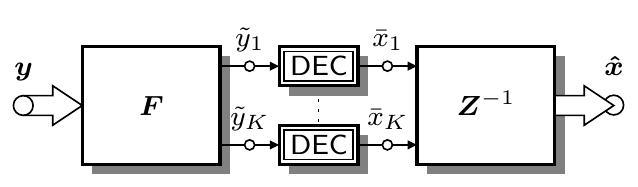}%
	}%
	\caption{\label{fig:if_lra}%
		General receiver structure of LRA/IF linear equalization \cite{Fischer:19}.
		The receive symbols in $\ve{y}$ are linearly equalized via the filter matrix
		$\ve{F}$ before channel decoding ($\mathsf{DEC}$) w.r.t.\ linear
		combinations $\bar{x}_k$ of the transmit symbols is applied. The remaining
		interference is resolved via the inverse of the integer matrix~$\ve{Z}$,
		resulting in estimates $\hat{x}_k$ of the original transmit symbols.}%
\end{figure}

In~\eqref{eq:fac_ZF}, the transformation matrix $\ve{T}$ (and, thus, $\ve{Z}$) should be
chosen in a way that the row norms of $\ve{F}$ are minimized. In the presence of
additive white Gaussian noise with the variance $\sigma_{n}^2$, they determine the
\emph{noise variances}%
\begin{equation}	\label{eq:noise_var}
	\sigma_{n,k}^2 = \|\ve{f}_k\|^2 \cdot \sigma_{n}^2 \; , \qquad k=1,\dots,K \; ,
\end{equation}
and, thus, the individual signal-to-noise ratios (SNRs) and the related mean-square errors
(MSEs) before decoding. In order to minimize the MSEs, the ZF criterion applied
in~\eqref{eq:fac_ZF} is not the optimum strategy. They can be lowered by employing the
\emph{MMSE criterion}. To this end, the matrices in~\eqref{eq:fac_ZF} can be replaced by
their \emph{augmented ones} \cite{Hassibi:00,Wuebben:04} according to
\cite{Fischer:16,Fischer:19}%
\begin{equation}	\label{eq:fac_MMSE}
	\ve{\mathcal{F}}^\H=\ve{\mathcal{H}}_\mathrm{tra}^{+\H}=
	\underbrace{\begin{bmatrix}
			\ve{H}\\
			\frac{\sigma_{n}}{\sigma_{x}}\ve{I}_{K}
		\end{bmatrix}^{+\H}
	}_{\ve{\mathcal{H}}^{+\H}}
	\underbrace{\ve{Z}^{\H}}_{\ve{T}} \; ,
\end{equation}
where $\ve{\mathcal{H}}$ denotes the $(N+K) \times K$ augmented channel matrix in
which the (square root of the) \emph{inverse SNR} is incorporated.\footnote{%
	Instead of using the augmented matrix $\ve{\mathcal{H}}$, often the
	Cholesky square root $\ve{L}^\H$ of
	$\ve{L}\ve{L}^\H=(\ve{H}^\H\ve{H}+{\sigma_{n}^2}/{\sigma_{x}^2})$
	is applied to calculate the MMSE variant of the	channel transformation
	in~\eqref{eq:fac_ZF}. Both approaches are equivalent since
	$\ve{\mathcal{H}}$ is an alternative square-root of
	$\ve{L}\ve{L}^\H$ \cite{Stern:19,Fischer:19}.%
}
The filter matrix $\ve{F}$ for MMSE linear equalization is then given as the
$K \times N$ \emph{left part} of the $K \times (N+K)$ \emph{augmented filter matrix}
$\ve{\mathcal{F}}$, and the noise variances in \eqref{eq:noise_var} are determined
by the (complete) rows of $\ve{\mathcal{F}}$ instead of $\ve{F}$ \cite{Stern:19}.

The crucial performance criterion is usually the \emph{worst-link SNR}, i.e., the
lowest SNR among the $K$ data streams \emph{before decoding}. It dominates the error
curves in case of uncoded transmission (spatial \emph{diversity order} \cite{Tse:05})
as well as the coded performance expressed in achievable bit rate according to
Shannon \cite{Zhan:14}. Considering the worst-link SNR as the performance criterion
and applying the MMSE criterion according to~\eqref{eq:fac_MMSE}, the optimization
problem reads%
\begin{equation}	\label{eq:SBPfac}
	\underbrace{\ve{Z}}_{\ve{T}^\H}=\argmin_{{\ve{Z} \in \I^{K \times K}}}
	\max_{k=1,\dots,K} \big\{ 
	\|\underbrace{\ve{\mathcal{H}}^{+\H}}_{\ve{G}} \ve{z}_k^\H \|^2 \big\} \; ,
\end{equation}
where the rows of $\ve{Z}$ correspond to the columns of its Hermitian matrix
$\ve{Z}^\H= [\ve{z}_1^\H,\dots,\ve{z}_K^\H]$.
 
In the initial publications on lattice-based equalization
\cite{Yao:02,Windpassinger:03}, see also \cite{Wuebben:11}, the integer matrix
$\ve{Z}$ from~\eqref{eq:SBPfac} is restricted to the set of unimodular matrices,
i.e., $\det(\ve{Z}^\H\ve{Z})=1$, in order to ensure the existence of an inverse
matrix $\ve{Z}^{-1}\in\I^{K \times K}$ for integer equalization. Then,
\eqref{eq:SBPfac} corresponds to the SBP as defined in~\eqref{eq:sbp} which can be
solved (approximately) by lattice-basis-reduction algorithms like the LLL one.

However, this unimodularity constraint is actually not required
\cite{Zhan:14,Fischer:19}. If $\ve{Z}$ describes a
\emph{full-rank integer linear combination} of the transmit symbols, i.e., if the
constraint $\rank(\ve{Z})=K$ is imposed, $\ve{G}=\ve{\mathcal{F}}^\H$ may only
define a \emph{sublattice} of $\ve{G}=\ve{\mathcal{H}}^{+\H}$,
cf.\ Sec.~\ref{subsec:generalized_lattice}. Hence, after linear equalization
via~$\ve{F}$, the vector of linear combinations of the transmit symbols,
$\ve{Z}\ve{x}$, may be drawn from a \emph{subspace} of $\I^K$. Nevertheless, only
valid lattice points are obtained. Using a suited lattice-decoding strategy, the
linear combinations can still successfully be reconstructed, and the non-unimodular
relaxation does not impair the equalization approach. Consequently, the
SIVP~\eqref{eq:sivp} has to be solved w.r.t.\ the generator matrix
$\ve{G}=\ve{\mathcal{H}}^{+\H}$. As discussed in Sec.~\ref{sec:algorithms}, this is
optimally performed by the calculation of the successive-minima vectors.

\subsection{Diversity Orders and Asymptotic Rates}	\label{subsec:div_rate}

The \emph{diversity order} describes the slope of the symbol error curve of
\emph{uncoded} transmission if
\emph{the average over all possible channel realizations and users}
is considered. Given the symbol error ratio (SER) over the SNR
$\sigma_{x}^2/\sigma_{n}^2$, it is defined as \cite{Tse:05}%
\begin{equation}
	\Delta = -\lim_{\frac{\sigma_{x}^2}{\sigma_{n}^2}\to \infty} 
		\frac{\log_{10}\mathrm{SER}}
		{\log_{10}(\frac{\sigma_{x}^2}{\sigma_{n}^2})} \; . 
\end{equation}
Hence, the SER drops by $\Delta$ decades per $10~\dB$ SNR increase.

It has been proven in~\cite{Taherzadeh:07} that LRA equalization exploits the MIMO
channel's receive diversity. To this end, a bound on the product of the norms of the
basis vectors over the volume has to exist. According to
Theorem~\ref{th:LLLPbound_general}, this property is fulfilled for all (generalized)
variants of LLL reduction and the related successive-minima vectors.

The achievable receive diversity can be generalized to%
\begin{equation}	\label{eq:diversity}
	\Delta_{\mathrm{rec}} = \frac{D_\mathsf{r}}{2} N  \; ,
\end{equation}
where $D_\mathsf{r}=1$ is valid for real, $D_\mathsf{r}=2$ for complex, and
$D_\mathsf{r}=4$ for quaternionic Gaussian MIMO channels. As a consequence, the diversity
is doubled for quaternionic channels in comparison to complex ones. This can either be
derived from the fact that the number of independent Gaussian random variables per scalar
coefficient is doubled as well \cite{Stierstorfer:09}, or from the point of view that
its equivalent complex-valued MIMO representation~\eqref{eq:equiv_compl_MIMO} actually 
forms an Alamouti code---if the complex channel gains are independent in both planes,
the diversity in the Alamouti scheme is doubled \cite{Alamouti:98}, see also the
discussion in~\cite{Qureshi:18}.

The \emph{asymptotic (bit) rate} often serves as a quantity for quality assessment in
\emph{coded transmission}. It describes the maximum bit rate
\emph{for the worst-case user}\footnote{%
	If all users are assumed/forced to employ the same channel code and the same signal
	constellation, they usually share the same rate, see, e.g., \cite{Zhan:14}.%
}
and \emph{one particular channel realization}. In the LRA/IF setting, it is given as
the Shannon capacity for an infinite-dimensional modulo channel
\cite{Forney:00,Zhan:14}. In the setup at hand, it is universally expressed by%
\begin{equation}	\label{eq:rate}
	R = \frac{D_\mathsf{r}}{2} \max_{k=1,\dots,K} 
		\log_2\left(\frac{\sigma_{x}^2}{\|\ve{f}_k\|^2\cdot\sigma_{n}^2}\right) \; .
\end{equation}

\subsection{Lattice Constellations and Soft-Decision Decoding}

We briefly review different types of lattice constellations, i.e., finite subsets
$\mathcal{A}\subset\I$ that are used as signal sets in lattice-based MIMO transmission,
including their properties and gains.

In complex transmission, most often QAM constellations are employed that form (shifted)
subsets of $\G$. Besides, especially in optical communications, \emph{dual-polarized QAM}
\cite{Karlsson:16} is popular which corresponds to (shifted) subsets of $\L$ if
interpreted over quaternion space. Alternatively, lattice-based constellations can be
defined as $\mathcal{A}\subset\E$ \cite{Tunali:15,Stern:15,Stern:19,Lyu:20} and
$\mathcal{A}\subset\Hu$
\cite{Karlsson:16,Stern:18,Frey:20,Guzeltepe:13,Guzeltepe:14,Rohweder:21}, respectively.
Since these rings constitute denser packings than their related $\Z^{D_\mathsf{r}}$-based
ones, the number of signal points within a certain hypervolume is increased, i.e., a
\emph{packing gain} is achieved. This gain can either be used to increase the cardinality
(data rate) keeping the SNR fixed, or to lower the SNR (constellation's variance) for a
fixed cardinality/rate. Noteworthy, in LRA/IF equalization, this packing gain directly
adds up to the SNR gain obtained from solving the SBP/SMP over the particular ring $\I$
(\emph{factorization} or \emph{equalization gain}), cf.\ Sec.~\ref{sec:bounds}.

The packing gain has to be distinguished from the \emph{shaping gain} that is present if
the finite constellation is not bounded by a hypercube but by other geometric shapes
like a hexagonal Voronoi cell or even hyperspheres. This gain does, in general, not
depend on the particular signal lattice $\I$ (and is not even restricted to lattice-based
constellations). It additionally adds up to the other two gains mentioned above.

If lattice-based equalization is combined with hard-decision decoding, no additional
constraint on the (lattice) constellation is imposed; the decoding operation in
Fig.~\ref{fig:if_lra} is realized by the quantization $\Q_\I\{\cdot\}$ followed by the
decoding algorithm after integer equalization. The situation changes when soft-decision
decoding is applied: in the LRA/IF receiver, linear combinations of transmit symbols have
to be decoded directly, see also \cite{Zhan:14,Fischer:19,Stern:19}. Two different
strategies have been derived to handle this situation: As initially proposed in the IF
concept \cite{Zhan:14}, an isomorphism can be established between the constellation points
and the code symbols. It is enabled by $p$-ary \emph{algebraic constellations}, $p$ prime,
that are constructed by a special interaction between signal lattice and shaping region
(see above). Such constellations have been proposed and assessed over Gaussian and
Eisenstein integers in \cite{Huber:94a,Huber:94b,Stern:15,Stern:19}. Moreover, first
results on Hurwitz-based algebraic constellations are available
\cite{Guzeltepe:13,Guzeltepe:14,Rohweder:21}. An alternative approach is based on a special
kind of multilevel-coding scheme \cite{Chae:17,Fischer:18} in which the addition of
Gaussian integers corresponds to the addition with carry over binary arithmetic
\cite{Fischer:18} and the addition of Eisenstein integers to the addition with carry over
ternary arithmetic \cite{Stern:19b}. An adaption to other rings like the Hurwitz integers
is still an open point.

\subsection{Numerical Evaluation and Comparison}

The performance of the generalized lattice algorithms in the MIMO uplink scenario is
finally assessed by means of numerical simulations. To this end, $10^6$ i.i.d.\ complex
and quaternionic Gaussian \emph{channel matrices} have been generated according
to~\eqref{eq:MIMO_channel} and~\eqref{eq:MIMO_channel_H}, respectively. Please note
that---in contrast to the numerical evaluations in Sec.~\ref{subsec:numBounds}---those
matrices do not directly form the generator matrices of the lattices to be considered.
Instead, since we assume that the MMSE criterion is applied and that the filter matrix is
calculated as defined in~\eqref{eq:fac_MMSE}, the generator matrices are given by the
pseudoinverses of the respective augmented channel matrices
(\emph{MMSE dual-lattice approach}, cf.~\cite{Fischer:19}). Hence, the statistical
distributions differ from the straightforward evaluation of i.i.d.\ Gaussian matrices.
The channel equalization is performed using the LRA/IF receiver as explained in
Sec.~\ref{subsec:lra}. For all kinds of LLL reduction, the optimal quality parameter
$\delta=1$ is assumed once again.

\subsubsection{Symbol Error Ratios in Uncoded Transmission}

We start with the assessment of the different SERs which are obtained in case of uncoded
transmission. These curves are not only relevant in terms of the quality of the different
lattice algorithms, but are also suited the evaluate the different diversity behavior,
cf.\ Sec.~\ref{subsec:div_rate}. In order to simulate a block-fading environment, bursty
transmissions with $10^3$ symbols/noise samples per channel matrix have been performed.
Due to uncoded transmission, the decoders in Fig.~\ref{fig:if_lra} are realized by the
quantization to the particular integer ring $\I$, see above. The data symbols are finally
estimated by an ML decision based on the signal constellation $\mathcal{A}$. Please note
that we focus on the comparison of the \emph{equalization performance}, i.e., we are not
interested in shaping or packing gains obtained by different constellations. Since, in the
comparisons, all constellations posses the same variance $\sigma_{x}^2$, all gains are
enabled by a superior performance of the algorithms given particular rings.

First, we restrict to the complex-valued case. For lattices defined over the Gaussian
integers, the zero-mean $4$-ary QAM constellation $\mathcal{A}_\G=\{\pm1\pm\i\}\subset\G$
with variance $\sigma_{x}^2=2$ has been employed. Since its components are independent,
they can individually be processed using the equivalent real-valued representation
according to~\eqref{eq:equiv_real_compl}, i.e., if algorithms over $\Z$ are
considered.\footnote{%
	Please note that, in this work, one symbol error is always defined w.r.t.\ the
	original complex or quaternion-valued signal constellation $\mathcal{A}$.%
}
Hence, the same constellation is present in both representations. Unfortunately,
$\mathcal{A}_\G\not\subset\E$, i.e., we cannot employ this constellation for lattices
over $\E$ any more. Thus, the zero-mean signal constellation
$\mathcal{A}_\E=\{1,-1,\sqrt{3}\,\i,-\sqrt{3}\,\i\}$, with $\sigma_{x}^2=2$, has been
used. Since both signal constellations possess the same variance, a fair comparison is
enabled, i.e., a packing or shaping gain is not present.

\begin{figure}
		\centerline{%
			\includegraphics{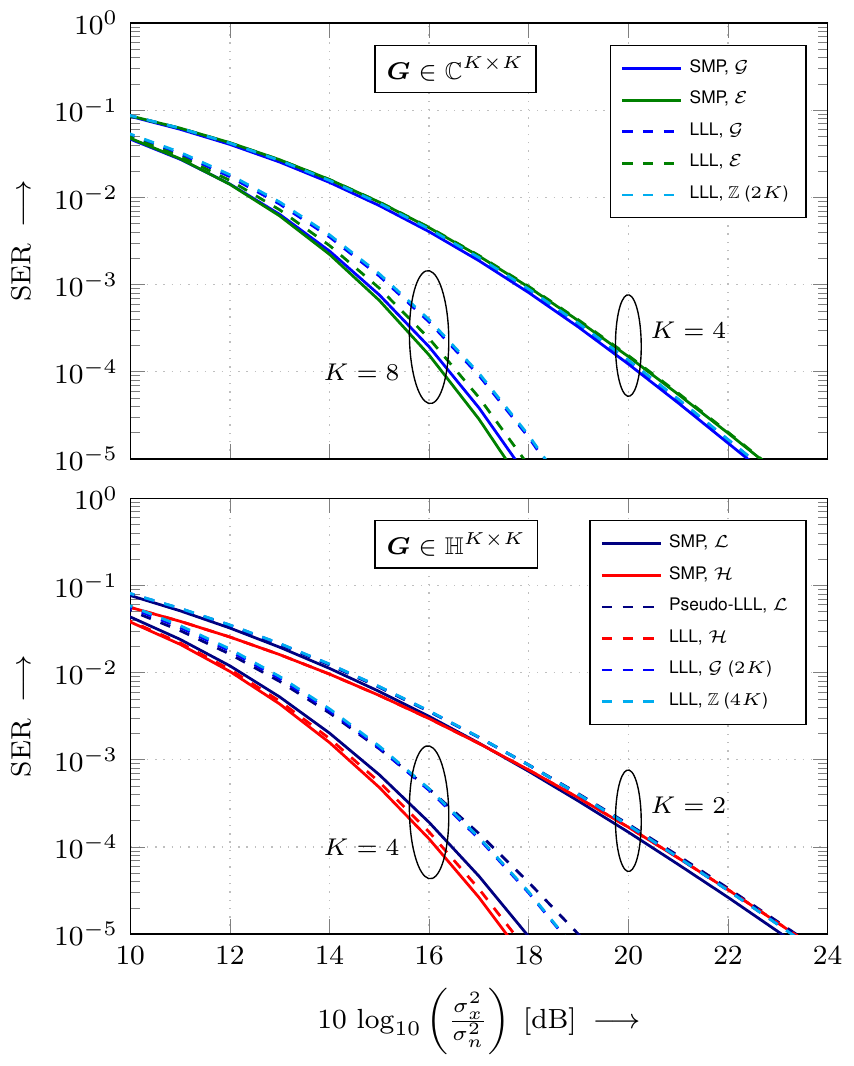}%
		}%
	\caption{\label{fig:sim}Symbol error ratio over the SNR in dB for $K \times K$
		uncoded MIMO uplink transmission and LRA/IF equalization (MMSE criterion)
		over different integer rings and algorithms obtained by numerical simulations.
		For LLL reduction, the optimal parameter $\delta=1$ is assumed. Top:
		complex-valued transmission with the signal constellations
		$\mathcal{A}_\G=\{\pm1\pm\i\}$ and
		$\mathcal{A}_\E=\{1,-1,\sqrt{3}\,\i,-\sqrt{3}\,\i\}$. Bottom:
		quaternion-valued transmission with the constellation
		$\mathcal{A}_\L=\{\pm 1  \pm\i  \pm\j \pm \k\}$.}%
\end{figure}

In Fig.~\ref{fig:sim} (Top), the SER is plotted over the SNR for complex transmission
and two scenarios with $N=K=4$ and $N=K=8$, respectively. The expected diversity orders
$\Delta_\mathrm{rec}=4$ and $\Delta_\mathrm{rec}=8$ are clearly visible. If $N=K=4$,
all curves are still quite similar. However, in the high-SNR regime, the use of the
Eisenstein lattice may even be disadvantageous. This is due to the fact that the
application of the particular algorithms yields almost the same result since the
dimensions are low, i.e., nearly the same (maximum) noise enhancement is present over
$\G$ and $\E$. However, the quantization to $\E$ results in a higher probability for
false quantization due to six nearest neighbors instead of four
(cf.\ Sec.~\ref{sec:extensions}). In coded transmission, this effect becomes irrelevant.
Regarding the case $N=K=8$, the decreased maximum noise enhancement for $\E$ is now the
dominating part. It has a positive impact such that a horizontal (SNR) gain is achieved if
the ELLL algorithm is applied or if the respective successive-minima vectors are calculated.
In summary, as expected from the results provided in Sec.~\ref{sec:bounds}, the potential
gains increase with the dimensions even though the statistical model slightly differs from a
Gaussian one (MMSE dual-lattice approach~\eqref{eq:fac_MMSE}).

Next, we consider the case of quaternion-valued transmission. To this end, the $16$-ary
constellation $\mathcal{A}_\L=\{\pm 1  \pm\i  \pm\j \pm \k\}$ ($4$QAM per complex component)
with $\sigma_{x}^2=4$ has been employed for which
$\mathcal{A}_\L\subset\L \simeq \G^2\simeq\Z^4$ as well as $\mathcal{A}_\L\subset\Hu$ are
valid, since $\L\subset\Hu$. Hence, this constellation can be used in combination with lattices
over $\L$, $\Hu$, $\G$, and $\Z$. The related SER curves are illustrated in Fig.~\ref{fig:sim}
(Bottom) for two particular scenarios with $N=K=2$ and $N=K=4$. First, we see that, in
accordance with~\eqref{eq:diversity}, the diversity orders are the same as in Fig.~\ref{fig:sim}
(Top), even though the dimensions $K$ are halved. For $N=K=2$, the curves are again quite the
same. Moreover, we also have the effect that the quantization to $\Hu$ may be more erroneous
than the one to $\L$ due to the increased number of nearest neighbors (24 instead of 8). In
contrast, for $N=K=4$, the Hurwitz lattice achieves a significant gain over $\L$, or its complex
and real-valued equivalents, which perform nearly the same if applied in combination with LLL
reduction. The pseudo-LLL reduction shows the worst performance. However---even though
respective theoretical bounds cannot be derived---it still seems to approach the same diversity
behavior in practice.

\subsubsection{Bit Rates in Coded Transmission}

Finally, the performance is assessed w.r.t.\ coded transmission for which the rate according
to~\eqref{eq:rate}---depending on the particular noise enhancement (maximum squared norm) and
the SNR---is the relevant quantity. In Fig.~\ref{fig:rate}, both the expectations and the
$0.01$-quantiles of the achievable rates are illustrated, assuming
$\sigma_{x}^2/\sigma_{n}^2 \corr 20~\dB$ (MMSE criterion). In particular, the latter represent
the rates for which an outage of $1~\%$ can be expected.

\begin{figure}
	\centerline{%
		\includegraphics{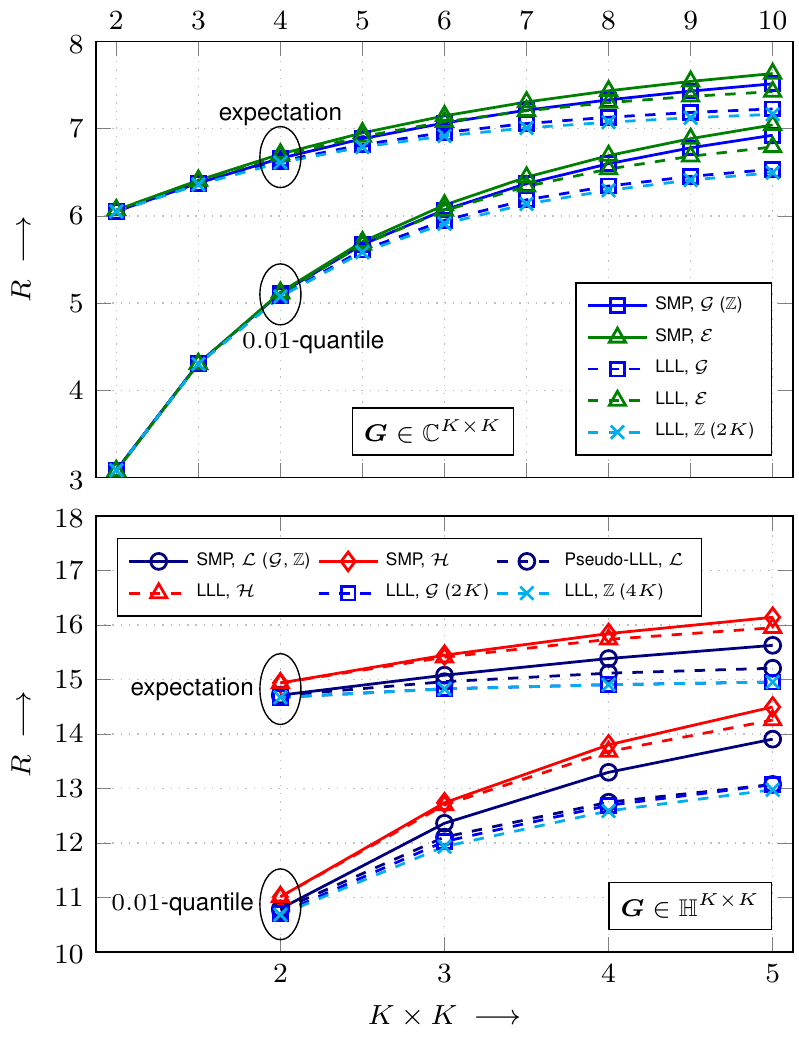}%
	}%
	\caption{\label{fig:rate}Achievable rates for lattice-based linear equalization
		according to~\eqref{eq:rate} in coded transmission 
		($\sigma_{x}^2/\sigma_{n}^2 \corr 20~\dB$, MMSE criterion) over the
		$K \times K$ MIMO channel for different integer rings and algorithms
		obtained by numerical simulations. Both the expectations and the
		$0.01$-quantiles are shown. Top: complex-valued transmission. Bottom:
		quaternion-valued transmission.}%
\end{figure}

In the complex-valued case (Fig.~\ref{fig:rate} (Top)), it is visible that the use of
lattices over Eisenstein integers enables a slight gain w.r.t\ the achievable rate, which
increases with the dimensions in accordance with Sec.~\ref{sec:bounds}. Just like in the
numerical analysis of the maximum norms in Fig.~\ref{fig:statNorms} for i.i.d.\ Gaussian
matrices, the RLLL algorithm performs the worst.

In quaternion-valued transmission (Fig.~\ref{fig:rate} (Bottom)), lattices over Hurwitz
integers significantly enhance the achievable rates. The QLLL reduction over $\Hu$
approximates the solutions to the SMP quite well. Moreover, it is quite interesting that---by
analogy with the curves in Fig.~\ref{fig:statNorms}---, the pseudo-QLLL reduction over $\L$
performs substantially better than its complex or real-valued ``genuine'' counterparts.
Hence, in practice, its usage may be taken into account---even though no theoretical
performance guarantees can be given.

Summarizing the above observations, we can conclude that the main results from
Sec.~\ref{sec:bounds} also hold for the channel model at hand. This especially concerns
the potential gains which increase with the dimensions of the MIMO channel. In addition,
by performing a related complexity analysis, one would obtain results almost identical
to the ones provided in Figs.~\ref{fig:statList} and~\ref{fig:mulStat}.

\section{Summary and Outlook}	\label{sec:sum}

\noindent In this paper, algorithms and bounds known from the field of \emph{real-valued}
lattice problems have been generalized and adapted to operate over complex and quaternion
numbers. To this end, a review of the particular arithmetic and the properties of these
number sets has been given first. Then, generalized variants of LLL reduction and a
list-based algorithm to determine the successive minima of a lattice have been given. In
addition to lattices over the set of (real-valued) integers $\Z$, they can operate over
the Gaussian (complex integers) as well as the Eisenstein integers for the complex case,
and the Lipschitz as well as the Hurwitz integers for the quaternion-valued case. For all
of these integers sets, bounds on the length of the first basis vector obtained from LLL
reduction as well as on the first successive minimum have been derived. These bounds were
complemented by bounds for the particular orthogonality defects, incorporating all basis
or successive-minima vectors. The provided results indicate that lattices over the
Eisenstein integers (instead of the Gaussian integers) may be beneficial in the
complex-valued case, and that lattices over the Hurwitz integers (instead of the Lipschitz
integers) may be of advantage in the quaternion-valued case. When running the presented
successive-minima algorithm over these rings, the expected complexity is only increased a
little bit in comparison to their counterparts. Moreover, when the LLL reduction is applied
over these complex or quaternion-valued integer rings, the expected complexity is
significantly decreased in comparison to an equivalent real-valued reduction. Finally,
particular application scenarios have been identified. These considerations included their
use in the field of MIMO communications, in particular in lattice-based equalization for
the MIMO uplink channel, where gains can be expected if lattices and related constellations
over Eisenstein and Hurwitz integers, respectively, are employed.

Future work could deal with the extension to the multi-dimensional, e.g., the
eight-dimensional case, in which several time steps or frequencies could be combined to one
symbol. In addition, the adaption of other criteria/algorithms for lattice basis reduction,
e.g., HKZ \cite{Korkine:73} or Minkowski \cite{Minkowski:91} reduction, could be studied and
related bounds could be derived for the complex and/or quaternion-valued case. Moreover,
since algebraic constellations are required for the use in the IF concept, the initial work
on four-dimensional ones in \cite{Guzeltepe:13,Guzeltepe:14,Rohweder:21} could be extended
to derive constellations similar to the complex ones in
\cite{Huber:94a,Huber:94b,Stern:15,Stern:19}. The same holds for quaternionic
multilevel-coding schemes similar to~\cite{Chae:17,Fischer:18,Stern:19b}. Besides, a deeper
analysis of the pseudo-QLLL algorithm and related bounds could be given and other
non-Euclidean rings or algebraic structures similar to~\cite{Lyu:20} be studied.

\appendices

\section{Helper Functions for the List-Based Determination of the Successive Minima}
\label{app:helperSMP}

\begin{algorithm}[t]{
		\small
		\caption{\label{alg:toZ} Equivalent Real-Valued Representation over $\Z$.}
		$\ve{G}_{\mathsf{r},\I} = \textsc{RingToZ}(\ve{G},\I)$
		\begin{algorithmic}[1]
			\Switch {$\I$}
				\Case{$\Z$}				
					\State{$\ve{G}_{\mathsf{r},\I} = \ve{G}$ }
				\EndCase
				\Case{$\G$}
					{\color{\commentcolor}\Comment{\eqref{eq:equiv_real_compl}}}
					\State{$\ve{G}_{\mathsf{r},\I} =
						\begin{bmatrix} \setlength{\tabcolsep}{3pt}
							\ve{G}^{(1)} & -\ve{G}^{(2)}\\
							\ve{G}^{(2)} & \phantom{-}\ve{G}^{(1)}
						\end{bmatrix}$ }
				\EndCase
				\Case{$\E$}
					{\color{\commentcolor}\Comment{\eqref{eq_equiv_real_E}}}
					\State{$\ve{G}_{\mathsf{r},\I} = 
						\begin{bmatrix} 
							\ve{G}^{(1)} & -\ve{G}^{(2)}\\
							\ve{G}^{(2)} & \phantom{-}\ve{G}^{(1)}
						\end{bmatrix}
						\begin{bmatrix}
							\ve{I}_K & -\frac{1}{2}\ve{I}_K\\
							\ve{0}_K & \frac{\sqrt{3}}{2} \ve{I}_K
						\end{bmatrix} $ }
				\EndCase
				\Case{$\L$}
					{\color{\commentcolor}\Comment{\eqref{eq:equiv_real_quat}}}
					\State{$\ve{G}_{\mathsf{r},\I} =
							\begin{bmatrix}
							\ve{G}^{(1)} & -\ve{G}^{(2)} & 
								-\ve{G}^{(3)} & -\ve{G}^{(4)} \\
							\ve{G}^{(2)} & \phantom{-}\ve{G}^{(1)} & 
								-\ve{G}^{(4)} & \phantom{-}\ve{G}^{(3)} \\
							\ve{G}^{(3)} & \phantom{-}\ve{G}^{(4)} & 
								\phantom{-}\ve{G}^{(1)} & -\ve{G}^{(2)} \\
							\ve{G}^{(4)} & -\ve{G}^{(3)} & 
								\phantom{-}\ve{G}^{(2)} & \phantom{-}\ve{G}^{(1)} \\
						\end{bmatrix}$ }
				\EndCase
				\Case{$\Hu$}
					{\color{\commentcolor}\Comment{\eqref{eq:equiv_real_hur}}}
					\State{$\ve{G}_{\mathsf{r},\I} =
							\begin{bmatrix}
								\ve{G}^{(1)} & -\ve{G}^{(2)} & 
									-\ve{G}^{(3)} & -\ve{G}^{(4)} \\
								\ve{G}^{(2)} & \phantom{-}\ve{G}^{(1)} & 
									-\ve{G}^{(4)} & \phantom{-}\ve{G}^{(3)} \\
								\ve{G}^{(3)} & \phantom{-}\ve{G}^{(4)} & 
									\phantom{-}\ve{G}^{(1)} & -\ve{G}^{(2)} \\
								\ve{G}^{(4)} & -\ve{G}^{(3)} & 
									\phantom{-}\ve{G}^{(2)} & \phantom{-}\ve{G}^{(1)} \\
							\end{bmatrix} \cdot$ }
					\Statex {\qquad \qquad\qquad\qquad$\begin{bmatrix}
							\ve{I}_K &\ve{0}_K &\ve{0}_K & \frac{1}{2}\ve{I}_K \\
							\ve{0}_K &\ve{I}_K &\ve{0}_K & \frac{1}{2}\ve{I}_K \\
							\ve{0}_K &\ve{0}_K &\ve{I}_K & \frac{1}{2}\ve{I}_K \\
							\ve{0}_K &\ve{0}_K &\ve{0}_K & \frac{1}{2}\ve{I}_K \\
						\end{bmatrix}$}
				\EndCase
			\EndSwitch
		\end{algorithmic}
	}
\end{algorithm}

\noindent In this appendix, some procedures are listed which are incorporated in
Algorithm~\ref{alg:smp} (determination of the successive minima).

Algorithm~\ref{alg:toZ} creates the equivalent real-valued representation of the
generator matrix $\ve{G}$ depending on the integer ring~$\I$. These representations
have been derived in Sec.~\ref{sec:extensions}.

\begin{algorithm}[tb]{
		\small
		\caption{\label{alg:fromZ} Reconversion from Representation over $\Z$.} %
		$\ve{C}_\mathrm{u} = \textsc{ZToRing}(\ve{C},\I)$
		\begin{algorithmic}[1]
			\Switch {$\I$}
			\Case{$\Z$}
			\State{$\ve{C}_\mathrm{u} = \ve{C}$ }
			\EndCase
			\Case{$\G$}
			\State{$\ve{C}_\mathrm{u}^{(1)}=\ve{C}_{1:K,1:N_\mathrm{c}}$}
			\State{$\ve{C}_\mathrm{u}^{(2)}=\ve{C}_{K+1:2K,1:N_\mathrm{c}}$}
			\EndCase
			\Case{$\E$}
			\State{$\ve{\tilde{C}}=
				\begin{bmatrix}
					\ve{I}_K & -\frac{1}{2}\ve{I}_K\\
					\ve{0}_K & \frac{\sqrt{3}}{2} \ve{I}_K
				\end{bmatrix} 
				\cdot \ve{C}$ }
			\State{$\ve{C}_\mathrm{u}^{(1)}=\ve{\tilde{C}}_{1:K,1:N_\mathrm{c}}$}
			\State{$\ve{C}_\mathrm{u}^{(2)}=\ve{\tilde{C}}_{K+1:2K,1:N_\mathrm{c}}$}
			\EndCase
			\Case{$\L$}
			\State{$\ve{C}_\mathrm{u}^{(1)}=\ve{C}_{1:K,1:N_\mathrm{c}}$}
			\State{$\ve{C}_\mathrm{u}^{(2)}=\ve{C}_{K+1:2K,1:N_\mathrm{c}}$}
			\State{$\ve{C}_\mathrm{u}^{(3)}=\ve{C}_{2K+1:3K,1:N_\mathrm{c}}$}
			\State{$\ve{C}_\mathrm{u}^{(4)}=\ve{C}_{3K+1:4K,1:N_\mathrm{c}}$}
			\EndCase
			\Case{$\Hu$}
			\State {$\ve{\tilde{C}}= \begin{bmatrix}
					\ve{I}_K &\ve{0}_K &\ve{0}_K & \frac{1}{2}\ve{I}_K \\
					\ve{0}_K &\ve{I}_K &\ve{0}_K & \frac{1}{2}\ve{I}_K \\
					\ve{0}_K &\ve{0}_K &\ve{I}_K & \frac{1}{2}\ve{I}_K \\
					\ve{0}_K &\ve{0}_K &\ve{0}_K & \frac{1}{2}\ve{I}_K \\
				\end{bmatrix}\cdot \ve{C}$}
			\State{$\ve{C}_\mathrm{u}^{(1)}=\ve{\tilde{C}}_{1:K,1:N_\mathrm{c}}$}
			\State{$\ve{C}_\mathrm{u}^{(2)}=\ve{\tilde{C}}_{K+1:2K,1:N_\mathrm{c}}$}
			\State{$\ve{C}_\mathrm{u}^{(3)}=\ve{\tilde{C}}_{2K+1:3K,1:N_\mathrm{c}}$}
			\State{$\ve{C}_\mathrm{u}^{(4)}=\ve{\tilde{C}}_{3K+1:4K,1:N_\mathrm{c}}$}
			\EndCase
			\EndSwitch
		\end{algorithmic}
	}
\end{algorithm}

In Algorithm~\ref{alg:fromZ}, the candidate integer vectors are reconverted from the
equivalent real-valued representation to the representation in the particular ring $\I$.
To this end, for the Gaussian and the Lipschitz integers, the components are simply
stacked in the matrix $\ve{C}$. For the Eisenstein and the Hurwitz integers, the
particular generator matrices according to~\eqref{eq_equiv_real_E}
and~\eqref{eq:equiv_real_hur}, respectively, have to be incorporated first.

\begin{algorithm}[tb]{
		\small
		\caption{\label{alg:re} Transformation to Row-Echelon Form.} %
		$\ve{i} = \textsc{RowEchelon}(\ve{C})$
		\begin{algorithmic}[1]
			\State {$k=1$, $l=1$, $\ve{i}=[0,\dots,0]$ }
			\While {$k \leq K$}
			\For{$m=k,\dots,K$}
			\If{$c_{m,l}\neq 0$}
			\State {$i_k = l$}
				{\color{\commentcolor}\Comment{independent vector found}}
			\State {$\tilde{\ve{c}}= \ve{C}_{k,1:N_\mathrm{c}}$}
				{\color{\commentcolor}\Comment{interchange rows}}
			\State {$\ve{C}_{k,1:N_\mathrm{c}}=\ve{C}_{m,1:N_\mathrm{c}}$}
			\State {$\ve{C}_{m,1:N_\mathrm{c}}=\tilde{\ve{c}}$}
			\State {$\ve{C}_{k,1:N_\mathrm{c}}=c_{k,l}^{-1} \cdot
				\ve{C}_{k,1:N_\mathrm{c}}$}
				{\color{\commentcolor}\Comment{normalize $k$\textsuperscript{th} row}}
			\For{$n=k+1,\dots,K$}
				{\color{\commentcolor}\Comment{eliminate successors}}
			\State {$\ve{C}_{n,1:N_\mathrm{c}}=
					\ve{C}_{n,1:N_\mathrm{c}} - c_{n,l}\ve{C}_{k,1:N_\mathrm{c}}$}
			\EndFor
			\State{$k = k + 1$}
			\State {\bf break}
			\EndIf
			\EndFor
			\State{$l = l + 1$}
			\EndWhile
		\end{algorithmic}
	}
\end{algorithm}

In Algorithm~\ref{alg:re}, the matrix of candidate vectors $\ve{C}$ is transformed
to row-echelon form. To this end, for each candidate with the index $l$, it is
checked if a new dimension is established. This is the case when one of the elements
$c_{k,l},\dots,c_{K,l}$ is not zero; then, the vector does not depend on the previous
ones. Given that case, the particular row with the non-zero element is interchanged
with the row $k$. After normalization\footnote{%
	Please note that linearly independent lattice vectors are required, i.e.,
	independent vectors have to be present over $\R$, $\C$, or $\Ha$. Hence,
	when calculating the row-echelon form for the integer vectors, non-integer
	elements may occur. Nevertheless, these non-integer elements are not relevant
	since only the ``steps'' within the row-echelon form are of interest.
	Alternatively, the calculation of the row-echelon form can directly be
	performed with the related lattice vectors.%
}
to $c_{k,l}=1$, all other elements $c_{k+1,l},\dots,c_{K,l}$ are set to zero by
subtracting $c_{n,l}$ times the row with index $k$. All multiplications are performed
in such a way that the skew-field property of quaternions is taken into account.

\section{Comparison of the Upper Bound on the First Vector Norm for Different LLL Variants} 
	\label{app:LLLbounds}

\noindent In this appendix, the upper bound on the squared length of the first basis
vector in case of LLL reduction as described in Sec.~\ref{subsec:lll_bounds} is compared
for different Euclidean integer rings.

\subsection{Derivation of Corollary~\ref{cor:norms}}

We start with the derivation of Corollary~\ref{cor:norms}. It contains a generalized
variant of the comparison between RLLL and CLLL algorithm that has been given
in~\cite{Gan:09}.

Considering the upper bound in~\eqref{eq:LLLbound_general}, we can first state that its
right-hand part, $\vol^{\frac{2}{K}}(\Lat(\ve{G}))$, is irrelevant for the comparison of
all considered LLL variants since
$\vol^{\frac{2}{2K}}(\Lat(\ve{G}_\mathsf{r}))=\vol^{\frac{2}{K}}(\Lat(\ve{G}))$ holds for
the $2N\times 2K$ real-valued representation of complex matrices and
$\vol^{\frac{2}{4K}}(\Lat(\ve{G}_\mathsf{r}))=
	\vol^{\frac{2}{2K}}(\Lat(\ve{G}_\mathsf{c}))=\vol^{\frac{2}{K}}(\Lat(\ve{G}))$
for the $4N\times4K$ real-valued and the $2N\times2K$ complex-valued representations of
quaternionic matrices.

Hence, given the first LLL variant with the maximum squared quantization error
$\epsilon_{1}^2$ that operates on a matrix of rank $K_1$, and the second variant with the
error $\epsilon_{2}^2$ operating over a matrix of rank $K_2$, the first one performs
better if%
\begin{equation}
	 \frac{(\delta - \epsilon_{2}^2)^{K_2-1}}{(\delta - \epsilon_{1}^2)^{K_1-1}} <1 
\end{equation}
since both terms within the braces are positive.

\subsection{Complex Matrices}

Based on Corollary~\ref{cor:norms}, the different LLL variants can be compared in an
asymptotic manner. We start with the comparison given complex generator matrices
$\ve{G}\in\C^{N \times K}$.

When comparing the ELLL ($\I=\E$) with the CLLL ($\I=\G$), we have $K_1=K_2$,
$\epsilon_{1}^2=\frac{1}{3}$, and $\epsilon_{2}^2=\frac{1}{2}$. Hence, we obtain%
\begin{equation}
		 \lim_{K_1 \to \infty}\frac{(\delta - \frac{1}{2})^{K_1 - 1}}
		 	{(\delta - \frac{1}{3})^{K_1 - 1}} < 1 \; ,
\end{equation}
i.e., the ELLL performs better independently from $\delta$.

Given the CLLL ($\I=\G$) and the RLLL ($\I=\Z$), $K_2=2K_1$,
$\epsilon_{1}^2=\frac{1}{2}$, and $\epsilon_{2}^2=\frac{1}{4}$.
Corollary~\ref{cor:norms} is given as%
\begin{equation}	\label{eq:clllvsrlll}
	\frac{(\delta - \frac{1}{4})^{2K_1}}{(\delta - \frac{1}{2})^{K_1}} \cdot
	\underbrace{\frac{(\delta - \frac{1}{2})}{(\delta - \frac{1}{4})}}_{<1}	\; .
\end{equation}
Hence, the CLLL bound is smaller if $(\delta-\frac{1}{4})^2< \delta - \frac{1}{2}$.
This condition is, however, never fulfilled for $\delta\in(\frac{1}{2},1]$, but at
least equality can be achieved if and only if $\delta=\frac{3}{4}$, see
also~\cite{Gan:09}.

In the same way, the comparison ELLL ($\I=\E$) vs.\ RLLL ($\I=\Z$), with $K_2=2K_1$,
$\epsilon_{1}^2=\frac{1}{3}$, and $\epsilon_{2}^2=\frac{1}{4}$ can be performed. By
analogy with \eqref{eq:clllvsrlll}, the condition
$(\delta-\frac{1}{4})^2< \delta - \frac{1}{3}$ is obtained. It is fulfilled if
$\frac{3}{4}-\frac{1}{\sqrt{6}}\approx0.34 < \delta \leq 1$.

\subsection{Quaternionic Matrices}

We continue with the comparison for the case when quaternionic matrices
$\ve{G}\in\Ha^{N \times K}$ are present.

For the comparison QLLL ($\I=\Hu$) vs.\ CLLL ($\I=\G$), the parameters $K_2=2K_1$
and $\epsilon_{1}^2=\epsilon_{2}^2=\frac{1}{2}$ are given, leading to%
\begin{equation}
	\lim_{K_1 \to \infty}\frac{(\delta - \frac{1}{2})^{2K_1 - 1}}
		{(\delta - \frac{1}{2})^{K_1 - 1}} < 1 \; .
\end{equation}
Hence, independent from $\delta$, the QLLL gives a better result.

Comparing the QLLL ($\I=\Hu$) with the ELLL ($\I=\E$), we have $K_2=2K_1$,
$\epsilon_{1}^2=\frac{1}{2}$, and $\epsilon_{2}^2=\frac{1}{3}$. Hence, similar
to~\eqref{eq:clllvsrlll}, we obtain the condition
$(\delta-\frac{1}{3})^2 < \delta - \frac{1}{2}$, which is fulfilled if
$\frac{5}{6}- \frac{1}{2\sqrt{3}}\approx0.54 < \delta \leq 1$.

Finally, for the comparison QLLL ($\I=\Hu$) vs.\ RLLL ($\I=\Z$), the parameters
$K_2=4K_1$, $\epsilon_{1}^2=\frac{1}{2}$, and $\epsilon_{2}^2=\frac{1}{4}$ are
present. Similar to above, we obtain the condition
$(\delta-\frac{1}{4})^4 < \delta - \frac{1}{2}$ that is fulfilled if
$0.504 < \delta \leq 1$.

\section{Comparison of the Upper Bound on the Orthogonality Defect for Different LLL~Variants}
	\label{app:LLLpbounds}

\noindent In this appendix, the upper bound on the orthogonality defect of the basis obtained
from LLL reduction (Sec.~\ref{subsec:lll_boundsP}) is asymptotically compared for different
Euclidean integer rings. To this end, the inequality condition from Corollary~\ref{cor:prod}%
\begin{equation}
	\frac{(\delta - \epsilon_{2}^2)^{{K_2(K_2-1)}{}}}
		{({\delta - \epsilon_{1}^2)^{{K_1(K_1-1)}{}}}} < 1
\end{equation}
is analyzed that follows directly from~\eqref{eq:LLLPbound_general} since both terms within
the braces are positive. Noteworthy, the same condition also holds for the bound on the product
of the basis vectors in~\eqref{eq:LLLPbound_general} in which the volume of the lattice is
additionally incorporated, see also the discussion in Appendix~\ref{app:LLLbounds}.

\subsection{Complex Matrices}

First, the performance of different LLL variants is considered for the case when complex
generator matrices $\ve{G}\in\C^{N \times K}$ are given.

When comparing the ELLL ($\I=\E$) with the CLLL ($\I=\G$), with $K_1=K_2$,
$\epsilon_{1}^2=\frac{1}{3}$, and $\epsilon_{2}^2=\frac{1}{2}$, we directly see that%
\begin{equation}
	\lim_{K_1 \to \infty}\frac{(\delta - \frac{1}{2})^{K_1(K_1 - 1)}}
		{(\delta - \frac{1}{3})^{K_1(K_1 - 1)}} < 1 \; ,
\end{equation}
i.e., the ELLL performs better independently from $\delta$.

Given the CLLL ($\I=\G$) and the RLLL ($\I=\Z$), $K_2=2K_1$, $\epsilon_{1}^2=\frac{1}{2}$,
and $\epsilon_{2}^2=\frac{1}{4}$. Then, the term reads%
\begin{equation}	\label{eq:clllvsrlllO}
	\frac{(\delta - \frac{1}{4})^{4K_1^2}}{(\delta - \frac{1}{2})^{K_1^2}} \cdot
		\underbrace{\frac{(\delta - \frac{1}{2})^{K_1}}
		{(\delta - \frac{1}{4})^{2K_1}}}_{\leq 1}	\; ,
\end{equation}
where the right part is less than or equal to one, cf.~\eqref{eq:clllvsrlll}. The CLLL bound
is lower if $(\delta-\frac{1}{4})^4< \delta - \frac{1}{2}$. This condition is fulfilled for
$0.504 < \delta \leq 1$.

The comparison ELLL ($\I=\E$) vs.\ RLLL ($\I=\Z$), with $K_2=2K_1$,
$\epsilon_{1}^2=\frac{1}{3}$, and $\epsilon_{2}^2=\frac{1}{4}$ can be performed similar
to~\eqref{eq:clllvsrlllO}. Then, for the left-hand term,
$(\delta - \frac{1}{4})^4 < \delta - \frac{1}{3}$ has to hold. However, the right-hand term is
not necessarily less than one, i.e., the product of both terms will be less than one if the
left one has this property and if
${(\delta-\frac{1}{4})^4}/({{\delta-\frac{1}{3}}}) <
	{(\delta-\frac{1}{4})}^2/({{\delta-\frac{1}{3}}})$.
The latter condition is always true since $\delta - \frac{1}{4} < 1$. The left-hand one holds
when $0.3334 < \delta \leq 1$.

\subsection{Quaternionic Matrices}

Finally, the bounds are compared for the case when quaternionic matrices
$\ve{G}\in\Ha^{N \times K}$ are present.

Comparing the QLLL ($\I=\Hu$) with the CLLL ($\I=\G$), the parameters $K_2=2K_1$
and $\epsilon_{1}^2=\epsilon_{2}^2=\frac{1}{2}$ result in%
\begin{equation}
	\frac{(\delta - \frac{1}{2})^{4K_1^2}}
		{(\delta - \frac{1}{2})^{K_1^2}} \cdot 
		\frac{(\delta - \frac{1}{2})^{K_1}}
		{(\delta - \frac{1}{2})^{2K_1}} \; .
\end{equation}
The QLLL generally results in a better bound: similar to above, the left-hand term is less
than one; the same holds for the product of both terms due to
$(\delta - \frac{1}{2})^3 < (\delta - \frac{1}{2})$ as $\delta - \frac{1}{2} < 1$.

The comparison QLLL ($\I=\Hu$) vs.\ ELLL ($\I=\E$) is done with $K_2=2K_1$,
$\epsilon_{1}^2=\frac{1}{2}$, and $\epsilon_{2}^2=\frac{1}{3}$.
Here, the conditions $(\delta - \frac{1}{3})^4 < \delta - \frac{1}{2}$ and
$(\delta - \frac{1}{3})^4 <  (\delta - \frac{1}{3})^2$ are obtained. The latter is true
and the former is fulfilled if $0.5008 < \delta \leq 1$.

For the comparison QLLL ($\I=\Hu$) vs.\ RLLL ($\I=\Z$) with $K_2=4K_1$,
$\epsilon_{1}^2=\frac{1}{2}$, and $\epsilon_{2}^2=\frac{1}{4}$, the conditions read
$(\delta - \frac{1}{4})^{16} < \delta - \frac{1}{2}$ and
$(\delta-\frac{1}{4})^{16} < (\delta - \frac{1}{4})^4$. Since the latter is again fulfilled,
see above, the former is relevant. It holds if $0.5000 < \delta \leq 1$.

\section*{Acknowledgment}
\noindent
The authors would like to thank the associate editor and the
anonymous reviewers for their valuable comments and suggestions
that helped a lot to improve the quality of this~work.
\ifCLASSOPTIONcaptionsoff
  \newpage
\fi
\bibliographystyle{IEEEtran}
\bibliography{IEEEabrv,ABCQL}
\vfill
\vspace*{-5.4mm}
\begin{IEEEbiographynophoto}{Sebastian~Stern}
(Member, IEEE) received the B.Sc.\ and M.Sc.\ degrees in Communications and Computer
Engineering from Ulm University, Ulm, Germany, in 2010 and 2012, respectively, and the
Dr.-Ing.\ (Doctor of Engineering) degree in 2019. He is currently working as a Senior
Researcher at the Institute of Communications Engineering, Ulm University, where he
received the teaching assignment for a lecture on multi-user communications and
multiple-input/multiple-output (MIMO) systems. His main research interests are in the
area of communication theory and especially concern lattice-based approaches and
algorithms for MIMO communications and related fields, coded-modulation techniques,
and massive MIMO schemes.
\end{IEEEbiographynophoto}
\vspace*{-5.4mm}
\begin{IEEEbiographynophoto}{Cong~Ling}
(Member, IEEE) received the B.Sc.\ and M.Sc.\ degrees in Electrical Engineering from the
Nanjing Institute of Communications Engineering, Nanjing, China, in 1995 and 1997,
respectively, and the Ph.D.\ degree in electrical engineering from Nanyang Technological
University, Singapore, in 2005. He is currently a Reader (Associate Professor) with the
Electrical and Electronic Engineering Department, Imperial College London. His research
interests include coding, signal processing, and security, with a focus on lattices.
Before joining Imperial College, he had been on the faculties of the Nanjing Institute
of Communications Engineering and King’s College. He served as an Associate Editor for
\textsc{IEEE Transactions on Communications} and
\textsc{IEEE Transactions on Vehicular Technology}.
\end{IEEEbiographynophoto}
\vspace*{-5.4mm}
\begin{IEEEbiographynophoto}{Robert~F.H.~Fischer}
(Senior Member, IEEE) received the Dr.-Ing.\ and Habilitation degrees from the
University of Erlangen-Nuremberg, Erlangen, Germany, in 1996 and 2001, respectively.

From 1992 to 1996 he was a Research Assistant with the Telecommunications Institute,
University of Erlangen-Nuremberg. In 1997, he was with the IBM Research Laboratory,
Z\"urich, Switzerland. In 1998, he returned to the University of Erlangen-Nuremberg.
In 2005, he spent a sabbatical with ETH Z\"urich. Since 2011 he has been a Full
Professor at Ulm University, Ulm, Germany. He is currently teaching the undergraduate
and graduate courses on signals and systems and digital communications. He authored
the textbook \emph{Precoding and Signal Shaping for Digital Transmission}
(John Wiley~\& Sons, 2002). His current research interests include fast, reliable, and
secure digital transmission, including single-carrier and multi-carrier modulation
techniques, information theory, coded modulation, digital communications, signal
processing, and especially precoding and shaping techniques.

Dr.\ Fischer was a recipient of the Dissertation Award from the Technical Faculty,
University of Erlangen-Nuremberg, in 1997, the Publication Award of the German Society
of Information Techniques in 2000, the Wolfgang Finkelnburg Habilitation Award in 2002,
and the Johann-Philipp-Reis-Preis in 2005.
\end{IEEEbiographynophoto}

\end{document}